\providecommand{\U}[1]{\protect\rule{.1in}{.1in}}
\newtheorem{theorem}{Theorem}
\newtheorem{algorithm}[theorem]{Algorithm}
\newtheorem{assumption}[theorem]{Assumption}
\newtheorem{corollary}[theorem]{Corollary}
\newtheorem{definition}[theorem]{Definition}
\newtheorem{lemma}[theorem]{Lemma}
\newtheorem{proposition}[theorem]{Proposition}
\newtheorem{remark}[theorem]{Remark}
\newenvironment{proof}[1][Proof]{\noindent\textbf{#1.} }{\ \rule{0.5em}{0.5em}}
\begin{document}

\title{The consensus problem for opinion dynamics with local average random interactions}
\author{M. Gianfelice, G. Scola\\Dipartimento di Matematica e Informatica\\Universit\`{a} della Calabria\\Campus di Arcavacata\\Ponte P. Bucci - cubo 30B\\I-87036 Arcavacata di Rende\\gianfelice@mat.unical.it\\giuseppe.scola@unical.it}
\maketitle

\begin{abstract}
We study the consensus formation for an agents based model, generalizing that
originally proposed by Krause \cite{Kr}, by allowing the communication
channels between any couple of agents to be switched on or off randomly, at
each time step, with a probability law depending on the proximity of the
agents' opinions. Namely, we consider a system of agents sharing their
opinions according to the following updating protocol. At time $t+1$ the
opinion $X_{i}\left(  t+1\right)  \in\left[  0,1\right]  $ of any agent $i$ is
updated at the weighted average of the opinions of the agents communicating
with it at time $t.$ The weights model the confidence level an agent assigns
to the opinions of the other agents and are kept fixed by the system dynamics,
but the set of agents communicating with any agent $i$ at time $t+1$ is
randomly updated in such a way that the agent $j$ can be chosen to belong to
this set independently of the other agents with a probability that is a non
increasing function of $\left\vert X_{i}\left(  t\right)  -X_{j}\left(
t\right)  \right\vert .$ This condition models the fact that a communication
among the agents is more likely to happen if their opinions are close. We
prove that if the agent's communication graph at time one, conditionally on 
the initial believes' configuration, is sufficiently connected,
the system reaches consensus at geometric rate, i.e., more precisely, as the time tends to infinity
the agents' opinions will reach the same value geometrically fast. We also
discuss the consensus formation for a system of infinitely many agents. In
particular we analyze the evolution of the empirical average of the
agents' opinions in the limit as the size of the system tends to infinity
and characterize its fixed points in terms of agents' consensus proving that
this is reached geometrically fast with the same rate computed for the finite
system. 

\end{abstract}
\tableofcontents

\bigskip

\bigskip

\begin{description}
\item[AMS\ subject classification:] {\small 91D30,60J20. }

\item[Keywords and phrases:] {\small Opinion dynamics, agent-based models,
multi-agents systems, consensus, emergent behaviour, synchronization, complex
networks, non-linear Markov processes, mean-field limit of randomly perturbed
coupled dynamical systems.}

\item[Acknowledgement:] {\small M. Gianfelice is partially supported by
G.N.A.M.P.A..\newline G. Scola acknowledges financial support from MIUR, PRIN
2017 project MaQuMA, PRIN201719VMAST01.}
\end{description}

\bigskip

\section{Introduction, notations and results}

Opinion dynamics is a topic in applied mathematics which has witnessed a
growing interest in the last decades. This is due to the possibility to
describe the emergence of collective phenomena such as the reaching of
consensus in a community of peers by means of simple models of interacting
agents \cite{CFL}. The literature on the subject concentrates mainly on two
families of models: those cast in the framework of interacting particle
systems (IPS) e.g. the voter model and the majority-vote process \cite{Li},
the Axelrod model and its generalizations \cite{La}, the Deffuant-Weisbuch
model \cite{DNAW}, \cite{Ha}, \cite{HH}, and those belonging to the family of
coupled dynamical systems (CDS) e.g. \cite{Kr} (see \cite{FF} for linear models).

The common feature of these models is that the interactions among the agents
are designed in such a way that an agent adjust its opinion to that of its
neighbours and that agents are more likely to interact with those sharing
similar opinions. On the other hand, the main difference between IPS type
models and CDS type models, aside form the fact that IPS type models typically
evolve in continuous time while CDS type models evolve in discrete time, is
that in CDS type models all the agents can change the value of their opinion
at each time step while, in IPS type models, at a given time (usually at the
tick of a Poisson clock), only the elements of a randomly chosen subset of
agents are allowed to change the value of their opinion, and the rule under
which the opinion of the selected agents are updated, being stochastic in
general (see \cite{CF} for a stochastic version of the Deffuant model), can be
deterministic too \cite{DNAW}, \cite{HH}.

A renewed interest in the consensus problem for systems of interacting agents
has recently arisen from the architecture of Transformers, a particular class
of Deep Neural Networks (DNN) used in the construction of so-called Large
Language Models. Unlike traditional DNN models that operate on a single input
at a time, represented by a vector in $\mathbb{R}^{d},$ Transformers process a
very large number of inputs simultaneously. In this setup, the value of each
individual data point, called \emph{token}, is updated within each layer of
the network as a function of the values of the others. The equations governing
the flow of data evolving within a Transformer are precisely those that
describe a consensus model with local average interactions such as e.g. that
of Krause \cite{Kr}.\ Therefore, in the limit as the number of token becoming
very large, the characteristic features of the evolution of the data within
the Transformer is best captured by the evolution of the empirical
distribution of their values rather than by the evolution of the single data
just like it is prescribed by kinetic theory. We refer the reader to
\cite{WAWJJ}, \cite{GLPR} and \cite{CACP} for a more detailed account on this topic.

In this paper we present a model of consensus formation which, although it
represents a modification of that originally proposed by Krause, it is defined
by an updating rule of the agents opinions that recalls those characterizing
IPS type models. In particular, at each time step, firstly the set of
neighboring peers of any agent $i$ is selected at random in such a way that
the events that any two distinct agents belong to the neighbourhood of $i$ are
independent. Then, each agent update the value of its opinion to the average
value of the opinions of its neighbours.

More precisely, we consider a collection of agents which form the set of
vertices of a directed graph $G:=\left(  V,E\right)  ,E\subseteq V\times V.$
We assume that agent $u$ communicate with agent $v$ if the directed edge
$\left(  u,v\right)  $ is in $E.$ Each agent hold an opinion (\emph{belief})
represented by a variable taking values in $\left[  0,1\right]  .$ Agent's
beliefs evolve in time in such a way that the opinion $X_{u}\left(
t+1\right)  $ of the agent labelled by the graph vertex $u$ at time $t+1$ is
updated at the weighted average of the beliefs $X_{v}\left(  t\right)  $ of
the agents communicating with $u$ at time $t.$ The weights appearing in the
just mentioned average represent the quality of the information exchange among
the agents and do not change in time. On the other hand, the set of the agents
communicating with agent $u$ at time $t$ is randomly chosen according to a
probability distribution which gives more chance to a communication exchange
between agent $u$ and agent $v$ to happen if the values of their opinions at time $t-1$
were close. It is also natural to assume that the information exchange an agent has
with itself is always maximal.

In the rest of the section we introduce the notation used throughout the
paper, formally display the definition of the model and present the results
obtained in the following sections about the emergence of consensus for the
finite size system as well as the extension of these results when the size of
the system is very large.

In this last case, we show that if the size of the neighborhood of the agents
is finite it is possible to define directly the evolution of the system on
very large, possibly infinite, graphs as in \cite{La} and \cite{HH}.

Furthermore we focus on the evolution of the empirical distribution of the
agents' opinions, in the spirit of the kinetic limit for models of flocking
(see e.g. \cite{GO}, \cite{CCH}), as it has already been discussed for the
stochastic version of Deffuant model in \cite{CF} and more recently for other
IPS models \cite{AM}. We prove that, in the limit of the size of the system
that tends to infinity, despite the randomness of the agents evolution, that
of the empirical average of their opinions converges to an evolution defined
by a self-consistent transfer operator acting on the space of probability
measures on $\left(  \left[  0,1\right]  ^{2},\mathcal{B}\left(  \left[
0,1\right]  ^{2}\right)  \right)  $ endowed with the weak topology.

We stress that, as in \cite{Kr}, in this work, the dynamics of the state of
the edges of $G,$ representing the communication channels amid the agents, is
synchronous, i.e. they are all updated at each time step. On the other hand,
one may wish to modify the system's dynamics by letting the state of the edges
of $G$ to evolve under an asynchronous dynamics. In the case of a finite size
system, this can be realized, for example, by updating at each time step the
state of just one edge sampled uniformly at random among the elements of $E$
and leaving unchanged the states of the other edges. In fact, it seems that
the techniques used to analyse the emergence of consensus in the synchronous
case do not apply to this particular case. Therefore, the discussion about the
possibility to reach consensus for the opinion exchange model characterized by
the same updating rule for the values of the agents' opinions presented in
this paper, but subject to the asynchronous evolution of the communication
exchange among the agents just described, are deferred to a forthcoming paper.

\subsection{Notations}

If $A$ is a set and $B\subseteq A,\mathbf{1}_{B}$ denotes the indicator
function of $A$ and $B^{c}:=A\backslash B.$ Let $\mathcal{P}\left(  A\right)
$ the set of the subsets of $A.$ For any $k\geq1,$ we set $\mathcal{P}%
_{k}\left(  A\right)  :=\left\{  B\in\mathcal{P}\left(  A\right)  :\left\vert
B\right\vert =k\right\}  $ and denote by $\mathcal{P}_{0}\left(  A\right)
:=\bigvee\limits_{k\geq1}\mathcal{P}_{k}\left(  A\right)  $ the set of finite
subsets of $A.$

If $A$ is a metric space, $\mathcal{B}\left(  A\right)  $ denotes its Borel
$\sigma$algebra. We denote by $BM\left(  A\right)  $ the Banach space of
bounded real-valued measurable functions on $A,$ by $C\left(  A\right)  $ the
Banach space of real-valued continuous functions on $A,$ so that, if
$\varphi\in BM\left(  A\right)  ,\mathsf{supp}\varphi$ denotes the support of
$\varphi$ and $\left\Vert \varphi\right\Vert $ the sup-norm. Moreover, if
$Lip\left(  A\right)  $ is the Banach space of real-valued bounded Lipschitz
functions on $A,$ for any $\varphi\in Lip\left(  A\right)  ,$ we denote its
norm by $\left\Vert \varphi\right\Vert _{Lip}.$

For $A,A^{\prime}$ metric spaces, $BL\left(  A,A^{\prime}\right)  $ denotes
the space of bounded linear operators on $A$ with values in $A^{\prime}.$ If
$A^{\prime}=A$ we set $BL\left(  A,A\right)  :=BL\left(  A\right)  .$ In
particular, if $\mathbb{V}$ is a finite set, we denote by $St\left(
\mathbb{V}\right)  $ the convex subset of $BL\left(  \mathbb{R}^{\mathbb{V}%
}\right)  $ of stochastic matrices.

We denote by $\mathbb{E}$ the expected value of a random element when there is
no need to specify the probability space on which it is defined and
consequently write $\mathbb{P}\left\{  B\right\}  $ for the expected value of
the indicator function $\mathbf{1}_{B}$ of an event $B\subseteq A.$ The same
notation will be also kept when considering conditional expectations and
conditional probabilities. Besides, given a $\sigma$algebra $\mathcal{A}$ of
subsets of $A,$ we denote by $\mathfrak{P}\left(  A,\mathcal{A}\right)  $ the
set of probability measures on $\left(  A,\mathcal{A}\right)  .$ If $\mu
\in\mathfrak{P}\left(  A,\mathcal{A}\right)  ,\mathsf{supp}\mu$ denotes the
support of $\mu,$ and, for $\mu,\nu\in\mathfrak{P}\left(  A,\mathcal{A}%
\right)  ,\left\Vert \mu-\nu\right\Vert $ denotes the total variation distance
between the two measures.

Let $\mathbb{A}:=A^{\mathbb{V}}$ and denote by $\mathbf{a}:=\left\{
a_{v}\right\}  _{v\in\mathbb{V}}.$ If $\mathbb{A}$ is a poset w.r.t. the
partial order: $\mathbf{a}\leq\mathbf{a}^{\prime}$ if $a_{v}\leq a_{v}%
^{\prime},$ for any $v\in\mathbb{V},$ we say that a real-valued function
$\varphi$ on $\mathbb{A}$ is \emph{non-decreasing} if $\varphi\left(
\mathbf{a}\right)  \leq\varphi\left(  \mathbf{a}^{\prime}\right)  $ whenever
$\mathbf{a}\leq\mathbf{a}^{\prime}.$ Given two probability measures
$\mathbb{P},\mathbb{P}^{\prime}$ on $\left(  \mathbb{A},\mathcal{A}%
^{\otimes\mathbb{V}}\right)  $ we say that $\mathbb{P}$\ is stochastically
dominated by $\mathbb{P}^{\prime},$ and denote this property by $\mathbb{P}%
\overset{st}{\leq}\mathbb{P}^{\prime},$ if for any bounded non-decreasing
function $\varphi,\mathbb{E}\left[  \varphi\right]  =\int d\mathbb{P}\left(
a\right)  \varphi\left(  a\right)  \leq\int d\mathbb{P}^{\prime}\left(
a\right)  \varphi\left(  a\right)  =\mathbb{E}^{\prime}\left[  \varphi\right]
.$ Moreover, if $\mathbb{A}$ is finite, a probability measure $\mathbb{P}$ on
$\left(  \mathbb{A},\mathcal{P}\left(  \mathbb{A}\right)  \right)  $ is called
\emph{irreducible} if starting from any element of $\mathbb{A}$ with positive
$\mathbb{P}$-probability one can reach any other element with positive
$\mathbb{P}$-probability via successive coordinate changes without passing
through elements with zero $\mathbb{P}$-probability \cite{GHM}, \cite{Gr}.

If $\mathbb{A}:=A^{\mathbb{N}},$ for any $N\in\mathbb{N},$ we set
$\mathbf{a}_{N}:=\left(  a_{i},..,a_{N}\right)  $ and denote by $\mathcal{C}%
\left(  \mathbb{A}\right)  $ the \emph{cylinder }$\sigma$\emph{algebra} that
is the $\sigma$algebra generated by the cylinder subsets
\begin{equation}
\mathbf{C}_{N}\left(  B\right)  :=\left\{  \mathbf{a}\in\mathbb{A}%
:\mathbf{a}_{N}\in B\right\}  \ ,
\end{equation}
with $B\subseteq A^{N}$ if $A$ is a discrete set, while $B\in\mathcal{B}%
\left(  A^{N}\right)  =\mathcal{B}\left(  A\right)  ^{\otimes N}$ if $A$ is a
metric space.

If $\mathfrak{L}\left(  \mathbb{A}\right)  $ denotes the algebra of
real-valued bounded local (cylinder) functions on $\mathbb{A}$ we denote by
$\mathfrak{\bar{L}}\left(  \mathbb{A}\right)  $ the space of real-valued
bounded quasilocal functions on $\mathbb{A}$ that is the closure in the
topology of uniform convergence of the algebra of cylinder functions \cite{Ge}.

If $\mathbb{V}$ is denumerable we denote by $\mathcal{E}_{\mathbb{A}}$ the
product $\sigma$algebra $\mathcal{A}^{\otimes\mathbb{V}}$ on $\mathbb{A}%
:=A^{\mathbb{V}}.$

\subsubsection{Graphs}

We recall some basic definition of graph theory useful to give a mathematical
definition of consensus for the system. The connection of graph theory with
Markov chains will be exploited in the next section. We refer the reader to
basic textbooks such as \cite{Bo} and \cite{St} for an account on this subject.

A directed graph $G$ is a ordered pair of sets $\left(  V,E\right)  $ where
$V$ is a finite set called \emph{set of vertices} and $E\subseteq V\times V$
is called \emph{set of edges }or\emph{\ bonds.} $G^{\prime}=\left(  V^{\prime
},E^{\prime}\right)  $ such that $V^{\prime}\subseteq V$ and $E^{\prime
}\subseteq\left(  V^{\prime}\times V^{\prime}\right)  \cap E$ is said to be a
\emph{subgraph} of $G$ and this property is denoted by $G^{\prime}\subseteq
G.$ If $G^{\prime}\subseteq G,$ we denote by $V\left(  G^{\prime}\right)  $
and $E\left(  G^{\prime}\right)  $ respectively the set of vertices and the
collection of the edges of $G^{\prime}.\ \left\vert V\left(  G^{\prime
}\right)  \right\vert $ is called the \emph{order} of $G^{\prime}$ while
$\left\vert E\left(  G^{\prime}\right)  \right\vert $ is called its
\emph{size}. Given $G_{1},G_{2}\subseteq G,$ we denote by $G_{1}\cup
G_{2}:=\left(  V\left(  G_{1}\right)  \cup V\left(  G_{2}\right)  ,E\left(
G_{1}\right)  \cup E\left(  G_{2}\right)  \right)  \subset G$ the \emph{graph
union }of $G_{1}$ and $G_{2}.$ Moreover, we say that $G_{1},G_{2}\subseteq G$
are \emph{disjoint} if $V\left(  G_{1}\right)  \cap V\left(  G_{2}\right)
=\varnothing.$ For any $E^{\prime}\subseteq E,$ we denote by $G\left(
E^{\prime}\right)  :=\left(  V,E^{\prime}\right)  $ the \emph{spanning} graph
of $E^{\prime}.$ We also define $V\left(  e\right)  $ the subset $\left\{
v,v^{\prime}\right\}  $ of $V$ such that $e$ is either equal to $\left(
v,v^{\prime}\right)  $ or to $\left(  v^{\prime},v\right)  $ and consequently
\begin{equation}
V\left(  E^{\prime}\right)  :=\left(  \bigcup_{e\in E^{\prime}}e\right)
\subset V\ .
\end{equation}
Given $V^{\prime}\subseteq V,$ we set
\begin{equation}
E\left(  V^{\prime}\right)  :=\left\{  e\in E:e\subset V^{\prime}\right\}
\end{equation}
and denote by $G\left[  V^{\prime}\right]  :=\left(  V^{\prime},E\left(
V^{\prime}\right)  \right)  $ that is called the subgraph of $G$
\emph{induced} or \emph{spanned} by $V^{\prime}.$

Two vertices $u,v$ are said to be \emph{adjacents} if belong to the same bond
i.e. if $V\left(  e\right)  =\left\{  u,v\right\}  .$ If $e=\left(
u,v\right)  ,$ $e$ is said to be \emph{outgoing} from $u$ and \emph{ingoing}
in $v.$ Let
\begin{equation}
E_{v}^{-}:=\left\{  e\in E:e=\left(  u,v\right)  ,\ u\in V\right\}
\ ,\ E_{v}^{+}:=\left\{  e\in E:e=\left(  v,u\right)  ,\ u\in V\right\}
\end{equation}
be the set of edges respectively ingoing in $v,$ outgoing from $v.$ We denote by
$\mathcal{N}^{-}\left(  v\right)  :=\left(  \cup_{e\in E_{v}^{-}}V\left(
e\right)  \right)  \subseteq V$ the \emph{closed ingoing neighborhood} of $v$
and by $\mathcal{N}^{+}\left(  v\right)  :=\left(  \cup_{e\in E_{v}^{+}%
}V\left(  e\right)  \right)  \subseteq V$ the \emph{closed outgoing
neighborhood} of $v.$ Moreover, for any $W\subset V,$ we set $\mathcal{N}%
^{+}\left(  W\right)  :=\cup_{v\in W}\mathcal{N}^{+}\left(  v\right)  $ to be
the closed outgoing neighborhood of $W.$ Given $v\in V,$ we set $\mathcal{N}%
_{1}^{+}\left(  v\right)  :=\mathcal{N}^{+}\left(  v\right)  $ and, for
$k\geq2,\ \mathcal{N}_{k}^{+}\left(  v\right)  :=\mathcal{N}^{+}\left(
\mathcal{N}_{k-1}^{+}\left(  v\right)  \right)  $ to be the \emph{outgoing
}$k$\emph{-neighborhood} of $v.$ Given two vertices $u$ and $v,$ $v$ is said
to \emph{communicate} with $u$ if there exists $k\geq1$ such that
$u\in\mathcal{N}_{k}^{+}\left(  v\right)  .$ Therefore, $u,v\in V$ are said to
be \emph{connected} if one communicates with the other. Indeed, since if
$u\in\mathcal{N}_{k}^{+}\left(  v\right)  $ for some $k\geq1,$ then
$u\in\mathcal{N}_{l}^{+}\left(  v\right)  ,\ \forall l>k,$ and for $u$ and $v$ to
be connected there must be $k_{1},k_{2}\geq1$ such that $u\in\mathcal{N}%
_{k_{1}}^{+}\left(  v\right)  $ and $v\in\mathcal{N}_{k_{2}}^{+}\left(
u\right)  ,$ that is $u\in\mathcal{N}_{k_{1}\vee k_{2}}^{+}\left(  v\right)
,v\in\mathcal{N}_{k_{1}\vee k_{2}}^{+}\left(  u\right)  .\ G$ is then said to
be \emph{strongly connected} if any two distinct vertices are connected. The
maximal connected subgraphs of $G$ are called \emph{components} of $G$ and to
denote that $G^{\prime}\subset G$ is a component of $G$ we write $G^{\prime
}\sqsubset G.$

An example of directed graph is the one which can be associated to a Markov
chain. In this case, $V$ coincides with the set of states of the chain and,
denoting by $P$ the transition matrix associated to the chain, $E=E\left(
P\right)  :=\left\{  \left(  u,v\right)  \in V\times V:P_{u,v}>0\right\}  .$
Then, the directed graph associated to the Markov chain with transition matrix
$P$ is denoted by $G\left(  P\right)  .$ Hence, the Markov chain and therefore
$P$ are said to be \emph{irreducible} if and only if $G\left(  P\right)  $ is
strongly connected.

In the following, if $e=\left(  u,v\right)  $ is an edge of a directed graph
$\left(  V,E\right)  ,$ we will occasionally note $\bar{e}$ for the edge
$\left(  v,u\right)  \in E.$

\subsection{Description of the model and results}

In the following, unless differently specified, we will be concerned only with
graphs $G$ being subgraphs of the complete directed graph $\mathbf{G}=\left(
\mathbf{V},\mathbf{E}\right)  $ of finite order where $\mathbf{E}:=\left(
\mathbf{V}\times\mathbf{V}\right)  .$

A bond (or edge) configuration is a map $\mathbf{E}\ni e\longmapsto\omega
_{e}\in\left\{  0,1\right\}  $ so that a bond $e$ is said to be \emph{open} if
$\omega_{e}=1.$ Setting $\forall u,v\in\mathbf{V},\omega_{u,v}:=\omega
_{e}\delta_{e,\left(  u,v\right)  }$ and defining
\begin{equation}
\Omega:=\left\{  \omega\in\left\{  0,1\right\}  ^{\mathbf{E}}:\forall
u\in\mathbf{V}\ ,\ \omega_{u,u}=1\right\}  \ , \label{Omega}%
\end{equation}
we define
\begin{equation}
\Omega\ni\omega\longmapsto E\left(  \omega\right)  :=\left\{  e\in
\mathbf{E}:\omega_{e}=1\right\}  \in\mathcal{P}\left(  \mathbf{E}\right)
\label{Eomega}%
\end{equation}
and consequently
\begin{equation}
G\left(  \omega\right)  :=G\left(  E\left(  \omega\right)  \right)
\subseteq\mathbf{G}\ . \label{G}%
\end{equation}
We also set, $\forall v\in\mathbf{V},$%
\begin{align}
E_{v}^{-}\left(  \omega\right)   &  :=\left\{  e\in E\left(  \omega\right)
:e=\left(  u,v\right)  ,\ u\in\mathbf{V}\right\}  \ ,\label{defE-vomega}\\
E_{v}^{+}\left(  \omega\right)   &  :=\left\{  e\in E\left(  t\right)
:e=\left(  v,u\right)  ,\ u\in\mathbf{V}\right\}  \ ,\\
\mathcal{N}^{\pm}\left(  v,\omega\right)   &  :=\left(  \cup_{e\in E_{v}^{\pm
}\left(  \omega\right)  }V\left(  e\right)  \right)  \ ,\\
\ \mathcal{N}_{1}^{+}\left(  v,\omega\right)   &  :=\mathcal{N}^{+}\left(
v,\omega\right)  \ ;\ \mathcal{N}_{k}^{+}\left(  v,\omega\right)
:=\mathcal{N}^{+}\left(  \mathcal{N}_{k-1}^{+}\left(  v,\omega\right)
\right)  \ ,\;k\geq2\ .
\end{align}
Moreover, given a $\Omega$-valued sequence $\left\{  \omega\left(
t\right)  \right\}  _{t\geq0}$ we set $E\left(  t\right)  :=E\left(
\omega\left(  t\right)  \right)  ,G\left(  t\right)  :=G\left(  \omega\left(
t\right)  \right)  $ as well as, $\forall v\in\mathbf{V},E_{v}^{\pm}\left(
t\right)  :=E_{v}^{\pm}\left(  \omega\left(  t\right)  \right)  $ and $\forall
k\geq1,\mathcal{N}_{k}^{\pm}\left(  v,t\right)  :=\mathcal{N}_{k}^{\pm}\left(
v,\omega\left(  t\right)  \right)  .$

A \emph{belief configuration} is a map $\mathbf{V}\ni v\longmapsto X_{v}%
\in\left[  0,1\right]  .$ We set $\Xi:=\left[  0,1\right]  ^{\mathbf{V}}$ and
consider the sequence $\left\{  X\left(  t\right)  \right\}  _{t\geq0}$
representing the beliefs evolution in time.

\subsubsection{Beliefs dynamics}

The beliefs evolution is given by the system of equations
\begin{equation}
\left\{
\begin{array}
[c]{l}%
\begin{array}
[c]{ll}%
X_{v}\left(  t+1\right)  := & \frac{\sum_{u\in\mathcal{N}^{-}\left(
v,t\right)  }r_{u,v}X_{u}\left(  t\right)  }{\sum_{u\in\mathcal{N}^{-}\left(
v,t\right)  }r_{u,v}}=\frac{\sum_{u\in\mathbf{V}}r_{u,v}\omega_{u,v}\left(
t\right)  X_{u}\left(  t\right)  }{\sum_{u\in\mathbf{V}}r_{u,v}\omega
_{u,v}\left(  t\right)  }\\
& =X_{v}\left(  t\right)  +\frac{\sum_{u\in\mathbf{V}}r_{u,v}\omega
_{u,v}\left(  t\right)  \left[  X_{u}\left(  t\right)  -X_{v}\left(  t\right)
\right]  }{\sum_{u\in\mathbf{V}}r_{u,v}\omega_{u,v}\left(  t\right)  }%
\end{array}
\\
X_{v}\left(  0\right)  =X_{v}^{0}%
\end{array}
\right.  \;,\;v\in\mathbf{V}\ ,\ t\geq0\ , \label{evX}%
\end{equation}
which, by (\ref{defE-vomega}), can be rewritten as
\begin{equation}
\left\{
\begin{array}
[c]{l}%
\begin{array}
[c]{ll}%
X_{v}\left(  t+1\right)  = & \frac{\sum_{e\in E_{v}^{-}\left(  t\right)
}r_{e}\sum_{u\in\mathbf{V}}\delta_{e,\left(  u,v\right)  }X_{u}\left(
t\right)  }{\sum_{e\in E_{v}^{-}\left(  t\right)  }r_{e}}=X_{v}\left(
t\right)  +\frac{\sum_{e\in E_{v}^{-}\left(  t\right)  }r_{e}\Delta
_{e}X\left(  t\right)  }{\sum_{e\in E_{v}^{-}\left(  t\right)  }r_{e}}\\
& =X_{v}\left(  t\right)  +\frac{\sum_{e\in E_{v}^{-}}r_{e}\omega_{e}\left(
t\right)  \Delta_{e}X\left(  t\right)  }{\sum_{e\in E_{v}^{-}}r_{e}\omega
_{e}\left(  t\right)  }=\frac{\sum_{e\in E_{v}^{-}}r_{e}\omega_{e}\left(
t\right)  \sum_{u\in\mathbf{V}}\delta_{e,\left(  u,v\right)  }X_{u}\left(
t\right)  }{\sum_{e\in E_{v}^{-}}r_{e}\omega_{e}\left(  t\right)  }%
\end{array}
\\
X_{v}\left(  0\right)  =X_{v}^{0}%
\end{array}
\right.  \;,\;v\in\mathbf{V}\ ,\ t\geq0\ , \label{evX1}%
\end{equation}
where, $\forall e\in\mathbf{E},$%
\begin{equation}
\Delta_{e}X\left(  t\right)  :=\left(  X_{u}\left(  t\right)  -X_{v}\left(
t\right)  \right)  \mathbf{1}_{\left(  v,u\right)  }\left(  e\right)
\label{deltaX}%
\end{equation}
and $r_{e}\in\left[  0,1\right]  $ is the communication rate between the
agents labelled by the the vertices incident in $e,$ namely $r_{u,v}%
:=r_{e}\delta_{e,\left(  u,v\right)  },$ which represents the confidence level
assigned by the agent $v$ to the belief of the agent $u.$

\subsubsection{Communication channels dynamics}

For any $v\in\mathbf{V},$ the $\mathcal{P}\left(  \mathbf{E}\right)  $-valued
sequence $\left\{  E_{v}^{-}\left(  t\right)  \right\}  _{t\geq0},$ as well as
the $\mathbf{G}$-valued sequence $\left\{  G\left(  t\right)  \right\}
_{t\geq0},$ are constructed by $\left\{  \omega\left(  t\right)  \right\}
_{t\geq0}$ through the random evolution described by the collection of regular
conditional probabilities
\begin{align}
\mathbb{P}\left\{  \omega_{e}\left(  t+1\right)  =\omega_{e}^{\prime}|X\left(
t\right)  \right\}   &  =\delta_{\omega_{e}^{\prime},1}p\left(  \left\vert
\Delta_{e}X\left(  t\right)  \right\vert \right)  +\delta_{\omega_{e}^{\prime
},0}\left(  1-p\left(  \left\vert \Delta_{e}X\left(  t\right)  \right\vert
\right)  \right) \label{pomega}\\
&  =\omega_{e}^{\prime}p\left(  \left\vert \Delta_{e}X\left(  t\right)
\right\vert \right)  +\left(  1-\omega_{e}^{\prime}\right)  \left(  1-p\left(
\left\vert \Delta_{e}X\left(  t\right)  \right\vert \right)  \right)
\ ,\;e\in\mathbf{E}\ ,\nonumber
\end{align}
where $X\left(  t\right)  \in\Xi$ is the belief configuration at time $t\geq0$
and $p:\left[  0,1\right]  \circlearrowleft$ is a nonincreasing function such
that $p\left(  0\right)  =1.$

Notice that, for any $t\geq0,$ given $e,f\in\mathbf{E}$ such that $\Delta
_{e}X\left(  t\right)  =\Delta_{f}X\left(  t\right)  ,$ the r.v.'s $\omega
_{e}\left(  t+1\right)  $ and $\omega_{\bar{e}}\left(  t+1\right)  $ have the
same conditional probabilities w.r.t. $X\left(  t\right)  .$ In particular
this holds for $e=\left(  u,v\right)  $ and $f=\bar{e}=\left(  v,u\right)  ,$
although the edge configurations $\omega_{e}$ and $\omega_{\bar{e}}$ are
different in general.

In the following we will consider $\forall e\in\mathbf{E},r_{e}>0.$ As a
matter of fact, since the $r_{e}$'s are fixed, we can restrict ourselves to
consider instead of $\mathbf{G}$ each component of its spanning subgraph
$\mathbf{G}_{r}:=G\left(  \mathbf{E}_{r}\right)  ,$ where
\begin{equation}
\mathbf{E}_{r}:=\left\{  e\in\mathbf{E}:r_{e}>0\right\}  \ ,
\end{equation}
because, by (\ref{evX}), if $G_{1},G_{2}\sqsubset\mathbf{G}_{r}$ the evolution
of the beliefs labeled by the vertices of $G_{1}$ is never affected by those
labeled by the vertices of $G_{2}.$

Moreover, since it is reasonable to assume that the agents put maximal
confidence on their own beliefs, we can set $r_{\left(  u,u\right)  }=1,$ for
any $u\in\mathbf{V.}$

\subsubsection{Results for the finite system}

Let $\mathbf{V}$ be a finite set. If $X^{0}\in\Xi$ is such that $\forall
v\in\mathbf{V},X_{v}^{0}=x\in\left[  0,1\right]  ,$ then by (\ref{evX})
$X\left(  t\right)  =X^{0},\forall t\geq0.$ Hence these configuration, called
\emph{consensus} configurations, are stationary for the system evolution.

Making use of probabilistic techniques borrowed from percolation theory, in
the next section we will prove the following result.

\begin{theorem}
\label{main}The agents system reaches consensus for any realization of the
initial value of the noise $\omega_{0}\in\Omega$ and any initial configuration
$X^{0}\in\left\{  X\in\Xi:\Gamma\left( W\left(  X\right)\right)  >0\right\}  .$
\end{theorem}

Where in view of the definition of $\Gamma:\Xi\longrightarrow\lbrack0,1)$
given in (\ref{Gamma}), $\Gamma\left(  W\left(  X\right)  \right)  $ is
defined in (\ref{GammaW}).

Moreover, we will also prove that, the random sequence $\left\{  \left(
X\left(  t\right)  ,\omega\left(  t\right)  \right)  \right\}  _{t\geq0}$
started at $\left(  X^{0},\omega^{0}\right)  \in\left\{  X\in\Xi:\Gamma\left(  W\left(  X\right)  \right)
>0\right\}  \times\Omega$ in the limit as $t$ tends to infinity
weakly converges at geometric rate to $\left(  X^{\infty},\bar{1}\right)  ,$
where $X^{\infty}\in\Xi$ is such that, for any $v\in\mathbf{V},X_{v}^{\infty
}=x,$ for some $x\in\left[  0,1\right]  ,$ and $\bar{1}$ is the element of
$\Omega$ such that all its entries are equal to $1.$

As a byproduct of this result we will obtain that the random sequence
$\left\{  \left(  X\left(  2t-2\right)  ,X\left(  2t-1\right)  \right)
\right\}  _{t\geq0}$ started at $\left(  X\left(  -2\right)  ,X\left(
-1\right)  \right)  =\left(  X^{0},X^{0}\right)  $ with $X^{0}\in\left\{
X\in\Xi:\Gamma\left(  W\left(  X\right)  \right)  >0\right\}  ,$ which turns out to be an
homogeneous Markov chain with degenerate transition probability kernel, i.e. a
dynamical system on $\Xi^{2},$ will also weakly converge, in the limit as $t$
tends to infinity, to $\left(  X^{\infty},X^{\infty}\right)  $ at geometric rate.

\subsubsection{Results for the very large system}

Let $\mathbf{V}:=\mathbb{N},\mathbf{E}:=\left\{  \left(  u,v\right)
\in\mathbb{N}\times\mathbb{N}\right\}  $ and set $\Xi:=\left[  0,1\right]
^{\mathbb{N}}$ and $\Omega$ as in (\ref{Omega}). Given $N\in\mathbb{N},$ let
$\mathbf{V}_{N}:=\left\{  1,..,N\right\}  \subset\mathbb{N}$ and
$\mathbf{E}_{N}:=\left\{  \left(  u,v\right)  \in\mathbf{V}_{N}\times
\mathbf{V}_{N}\right\}  .$ We denote by $X_{N}:=\left(  X_{1},..,X_{N}\right)
$ the element of $\Xi_{N}:=\left[  0,1\right]  ^{N}$ representing the
restriction of the beliefs configuration $X\in\Xi$ to $\mathbf{V}_{N},$ by
$\omega_{N}$ the restriction of the configuration $\omega\in\Omega$ to
$\Omega_{N}:=\left\{  0,1\right\}  ^{\mathbf{E}_{N}}$ and, by (\ref{Eomega}),
if $E:=E\left(  \omega\right)  ,$ we set $E_{N}:=E\cap\mathbf{E}_{N}.$

Assuming that $R:=\left\{  \left(  u,v\right)  \in\mathbf{V}\times
\mathbf{V}:r_{u,v}>0\right\}  $ is finite, in Proposition \ref{cls} we prove
that the random sequence $\left\{  \left(  X\left(  2t-2\right)  ,X\left(
2t-1\right)  \right)  \right\}  _{t\geq0}$ started at $\left(  X\left(
-2\right)  ,X\left(  -1\right)  \right)  =\left(  X^{0},X^{0}\right)  $ with
$X^{0}\in\left\{  X^{\prime}\in\Xi:\inf_{N\in\mathbb{N}}\Gamma\left( W\left(
X_{N}^{\prime}\right)\right)  >0\right\}  ,$ in the limit as $t$ tends to infinity,
weakly converges at geometric rate to $\left(  X^{\infty},X^{\infty}\right),$
where, as in the finite system case, $X^{\infty}\in\Xi$ is such
that $\forall v\in\mathbf{V},X_{v}^{\infty}=x$ for some $x\in\left[
0,1\right]  .$

\paragraph{Monokinetic-type limit}

Let us consider the non-linear Markov chain with degenerate transition
probability kernel $\left\{  \mathbf{Z}_{t}^{\varrho}\right\}  _{t\geq0}$ on
$\left(  \left[  0,1\right]  ^{2},\mathcal{B}\left(  \left[  0,1\right]
^{2}\right)  \right)  $ such that, for any $t\geq0$ and any bounded measurable
$\varphi:\left[  0,1\right]  ^{2}\longrightarrow\mathbb{R},\mathbf{Z}%
_{t}^{\varrho}:=\left(  Z_{t}^{\varrho,\left(  1\right)  },Z_{t}%
^{\varrho,\left(  2\right)  }\right)  $ and
\begin{align}
\mathbb{E}\left[  \varphi\left(  \mathbf{Z}_{t+1}^{\varrho}\right)
|\mathbf{Z}_{t}^{\varrho}\right]   &  =\mathbb{E}\left.  \left[
\varphi\left(  \mathbf{Z}_{t+1}^{\varrho,\left(  1\right)  },\mathbf{Z}%
_{t+1}^{\varrho,\left(  2\right)  }\right)  \right.  \left(  Z_{t}%
^{\varrho,\left(  1\right)  },Z_{t}^{\varrho,\left(  2\right)  }\right)
\right] \\
&  =\varphi\left(  \theta_{\mu_{t}}^{\varrho}\left(  Z_{t}^{\varrho,\left(
1\right)  },Z_{t}^{\varrho,\left(  2\right)  }\right)  ,\theta_{\mu_{t}%
}^{\varrho}\circ\theta_{\mu_{t}}^{\varrho}\left(  Z_{t}^{\varrho,\left(
1\right)  },Z_{t}^{\varrho,\left(  2\right)  }\right)  \right) \nonumber
\end{align}
where, for any $t\geq0,\mu_{t}$ is the law of $\mathbf{Z}_{t}^{\varrho}$ and,
for any $\mu\in\mathfrak{P}\left(  \left[  0,1\right]  ^{2},\mathcal{B}\left(
\left[  0,1\right]  ^{2}\right)  \right)  ,$%
\begin{equation}
\left[  0,1\right]  ^{2}\ni\left(  x,y\right)  \longmapsto\theta_{\mu
}^{\varrho}\left(  x,y\right)  :=\frac{\int_{\left[  0,1\right]  ^{2}}%
\mu\left(  dx^{\prime},dy^{\prime}\right)  p\left(  \left\vert x-x^{\prime
}\right\vert \right)  \varrho\left(  y,y^{\prime}\right)  y^{\prime}}%
{\int_{\left[  0,1\right]  ^{2}}\mu\left(  dx^{\prime},dy^{\prime}\right)
p\left(  \left\vert x-x^{\prime}\right\vert \right)  \varrho\left(
y,y^{\prime}\right)  }\in\left[  0,1\right]  \ . \label{theta}%
\end{equation}

In other words, the sequence $\left\{  \mathbf{Z}_{t}^{\varrho}\right\}
_{t\geq0}$ represents the trajectories of the non-homogeneous dynamical system
defined on $\left[  0,1\right]  ^{2}$ by the sequence of mappings $\left\{
\Theta_{\mu_{t}}^{\varrho}\right\}  _{t\geq1}$ such that, for any $t\geq1,$%
\begin{equation}
\left[  0,1\right]  ^{2}\ni\left(  x,y\right)  \longmapsto\Theta_{\mu_{t}%
}^{\varrho}\left(  x,y\right)  :=\left(  \theta_{\mu_{t}}^{\varrho}\left(
x,y\right)  ,\theta_{\mu_{t}}^{\varrho}\circ\theta_{\mu_{t}}^{\varrho}\left(
x,y\right)  \right)  \in\left[  0,1\right]  ^{2}\ ,
\end{equation}
i.e. the projection on the second component of the homogeneous dynamical
system
\begin{equation}
\mathbb{N}\times\left[  0,1\right]  ^{2}\ni\left(  t,\mathbf{z}\right)
\longmapsto\Phi\left(  t,\mathbf{z}\right)  :=\left(  \sigma\left(  t\right)
,\vartheta_{\mu_{t}}^{\varrho}\left(  \mathbf{z}\right)  \right)
\in\mathbb{N}\times\left[  0,1\right]  ^{2}\ ,
\end{equation}
where $\sigma$ is the left shift operator, namely $\mathbb{N}\ni
t\longmapsto\sigma\left(  t\right)  :=t+1\in\mathbb{N},$ and, for any
$t\in\mathbb{N},$ if $\mathbf{z=}\left(  x,y\right)  ,\vartheta_{\mu_{t}%
}^{\varrho}\left(  \mathbf{z}\right)  :=\Theta_{\mu_{t}}^{\varrho}\left(
x,y\right)  $ so that $\mathbf{Z}_{t+1}^{\varrho}=\vartheta_{\mu_{t}}%
^{\varrho}\left(  \mathbf{Z}_{t}^{\varrho}\right)  .$

Under the assumption that $\mathsf{supp}p=\left[  0,1\right]  $ and that for
any $u,v\in\mathbf{V}_{N},r_{u,v}:=\varrho\left(  X_{u},X_{v}\right)  ,$ where
$\mathsf{supp}\varrho=\left[  0,1\right]  ^{2}$ and $\varrho\in Lip\left(
\left[  0,1\right]  ^{2},\left[  0,1\right]  \right)  ,$ denoting by $\mu
_{N}^{X,Y}:=\frac{1}{N}\sum_{v\in\mathbf{V}_{N}}\delta_{\left\{
X_{v}\right\}  }\otimes\delta_{\left\{  Y_{v}\right\}  }$ the empirical
probability measure on $\left(  \left[  0,1\right]  ^{2},\mathcal{B}\left(
\left[  0,1\right]  ^{2}\right)  \right)  $ relative to the believes
configuration $\left(  X,Y\right)  \in\Xi_{N}^{2},$ by means of Theorem
\ref{mf4}, we show that, if the sequence $\left\{  \mu_{N}^{X^{\left(
1\right)  },X^{\left(  2\right)  }}\right\}  _{N\geq1}\subset\mathfrak{P}%
\left(  \left[  0,1\right]  ^{2},\mathcal{B}\left(  \left[  0,1\right]
^{2}\right)  \right)  $ such that, for any $N\geq1,\mu_{N}^{X^{\left(
1\right)  },X^{\left(  2\right)  }}$ is supported on the initial datum
$\left(  X_{N}\left(  -2\right)  ,X_{N}\left(  -1\right)  \right)  =\left(
X^{\left(  1\right)  },X^{\left(  2\right)  }\right)  \in\Xi_{N}^{2}$ of the
Markov chain $\left\{  \left(  X_{N}\left(  2t-2\right)  ,X_{N}\left(
2t-1\right)  \right)  \right\}  _{t\geq0},$ weakly converges to $\mu
\in\mathfrak{P}\left(  \left[  0,1\right]  ^{2},\mathcal{B}\left(  \left[
0,1\right]  ^{2}\right)  \right)  ,$ then, for any $t\geq1,$ the sequence \linebreak
$\left\{  \mu_{N}^{\left(  X\left(  2t-2\right)  ,X\left(  2t-1\right)
\right)  }\right\}  _{N\geq1}\subset\mathfrak{P}\left(  \left[  0,1\right]
^{2},\mathcal{B}\left(  \left[  0,1\right]  ^{2}\right)  \right)  $ weakly
converges to the probability distribution $\mu_{t}\in\mathfrak{P}\left(
\left[  0,1\right]  ^{2},\mathcal{B}\left(  \left[  0,1\right]  ^{2}\right)
\right)  $ of $\mathbf{Z}_{t}^{\varrho}.$

To our knowledge the proof of this result is not standard and relies on the
self-average property of the random interactions among the agents which are
only locally mean-field although, at any time $t\geq1,$ their law depend on
the value of the believes variables at time $t-1.$ This is a crucial fact that
prevented us from using well established techniques such as those described in
the dynamical systems literature in \cite{Ta} section 3 and reference therein,
as well as in \cite{CCH} and in particular in \cite{GO} for what concerns the
kinetic limit literature.

Moreover, in Proposition \ref{mfc} we prove that, if $\varrho$ is also
strictly positive and assumes the value $1$ on the set $\left\{  \left(
x,y\right)  \in\left[  0,1\right]  ^{2}:x=y\right\}  ,$ given an initial datum
$\mu\in\mathfrak{P}\left(  \left[  0,1\right]  ^{2},\mathcal{B}\left(  \left[
0,1\right]  ^{2}\right)  \right)  ,$ the sequence $\left\{  \mu_{t}\right\}
_{t\geq0}\subset\mathfrak{P}\left(  \left[  0,1\right]  ^{2},\mathcal{B}%
\left(  \left[  0,1\right]  ^{2}\right)  \right)  $ weakly converges
to the Dirac mass at $\left(  x,x\right)  ,$ for some $x\in\left[  0,1\right]
,$ at geometric rate.

\section{Finite system evolution}

Let $\mathbf{V}$ be a finite set. The evolution of the system is given by the following algorithm:

\begin{algorithm}
\label{A1}

\begin{enumerate}
\item Label the elements of $\mathbf{V}$ from $1$ to $N$ in such a way that
$\mathbf{V}:=\left\{  1,..,N\right\}  $ and consequently label $\left(
i,j\right)  $ the elements of $\mathbf{E}:=\mathbf{V}\times\mathbf{V},$ then
go to the next step.

\item Set $t:=0,X\left(  0\right)  =\left(  X_{1}\left(  0\right)
,..,X_{N}\left(  0\right)  \right)  :=\left(  X_{1}^{0},..,X_{N}^{0}\right)
\in\left[  0,1\right]  ^{N},\omega\left(  0\right)  :=\left\{  \omega
_{i,j}^{0}\right\}  _{\left(  i,j\right)  \in\mathbf{E}}\in\left\{
0,1\right\}  ^{\mathbf{E}}$ such that $\forall i=1,..,N,\omega_{i,i}^{0}=1,$
and go to the next step.

\item Set $i:=1$ and go to the next step.

\begin{enumerate}
\item Set $j:=1$ and go to the next step.

\item Compute $p_{i,j}\left(  t\right)  :=p\left(  \left\vert X_{i}\left(
t\right)  -X_{j}\left(  t\right)  \right\vert \right)  $ and form the vector
\begin{equation}
\mathbf{p}\left(  t\right)  :=\left(  p_{1,1}\left(  t\right)  ,..,p_{1,N}%
\left(  t\right)  ,p_{2,1}\left(  t\right)  ,..,p_{2,N}\left(  t\right)
..,p_{i,1}\left(  t\right)  ,..,p_{i,j}\left(  t\right)  \right)
\end{equation}
and go to the step.

\item Set $j:=j+1.$ If $j+1\leq N$ go back to step 3.b, otherwise go to the
next step.

\item Set $i:=i+1.$ If $i+1\leq N$ go back to step 3.a, otherwise go to the
next step.
\end{enumerate}

\item Set $i:=1$ and go to the next step.

\begin{enumerate}
\item Compute $X_{i}\left(  t+1\right)  $ according to (\ref{evX}) and form
the vector $X\left(  t+1\right)  :=\left(  X_{1}\left(  t+1\right)  ,\right.
$ $\left.  ..,X_{i}\left(  t+1\right)  \right)  ,$ then go to the next step.

\item Set $i:=i+1.$ If $i+1\leq N$ go back to step 4.a, otherwise go to the
next step.
\end{enumerate}

\item Read $X\left(  t\right)  =\left(  X_{1}\left(  t\right)  ,..,X_{N}%
\left(  t\right)  \right)  .$ If $X\left(  t+1\right)  =X\left(  t\right)  $
stop, otherwise go to the next step.

\item Set $i:=1$ and go to the next step.

\begin{enumerate}
\item Set $j=1$ and go to the next step.

\item Read the $p_{i,j}\left(  t\right)  $ entry of the vector $\mathbf{p}%
\left(  t\right)  \mathbf{.}$ Sample a random variable $U$ uniformly
distributed on $\left[  0,1\right]  .$ If $U\leq p_{i,j}\left(  t\right)  $
then set $\omega_{i,j}\left(  t+1\right)  :=\omega_{i,j}\left(  t\right)  $ if
$\omega_{i,j}\left(  t\right)  =1,$ otherwise set $\omega_{i,j}\left(
t+1\right)  :=1-\omega_{i,j}\left(  t\right)  .$ If $U>p_{i}\left(  t\right)
$ then set $\omega_{i,j}\left(  t+1\right)  :=1-\omega_{i,j}\left(  t\right)
$ if $\omega_{i,j}\left(  t\right)  =1,$ otherwise set $\omega_{i,j}\left(
t+1\right)  :=\omega_{i,j}\left(  t\right)  .$ Form the vector
\begin{equation}
\omega\left(  t+1\right)  :=\left(  \omega_{1,1}\left(  t+1\right)
,..,\omega_{1,N}\left(  t+1\right)  ,..,\omega_{i,1}\left(  t+1\right)
,..,\omega_{i,j}\left(  t+1\right)  \right)
\end{equation}
and go to the next step.

\item Set $j:=j+1.$ If $j+1\leq N$ go back to step 6.b, otherwise go to the
next step.

\item Set $i:=i+1.$ If $i+1\leq N$ go back to step 6.a, otherwise go to the
next step.
\end{enumerate}

\item Set $t:=t+1,X\left(  0\right)  :=X\left(  t+1\right)  ,\omega\left(0\right)  :=\omega\left(  t+1\right)  $ 
and go back to step 3.
\end{enumerate}
\end{algorithm}

In terms of stochastic process the system evolution can be described as follows.

Let $\Omega\ni\omega\longmapsto P\left(  \omega\right)  \in St\left(
\mathbf{V}\right)  $ the stochastic matrix-valued function on $\mathbb{R}%
^{\mathbf{V}}$ such that, for any $\omega\in\Omega,$
\begin{equation}
P_{v,u}\left(  \omega\right)  :=\frac{\sum_{e\in E_{v}^{-}\left(
\omega\right)  }\delta_{e,\left(  u,v\right)  }r_{e}}{\sum_{e\in E_{v}%
^{-}\left(  \omega\right)  }r_{e}}=\frac{r_{u,v}\mathbf{1}_{\mathcal{N}%
^{-}\left(  v,\omega\right)  }\left(  u\right)  }{\sum_{u\in\mathcal{N}%
^{-}\left(  v,\omega\right)  }r_{u,v}}=\frac{r_{u,v}\omega_{u,v}}{\sum
_{u\in\mathbf{V}}r_{u,v}\omega_{u,v}}\ ,\;u,v\in\mathbf{V\ .} \label{Pomega}%
\end{equation}

\begin{remark}
We remark that, given $\omega\in\Omega,v\in\mathbf{V},$ by (\ref{Pomega})
$u\in\mathcal{N}^{-}\left(  v,\omega\right)  $ iff $P_{v,u}\left(
\omega\right)  >0.$ Therefore, denoting by $\overline{G}\left(  \omega\right)
:=G\left(  P\left(  \omega\right)  \right)  $ the graph associated to
$P\left(  \omega\right)  ,$ this is the spanning graph of $\overline{E}\left(
\omega\right)  :=\left\{  e\in\mathbf{E}:e=\left(  u,v\right)  \ if\ \left(
v,u\right)  \in E\left(  \omega\right)  \right\}  .$
\end{remark}

Considering $\Xi:=\left[  0,1\right]  ^{\mathbf{V}}\subset\mathbb{R}%
^{\mathbf{V}}$ endowed with the norm $\left\Vert X\right\Vert :=\left\Vert
X\right\Vert _{\infty}=\sup_{v\in\mathbf{V}}\left\vert X_{v}\right\vert ,$ let
$\left(  \mathfrak{X},\mathcal{F}\right)  $ be the measurable space such that
$\mathfrak{X}:=\Xi\times\Omega$ and, since $\mathbf{V}$ is a finite set,
$\mathcal{F}:=\mathcal{B}\left(  \Xi\right)  \otimes\mathcal{P}\left(
\Omega\right)  .$

For any $\omega\in\Omega,P\left(  \omega\right)  \in BL\left(  \Xi\right)  ,$
therefore we set
\begin{equation}
\mathfrak{X}\ni\left(  X,\omega\right)  \longmapsto\mathcal{T}_{v}\left(
X,\omega\right)  :=\sum_{u\in\mathbf{V}}P_{v,u}\left(  \omega\right)  X_{u}%
\in\left[  0,1\right]  \ , \label{Tomega}%
\end{equation}
and consider the measurable map
\begin{equation}
\mathfrak{X}\ni\left(  X,\omega\right)  \longmapsto\mathcal{T}\left(
X,\omega\right)  :=\left\{  \mathcal{T}_{v}\left(  X,\omega\right)  \right\}
_{v\in\mathbf{V}}\in\Xi\ . \label{T}%
\end{equation}
Defining, by (\ref{pomega}), the probability kernel from $\left(
\Xi,\mathcal{B}\left(  \Xi\right)  \right)  $ to $\left(  \Omega
,\mathcal{P}\left(  \Omega\right)  \right)  $%
\begin{align}
\mathfrak{X} \ni\left(  X,\omega\right)  \longmapsto\Pi\left(  \omega
|X\right)   &  :=\prod\limits_{e\in\mathbf{E}}\left[  \delta_{\omega_{e}%
,1}p\left(  \left\vert \Delta_{e}X\right\vert \right)  +\delta_{\omega_{e}%
,0}\left(  1-p\left(  \left\vert \Delta_{e}X\right\vert \right)  \right)
\right] \label{pomega1}\\
&  =\prod\limits_{e\in\mathbf{E}}\left[  \omega_{e}p\left(  \left\vert
\Delta_{e}X\right\vert \right)  +\left(  1-\omega_{e}\right)  \left(
1-p\left(  \left\vert \Delta_{e}X\right\vert \right)  \right)  \right]
\in\left[  0,1\right]  \ ,\nonumber
\end{align}
we introduce the positive linear operator on $BM\left(  \mathfrak{X}\right)  $
such that
\begin{equation}
BM\left(  \mathfrak{X}\right)  \ni\varphi\longmapsto\mathfrak{T}\varphi\left(
X,\omega\right)  :=\sum_{\omega^{\prime}\in\Omega}\varphi\left(
\mathcal{T}\left(  X,\omega\right)  ,\omega^{\prime}\right)  \Pi\left(
\omega^{\prime}|X\right)  \in BM\left(  \mathfrak{X}\right)  \ . \label{TT}%
\end{equation}

Let $\mathbb{P}_{0}$ be the probability distribution on $\left(
\mathfrak{X}^{\mathbb{Z}_{+}},\mathfrak{C}\right)  ,$ where $\mathfrak{C}%
:=\mathcal{C}\left(  \Xi\right)  \otimes\mathcal{C}\left(  \Omega\right)  ,$
describing the homogeneous discrete time Markov process started at $\left(
X^{0},\omega^{0}\right)  $ defined by the one-step transition probability
kernel associated to $\mathfrak{T}.$ We denote by $\left\{  \chi_{t}\right\}
_{t\geq0}$ the random process on $\left(  \mathfrak{X}^{\mathbb{Z}_{+}%
},\mathfrak{C},\mathbb{P}_{0}\right)  $ such that, $\forall t\geq0,$%
\begin{equation}
\mathfrak{X}^{\mathbb{Z}_{+}}\ni\mathbf{x}\longmapsto\chi_{t}\left(
\mathbf{x}\right)  =\left(  X\left(  t\right)  ,\omega\left(  t\right)
\right)  \in\mathfrak{X} \label{chit}%
\end{equation}
and by $\left\{  \mathfrak{F}_{t}\right\}  _{t\geq0},$ with $\mathfrak{F}%
_{t}:=\bigvee\limits_{s=0}^{t}\chi_{s}^{-1}\left(  \mathcal{B}\left(
\Xi\right)  \otimes\mathcal{P}\left(  \Omega\right)  \right)  ,$ the
associated natural filtration. Therefore, denoting by $\mathbb{E}_{0}$ the
expectation value w.r.t. $\mathbb{P}_{0},$ for any bounded measurable function
$\varphi$ on $\mathfrak{X,}$%

\begin{equation}
\mathbb{E}_{0}\left[  \varphi\circ\chi_{t+1}|\mathfrak{F}_{t}\right]
=\mathbb{E}_{0}\left[  \varphi\circ\chi_{t+1}|\chi_{t}\right]  =\left(
\mathfrak{T}\varphi\right)  \left(  \chi_{t}\right)  \;\mathbb{P}_{0}-a.s.\ .
\label{TT1}%
\end{equation}

Notice that, by (\ref{Tomega}), $\mathfrak{T}:C\left(  \mathfrak{X}%
,\mathbb{R}\right)  \circlearrowleft,$ that is $\left\{  \chi_{t}\right\}
_{t\geq0}$ is a Feller process.

Setting $\pi_{\omega}:\mathfrak{X}\longmapsto\Omega,\pi_{X}:\mathfrak{X}%
\longmapsto\Xi,$ we denote by $\left\{  \mathfrak{w}_{t}\right\}  _{t\geq
0},\left\{  \mathfrak{x}_{t}\right\}  _{t\geq0}$ the random processes on
$\left(  \mathfrak{X}^{\mathbb{Z}_{+}},\mathfrak{C},\mathbb{P}_{0}\right)  $
such that, $\forall t\geq0,$%
\begin{equation}
\mathfrak{X}^{\mathbb{Z}_{+}}\ni\mathbf{x}\longmapsto\mathfrak{w}_{t}\left(
\mathbf{x}\right)  :=\pi_{\omega}\circ\chi_{t}\left(  \mathbf{x}\right)
=\omega\left(  t\right)  \in\Omega\label{w_t}%
\end{equation}
and
\begin{equation}
\mathfrak{X}^{\mathbb{Z}_{+}}\ni\mathbf{x}\longmapsto\mathfrak{x}_{t}\left(
\mathbf{x}\right)  :=\pi_{X}\circ\chi_{t}\left(  \mathbf{x}\right)  =X\left(
t\right)  \in\Xi\ . \label{x_t}%
\end{equation}
Hence, $\left\{  \chi_{t}\right\}  _{t\geq0}$ can be represented as $\left\{
\left(  \mathfrak{x}_{t},\mathfrak{w}_{t}\right)  \right\}  _{t\geq0}.$ We
also set $\left\{  \mathfrak{F}_{t}^{\omega}\right\}  _{t\geq0},$ with
$\mathfrak{F}_{t}^{\omega}:=\bigvee\limits_{s=0}^{t}\mathfrak{w}_{s}%
^{-1}\left(  \mathcal{P}\left(  \Omega\right)  \right)  ,$ and $\left\{
\mathfrak{F}_{t}^{X}\right\}  _{t\geq0},$ with $\mathfrak{F}_{t}^{X}%
:=\bigvee\limits_{s=0}^{t}\mathfrak{x}_{s}^{-1}\left(  \mathcal{B}\left(
\Xi\right)  \right)  .$

\begin{remark}
\label{Rem1}Notice that neither $\left\{  \mathfrak{w}_{t}\right\}  _{t\geq0}$
nor $\left\{  \mathfrak{x}_{t}\right\}  _{t\geq0}$ are Markov processes.
Indeed, by (\ref{pomega}), for any $t\geq0,\mathfrak{w}_{t+1}$ is independent
of $\mathfrak{w}_{t}.$ Moreover, since $\forall t\geq0,\mathfrak{F}_{t}^{X}$
and $\mathfrak{F}_{t}^{\omega}$ are sub$\sigma$algebras of $\mathfrak{F}_{t},$
for any $B\in\mathcal{P}\left(  \Omega\right)  ,$
\begin{align}
\mathbb{P}_{0}\left(  \left\{  \mathfrak{w}_{t+1}\in B\right\}  |\mathfrak{F}%
_{t}^{\omega}\right)   &  =\mathbb{E}_{0}\left[  \mathbb{E}_{0}\left[
\mathbf{1}_{B}\circ\pi_{\omega}\circ\chi_{t+1}|\mathfrak{F}_{t}\right]
|\mathfrak{F}_{t}^{\omega}\right] \\
&  =\mathbb{E}_{0}\left[  \mathbb{E}_{0}\left[  \mathbf{1}_{B}\circ\pi
_{\omega}\circ\chi_{t+1}|\chi_{t}\right]  \mathfrak{F}_{t}^{\omega}\right]
\nonumber\\
&  =\mathbb{E}_{0}\left[  \mathfrak{T}\left(  \mathbf{1}_{B}\circ\pi_{\omega
}\right)  \left(  \chi_{t}\right)  |\mathfrak{F}_{t}^{\omega}\right]
\nonumber\\
&  =\mathbb{E}_{0}\left.  \left[  \sum_{\omega^{\prime}\in\Omega}%
\mathbf{1}_{B}\left(  \omega^{\prime}\right)  \Pi\left(  \omega^{\prime
}|\mathfrak{x}_{t}\right)  \right\vert \mathfrak{F}_{t}^{\omega}\right]
\nonumber\\
&  =\mathbb{E}_{0}\left.  \left[  \sum_{\omega^{\prime}\in B}\Pi\left(
\omega^{\prime}|\mathfrak{x}_{t}\right)  \right\vert \mathfrak{F}_{t}^{\omega
}\right]  \neq\mathbb{P}_{0}\left(  \left\{  \mathfrak{w}_{t+1}\in B\right\}
|\mathfrak{w}_{t}\right) \nonumber\\
&  =\mathbb{P}_{0}\left\{  \mathfrak{w}_{t+1}\in B\right\}  =\left(
\mathfrak{T}^{t+1}\mathbf{1}_{B}\circ\pi_{\omega}\right)  \left(  X^{0}%
,\omega^{0}\right) \nonumber
\end{align}
while, by (\ref{pomega}), (\ref{TT}) and (\ref{TT1}), $\forall\varphi\in
BM\left(  \Xi,\mathbb{R}\right)  ,$ since for any $t\geq0,\mathfrak{F}_{t}%
^{X}$ is a sub$\sigma$algebra of $\mathfrak{F}_{t},$%
\begin{align}
\mathbb{E}_{0}\left[  \varphi\circ\mathfrak{x}_{t+1}|\mathfrak{F}_{t}%
^{X}\right]   &  =\mathbb{E}_{0}\left[  \varphi\circ\pi_{X}\circ\chi
_{t+1}|\mathfrak{F}_{t}^{X}\right]  =\mathbb{E}_{0}\left[  \mathbb{E}%
_{0}\left[  \varphi\circ\pi_{X}\circ\chi_{t+1}|\mathfrak{F}_{t}\right]
|\mathfrak{F}_{t}^{X}\right] \label{EfiXt+1}\\
&  =\mathbb{E}_{0}\left[  \mathbb{E}_{0}\left[  \varphi\circ\pi_{X}\circ
\chi_{t+1}|\chi_{t}\right]  |\mathfrak{F}_{t}^{X}\right]  =\mathbb{E}%
_{0}\left[  \left(  \mathfrak{T}\left(  \varphi\circ\pi_{X}\right)  \right)
\left(  \chi_{t}\right)  |\mathfrak{F}_{t}^{X}\right] \nonumber\\
&  =\sum_{\omega^{\prime}\in\Omega}\sum_{\mathfrak{w}_{t}\in\Omega}%
\varphi\left(  \mathcal{T}\left(  \mathfrak{x}_{t},\mathfrak{w}_{t}\right)
\right)  \Pi\left(  \omega^{\prime}|\mathfrak{x}_{t}\right)  \Pi\left(
\mathfrak{w}_{t}|\mathfrak{x}_{t-1}\right) \nonumber\\
&  =\sum_{\omega\in\Omega}\varphi\left(  \mathcal{T}\left(  \mathfrak{x}%
_{t},\omega\right)  \right)  \Pi\left(  \omega|\mathfrak{x}_{t-1}\right)
=\mathbb{E}_{0}\left[  \varphi\circ\mathfrak{x}_{t+1}|\mathfrak{x}%
_{t},\mathfrak{x}_{t-1}\right]  \;\mathbb{P}_{0}-a.s.\ .\nonumber
\end{align}
In particular, by (\ref{EfiXt+1}), we get that $\left\{  \mathfrak{y}%
_{t}\right\}  _{t\geq0}$ such that $\forall t\geq0,\mathfrak{y}_{t}:=\left(
\mathfrak{x}_{2t-2},\mathfrak{x}_{2t-1}\right)  ,$ with $\mathfrak{x}%
_{-2}=\mathfrak{x}_{-1}=\mathfrak{x}_{0},$ is a homogeneous Markov process on
$\left(  \mathfrak{X}^{\mathbb{Z}_{+}},\mathfrak{C},\mathbb{P}_{0}\right)  .$
Indeed, denoting by $\left\{  \mathfrak{F}_{t}^{\mathfrak{y}}\right\}
_{t\geq0}$ the filtration generated by $\left\{  \mathfrak{y}_{t}\right\}
_{t\geq0},$ since $\forall t\geq0,\mathfrak{F}_{t}^{\mathfrak{y}}%
=\mathfrak{F}_{2t-1}^{X},$ for any bounded measurable function $\varphi$ on
$\Xi^{2},$%
\begin{align}
\mathbb{E}\left[  \varphi\circ\mathfrak{y}_{t+1}|\mathfrak{F}_{t}%
^{\mathfrak{y}}\right]   &  =\mathbb{E}\left[  \varphi\left(  \mathfrak{x}%
_{2t},\mathfrak{x}_{2t+1}\right)  |\mathfrak{F}_{2t-1}^{X}\right]
=\mathbb{E}\left[  \varphi\left(  \mathfrak{x}_{2t},\mathfrak{x}%
_{2t+1}\right)  |\mathfrak{x}_{2t-1},\mathfrak{x}_{2t-2}\right] \\
&  =\mathbb{E}\left[  \varphi\circ\mathfrak{y}_{t+1}|\mathfrak{y}_{t}\right]
\ .\nonumber
\end{align}
Therefore, the transition operator associated to $\left\{  \mathfrak{y}%
_{t}\right\}  _{t\geq0}$ is
\begin{equation}
\left(  \mathbf{T}\varphi\right)  \left(  X_{1},X_{2}\right)  :=\sum
_{\omega,\omega^{\prime}\in\Omega}\varphi\left(  \mathcal{T}\left(
X_{2},\omega\right)  ,\mathcal{T}\left(  \mathcal{T}\left(  X_{2}%
,\omega\right)  ,\omega^{\prime}\right)  \right)  \Pi\left(  \omega
|X_{1}\right)  \Pi\left(  \omega^{\prime}|X_{2}\right)  \label{TTT}%
\end{equation}
and, setting
\begin{equation}
\Xi^{2}\ni\left(  X_{1},X_{2}\right)  \longmapsto\pi_{i}\left(  X_{1}%
,X_{2}\right)  :=X_{1}\delta_{i,1}+X_{2}\delta_{i,2}\in\Xi\ ,\ i=1,2\ ,
\label{pi_i}%
\end{equation}
for any bounded measurable function $\varphi$ on $\Xi,$ we have,
$\mathbb{P}_{0}-a.s.,$%
\begin{align}
\mathbb{E}_{0}\left[  \varphi\circ\mathfrak{x}_{t+1}|\mathfrak{x}%
_{t},\mathfrak{x}_{t-1}\right]   &  =\mathbb{E}_{0}\left[  \left(
\varphi\circ\pi_{1}\right)  \left(  \mathfrak{y}_{s+1}\right)  |\mathfrak{y}%
_{s}\right]  \delta_{t,2s-1}+\mathbb{E}_{0}\left[  \left(  \varphi\circ\pi
_{2}\right)  \left(  \mathfrak{y}_{s+1}\right)  |\mathfrak{y}_{s}\right]
\delta_{t,2s}\\
&  =\mathbb{E}_{0}\left[  \mathbf{T}\left(  \varphi\circ\pi_{1}\right)
\left(  \mathfrak{y}_{s}\right)  \right]  \delta_{t,2s-1}+\mathbb{E}%
_{0}\left[  \mathbf{T}\left(  \varphi\circ\pi_{2}\right)  \left(
\mathfrak{y}_{s}\right)  \right]  \delta_{t,2s}\;,\;s\geq0\ .\nonumber
\end{align}

\end{remark}

\subsection{Consensus}

If $X^{0}\in\Xi$ is such that $\forall v\in\mathbf{V},X_{v}^{0}=x\in\left[
0,1\right]  ,$ then by (\ref{evX}) $X\left(  t\right)  =X^{0},\forall t\geq0.$
Hence these configuration, called \emph{consensus} configurations, are
stationary for the system evolution.

We denote by
\begin{equation}
\mathcal{I}:=\bigcup\limits_{x\in\left[  0,1\right]  }\mathcal{I}_{x}
\label{I}%
\end{equation}
where
\begin{equation}
\mathcal{I}_{x}:=\left\{  X\in\Xi:X_{v}=x\ ,\ \forall v\in\mathbf{V}\right\}
\label{Ix}%
\end{equation}
and by $\mathbb{M}:\Xi\longrightarrow\Xi$ the \emph{consensus projection map},
that is the map associating to each belief configuration $X$ the consensus
configuration $\mathbb{M}X$ such that $\forall v\in\mathbf{V},\left(
\mathbb{M}X\right)  _{v}:=\frac{1}{\left\vert \mathbf{V}\right\vert }%
\sum_{u\in\mathbf{V}}X_{u}.$ It is easy to see that $\mathbb{M}$ is a
projection operator on $\mathcal{I},$ moreover an orthogonal projection if
$\Xi$ is endowed with the Euclidean structure $\left\langle \cdot
,\cdot\right\rangle $ of $\mathbb{R}^{\mathbf{V}}.$ Indeed, $\forall X\in
\Xi,\mathbb{M}^{2}X=\mathbb{M}X.$ Therefore, $\forall X\in\Xi,$ we set
\begin{equation}
dist\left(  \mathcal{I},X\right)  :=\inf_{Y\in\mathcal{I}}\left\Vert
X-Y\right\Vert \leq\left[  \mathbb{I}-\mathbb{M}\right]  X\ , \label{dI}%
\end{equation}
where we denote by $\mathbb{I}$ the identity operator on $\mathbb{R}%
^{\mathbf{V}}.$

Consequently, since if $X=\mathbb{M}X,\forall\left(  u,v\right)  \in
\mathbf{E},X_{u}-X_{v}=0,$ we can modify the algorithm \ref{A1} erasing the
line 5\ and adding the line

\begin{enumerate}
\item[3.e] If $\sum_{i=1}^{N}\sum_{j=1}^{N}\left(  1-\delta_{i,j}\right)
p_{i,j}\left(  t\right)  = N\left(  N-1\right)  $ stop, otherwise proceed to
the next step.
\end{enumerate}

\subsection{Invariant measures for $\mathfrak{T}$ and $\mathbf{T}$}

Setting $\mathcal{X}:=\left(  \mathbb{I}-\mathbb{M}\right)  \Xi,$ we can
represent $\Xi$ as $\mathcal{I\oplus X}.$ Moreover, for any $\omega\in
\Omega,\mathcal{I}$ is invariant under $\mathcal{T}\left(  \cdot
,\omega\right)  ,$ since $X\in\Xi,$ by (\ref{T}) we get $\mathcal{T}\left(
\mathbb{M}X,\omega\right)  =\mathbb{M}X.$ Therefore
\begin{equation}
\mathcal{T}\left(  X,\omega\right)  =\mathcal{T}\left(  \mathbb{M}X+\left(
\mathbb{I-M}\right)  X,\omega\right)  =\mathbb{M}X+\mathcal{T}\left(  \left(
\mathbb{I-M}\right)  X,\omega\right)  \ . \label{decT}%
\end{equation}
Moreover, by (\ref{pomega1}), for any $\omega\in\Omega,\Pi\left(
\omega|X\right)  =\Pi\left(  \omega|\left(  \mathbb{I-M}\right)  X\right)  .$
Hence, denoting by $\delta_{\bar{1}}^{\omega}$ the Dirac measure at $\bar{1},$
by the definition of $p$ and by (\ref{pomega}), given $X\in\mathcal{I}%
,\forall\omega\in\Omega,\Pi\left(  \omega|X\right)  =\prod\limits_{e\in
\mathbf{E}}\delta_{\omega_{e},1}=\delta_{\bar{1}}^{\omega}.$

Denoting by $\delta^{X}$ the probability measure on $\left(  \Xi,B\left(
\Xi\right)  \right)  $ concentrated on the beliefs configuration $X\in\Xi,$
let $\delta_{\mathcal{I}}^{X}$ be the probability measure on $\left(
\Xi,\mathcal{B}\left(  \Xi\right)  \right)  $ putting mass $1$ on the
configuration $X\in\mathcal{I}.$ It is easy to see that the probability
measure $\delta_{\mathcal{I}}^{X}\otimes\delta_{\bar{1}}^{\omega}$ on $\left(
\mathfrak{X},\mathcal{F}\right)  $ is invariant for the evolution given by
$\mathfrak{T}.$ Indeed, if $X\in\mathcal{I},$ by (\ref{I}) and (\ref{Ix}),
there exists $x\in\left[  0,1\right]  $ such that $X\in\mathcal{I}_{x}.$
Hence, by (\ref{Tomega}), for any $\omega\in\Omega,\mathcal{T}\left(
X,\omega\right)  =X.$ Therefore, given any bounded measurable function
$\varphi$ on $\mathfrak{X},$ by (\ref{T}), $\forall\omega,\omega^{\prime}%
\in\Omega,\delta_{\mathcal{I}}^{X}\left[  \varphi\left(  \mathcal{T}\left(
\cdot,\omega\right)  ,\omega^{\prime}\right)  \right]  =\delta_{\mathcal{I}%
}^{X}\left[  \varphi\left(  \cdot,\omega^{\prime}\right)  \right]  .$ Thus, by
(\ref{TT}),
\begin{align}
\delta_{\mathcal{I}}^{X}\otimes\delta_{\bar{1}}^{\omega}\left[  \mathfrak{T}%
\varphi\right]   &  =\delta_{\bar{1}}^{\omega}\left[  \delta_{\mathcal{I}}%
^{X}\left[  \sum_{\omega^{\prime}\in\Omega}\varphi\left(  \mathcal{T}\left(
\cdot,\omega\right)  ,\omega^{\prime}\right)  \Pi\left(  \omega^{\prime}%
|\cdot\right)  \right]  \right] \\
&  =\delta_{\bar{1}}^{\omega}\left[  \sum_{\omega^{\prime}\in\Omega}%
\delta_{\mathcal{I}}^{X}\left[  \varphi\left(  \mathcal{T}\left(  \cdot
,\omega\right)  ,\omega^{\prime}\right)  \Pi\left(  \omega^{\prime}%
|\cdot\right)  \right]  \right] \nonumber\\
&  =\delta_{\bar{1}}^{\omega}\left[  \sum_{\omega^{\prime}\in\Omega}%
\delta_{\mathcal{I}}^{X}\left[  \varphi\left(  \cdot,\omega^{\prime}\right)
\prod\limits_{e\in\mathbf{E}}\delta_{\omega_{e}^{\prime},1}\right]  \right]
\nonumber\\
&  =\delta_{\mathcal{I}}^{X}\otimes\delta_{\bar{1}}^{\omega}\left[
\sum_{\omega^{\prime}\in\Omega}\varphi\left(  \cdot,\omega^{\prime}\right)
\prod\limits_{e\in\mathbf{E}}\delta_{\omega_{e}^{\prime},1}\right] \nonumber\\
&  =\delta_{\mathcal{I}}^{X}\otimes\delta_{\bar{1}}^{\omega}\left[
\varphi\left(  \cdot,\bar{1}\right)  \right]  =\delta_{\mathcal{I}}^{X}%
\otimes\delta_{\bar{1}}^{\omega}\left[  \varphi\right]  \ .\nonumber
\end{align}
Thus the set $\mathfrak{I}_{\mathfrak{T}}$ of invariant probability measures
under $\mathfrak{T}$ is the weak limit of convex combinations of elements of
the set $\left\{  \delta_{\bar{1}}^{\omega}\otimes\delta_{\mathcal{I}}%
^{X}\right\}  _{X\in\mathcal{I}}\subset\mathfrak{P}\left(  \mathfrak{X}%
,\mathcal{F}\right)  .$

Since for any $X_{2}\in\mathcal{I}$ and any $\mu\in\mathfrak{P}\left(
\Xi,\mathcal{B}\left(  \Xi\right)  \right)  ,$ from (\ref{TTT}) it follows
that
\begin{align}
\mu\otimes\delta_{\mathcal{I}}^{X_{2}}\left[  \mathbf{T}\varphi\right]   &
=\int\mu\left(  dX_{1}\right)  \delta_{\mathcal{I}}^{X_{2}}\left[
\sum_{\omega,\omega^{\prime}\in\Omega}\varphi\left(  \mathcal{T}\left(
\cdot,\omega\right)  ,\mathcal{T}\left(  \mathcal{T}\left(  \cdot
,\omega\right)  ,\omega^{\prime}\right)  \right)  \Pi\left(  \omega
|X_{1}\right)  \Pi\left(  \omega^{\prime}|\cdot\right)  \right] \\
&  =\int\mu\left(  dX_{1}\right)  \sum_{\omega,\omega^{\prime}\in\Omega
}\varphi\left(  X_{2},,\mathcal{T}\left(  X_{2},\omega^{\prime}\right)
\right)  \Pi\left(  \omega|X_{1}\right)  \prod\limits_{e\in\mathbf{E}}%
\delta_{\omega_{e}^{\prime},1}\nonumber\\
&  =\int\mu\left(  dX_{1}\right)  \sum_{\omega\in\Omega}\varphi\left(
X_{2},,\mathcal{T}\left(  X_{2},\bar{1}\right)  \right)  \Pi\left(
\omega|X_{1}\right) \nonumber\\
&  =\int\mu\left(  dX_{1}\right)  \sum_{\omega\in\Omega}\varphi\left(
X_{2},X_{2}\right)  \Pi\left(  \omega|X_{1}\right)  =\varphi\left(
X_{2},X_{2}\right)  \ ,\nonumber
\end{align}
we have that the set $\mathfrak{I}_{\mathbf{T}}$ of invariant probability
measures under $\mathbf{T}$ is the weak limit of convex combinations of
elements of the set $\left\{  \delta_{\mathcal{I}}^{\left(  X,X\right)
}\right\}  _{X\in\mathcal{I}}\subset\mathfrak{P}\left(  \Xi^{2},\mathcal{B}%
\left(  \Xi^{2}\right)  \right)  ,$ where $\delta_{\mathcal{I}}^{\left(
X,X\right)  }:=\delta_{\mathcal{I}}^{X}\otimes\delta_{\mathcal{I}}^{X}.$

\subsection{Emergence of consensus}

Given $X\in\Xi,$ let
\begin{equation}
W\left(  X\right)  :=\max_{u,v\in\mathbf{V}}\left\vert X_{u}-X_{v}\right\vert
\ . \label{W}%
\end{equation}
Since
\begin{equation}
W\left(  \left[  \mathbb{I}-\mathbb{M}\right]  X\right)  =W\left(  X\right)
\ ,
\end{equation}
$W$ is a seminorm on $\mathbb{R}^{\mathbf{V}}$ and therefore induces a norm on
$\mathbb{W}:=\mathbb{R}^{\mathbf{V}}/Ran\mathbb{M}.$

Hence, because $\mathbb{M}\mathcal{I}=\mathcal{I},$ for any $Y\in\mathcal{I},$
we have
\begin{equation}
\left\Vert X-Y\right\Vert =W\left(  X-Y\right)  =W\left(  \left[
\mathbb{I}-\mathbb{M}\right]  \left(  X-Y\right)  \right)  =W\left(  X\right)
\ ,
\end{equation}
which implies
\begin{equation}
dist\left(  \mathcal{I},X\right)  =W\left(  X\right)  \ .
\end{equation}

For any $t\geq0,$ let $W\left(  t\right)  :=W\left(  X\left(  t\right)
\right)  .$ In the following we will prove that the random sequence $\left\{
W\left(  t\right)  \right\}  _{t\geq0}$ converges to zero w.p.1 w.r.t. the
noise, hence proving Theorem \ref{main}.

\begin{definition}
\label{D1}Given $E\subseteq\mathbf{E},$ consider the spanning graph $G\left(
E\right)  =\left(  \mathbf{V},E\right)  .$ We call \emph{pivots} the elements
$w$ of $\mathbf{V}$ such that $\mathcal{N}^{+}\left(  w\right)  =\mathbf{V}$
and denote their collection by $\mathbf{P}\left(  E\right)  .$ Moreover, for
any $\omega\in\Omega,$ we set $\mathbf{P}\left(  \omega\right)  :=\mathbf{P}%
\left(  E\left(  \omega\right)  \right)  $ and define $\Omega_{\mathbf{P}%
}:=\left\{  \omega\in\Omega:\mathbf{P}\left(  \omega\right)  \neq
\varnothing\right\}  .$
\end{definition}

Let us denote by $\gamma$ the r.v.\footnote{Notice that $1-\gamma$ is the
coefficient of ergodicity \cite{Se} of the transition probability matrix of
the Markov chain on $\mathbf{V}^{2}$ whose components are two independent
versions of the Markov chain defined by the transition probability matrix
$\left\{  P_{u,v}\left(  \omega\right)  \right\}  _{u,v\in\mathbf{V}}.$}
\begin{equation}
\Omega\ni\omega\longmapsto\gamma\left(  \omega\right)  :=\min_{u,v\in
\mathbf{V}\ :\ u\neq v}\sum_{w,z\in\mathbf{V}}P_{u,w}\left(  \omega\right)
P_{v,z}\left(  \omega\right)  \wedge P_{u,z}\left(  \omega\right)
P_{v,w}\left(  \omega\right)  \in\lbrack0,1) \label{gamma}%
\end{equation}
and by $\Gamma$ the r.v.
\begin{equation}
\Xi\ni X\longmapsto\Gamma\left(  X\right)  :=\mathbb{E}\left[  \gamma
|X\right]  =\sum_{\omega\in\Omega}\Pi\left(  \omega|X\right)  \gamma\left(
\omega\right)  \in\lbrack0,1)\ . \label{Gamma}%
\end{equation}

\begin{lemma}
Given $\omega\in\Omega,\gamma\left(  \omega\right)  >0$ if and only if
$\mathbf{P}\left(  \omega\right)  $ is not empty.
\end{lemma}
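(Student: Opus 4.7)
The plan is to read off both directions from the combinatorial meaning of the summands defining $\gamma(\omega)$. By (\ref{Pomega}), $P_{v,u}(\omega)>0$ precisely when $u\in\mathcal{N}^{-}(v,\omega)$, so each term $P_{u,w}(\omega)P_{v,z}(\omega)\wedge P_{u,z}(\omega)P_{v,w}(\omega)$ is strictly positive iff all four probabilities are positive, which amounts to $\{w,z\}\subseteq\mathcal{N}^{-}(u,\omega)\cap\mathcal{N}^{-}(v,\omega)$. Since the defining sum consists of nonnegative terms, $\sum_{w,z\in\mathbf{V}}P_{u,w}(\omega)P_{v,z}(\omega)\wedge P_{u,z}(\omega)P_{v,w}(\omega)>0$ iff $\mathcal{N}^{-}(u,\omega)\cap\mathcal{N}^{-}(v,\omega)\neq\varnothing$, and, because $\mathbf{V}$ is finite, $\gamma(\omega)>0$ iff every pair of distinct vertices in $\mathbf{V}$ admits a common in-neighbor in $G(\omega)$.

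For the direction $\Leftarrow$, given any $w^{*}\in\mathbf{P}(\omega)$, Definition \ref{D1} yields $\mathcal{N}^{+}(w^{*},\omega)=\mathbf{V}$, so $(w^{*},v)\in E(\omega)$ and hence $w^{*}\in\mathcal{N}^{-}(v,\omega)$ for every $v\in\mathbf{V}$. Thus $w^{*}$ is a common in-neighbor for every pair $u\neq v$, the diagonal term at $(w,z)=(w^{*},w^{*})$ in the sum contributes $P_{u,w^{*}}(\omega)P_{v,w^{*}}(\omega)>0$, and minimizing over the finitely many pairs gives $\gamma(\omega)\geq\min_{u\neq v}P_{u,w^{*}}(\omega)P_{v,w^{*}}(\omega)>0$.

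The direction $\Rightarrow$ is the delicate one; I would address it by contrapositive, aiming to produce a pair $u\neq v$ with $\mathcal{N}^{-}(u,\omega)\cap\mathcal{N}^{-}(v,\omega)=\varnothing$ whenever $\mathbf{P}(\omega)=\varnothing$, since the first-paragraph characterization then forces $\gamma(\omega)=0$. Because $\mathbf{P}(\omega)=\bigcap_{v\in\mathbf{V}}\mathcal{N}^{-}(v,\omega)$, the hypothesis says that no single vertex lies in every in-neighborhood. The principal obstacle is that, for a generic family of finite subsets, empty global intersection does not automatically yield a pairwise disjoint pair (the usual failure of the Helly property). To bridge this gap I would exploit the structural constraints on the family $\{\mathcal{N}^{-}(v,\omega)\}_{v\in\mathbf{V}}$ imposed by the model, chiefly the self-loop constraint $\omega_{v,v}=1$ forcing $v\in\mathcal{N}^{-}(v,\omega)$ for every $v$, and select $u,v$ as vertices of maximal out-deficiency $\mathbf{V}\setminus\mathcal{N}^{+}(w,\omega)$, so that a pigeonhole or Hall-type selection forces the corresponding in-neighborhoods to be disjoint.
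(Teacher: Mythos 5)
Your first paragraph and your proof of the implication $\mathbf{P}(\omega)\neq\varnothing\Rightarrow\gamma(\omega)>0$ are correct and essentially coincide with the paper's argument: the observation that the $(u,v)$ sum is positive iff $\mathcal{N}^{-}(u,\omega)\cap\mathcal{N}^{-}(v,\omega)\neq\varnothing$, and the lower bound coming from the diagonal term $w=z=w^{*}$ with $w^{*}$ a pivot. The gap is the converse direction, which you do not prove: you only outline a strategy (pick $u,v$ of ``maximal out-deficiency'' and run a pigeonhole or Hall-type selection), and no such strategy can be completed, because the implication you are after is false for $\left\vert \mathbf{V}\right\vert \geq3$ under Definition \ref{D1}, where $\mathcal{N}^{+}(w)$ is the one-step closed outgoing neighbourhood. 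Take $\mathbf{V}=\{1,2,3\}$, all $r_{e}>0$, and $\omega\in\Omega$ whose only open bonds besides the compulsory loops are $(2,1)$, $(3,2)$ and $(1,3)$. Then $\mathcal{N}^{-}(1,\omega)=\{1,2\}$, $\mathcal{N}^{-}(2,\omega)=\{2,3\}$, $\mathcal{N}^{-}(3,\omega)=\{1,3\}$ intersect pairwise, so by your own first-paragraph criterion $\gamma(\omega)>0$; yet $\mathcal{N}^{+}(1,\omega)=\{1,3\}$, $\mathcal{N}^{+}(2,\omega)=\{1,2\}$, $\mathcal{N}^{+}(3,\omega)=\{2,3\}$, hence $\mathbf{P}(\omega)=\varnothing$. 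The self-loop constraint therefore does not restore the Helly property, and no choice of a ``worst'' pair $u,v$ can produce disjoint in-neighbourhoods, since none exist. (Only for $\left\vert \mathbf{V}\right\vert =2$ is the converse trivially true, a common in-neighbour of the unique pair being a pivot.)

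For what it is worth, you have put your finger on exactly the point where the paper's own proof is weakest: there the converse is disposed of by the bare assertion that pairwise nonempty intersection of the sets $\mathcal{N}^{-}(\cdot,\omega)$ is ``equivalent'' to the existence of $\bar{w}$ with $\mathcal{N}^{+}(\bar{w},\omega)=\mathbf{V}$, which is precisely the Helly-type step you flagged, and the example above shows it fails as stated. The correct content of the statement is the equivalence of $\gamma(\omega)>0$ with the property that any two distinct vertices have a common ingoing neighbour in $G(\omega)$; the existence of a pivot is a strictly stronger, merely sufficient, condition. Note that it is the sufficient direction (the one both you and the paper prove correctly) that serves to exhibit configurations with $\Gamma(X)>0$, so the main results are not affected; but as a review of your proposal, the $\Rightarrow$ direction remains an unfilled, and in fact unfillable, gap.
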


\begin{proof}
Let $\omega\in\Omega$ be such that $\mathbf{P}\left(  \omega\right)
\neq\varnothing.$ Denoting by $u=u\left(  \omega\right)  ,v=v\left(
\omega\right)  $ the elements of $\mathbf{V}$ such that
\begin{gather}
\sum_{w,z\in\mathbf{V}}P_{u,w}\left(  \omega\right)  P_{v,z}\left(
\omega\right)  \wedge P_{u,z}\left(  \omega\right)  P_{v,w}\left(
\omega\right)  =\\
\min_{u^{\prime},u^{\prime\prime}\in\mathbf{V}}\sum_{w,z\in\mathbf{V}%
}P_{u^{\prime},w}\left(  \omega\right)  P_{u^{\prime\prime},z}\left(
\omega\right)  \wedge P_{u^{\prime},z}\left(  \omega\right)  P_{u^{\prime
\prime},w}\left(  \omega\right)  \ ,\nonumber
\end{gather}
\ for any $\bar{w}\in\mathbf{P}\left(  \omega\right)  ,$ we have
\begin{gather}
\gamma\left(  \omega\right)  =\sum_{w,z\in\mathbf{V}}P_{u,w}\left(
\omega\right)  P_{v,z}\left(  \omega\right)  \wedge P_{u,z}\left(
\omega\right)  P_{v,w}\left(  \omega\right)  =\sum_{w\in\mathbf{V}}%
P_{u,w}\left(  \omega\right)  P_{v,w}\left(  \omega\right)  +\\
\sum_{w,z\in\mathbf{V}\ :\ w\neq z}P_{u,w}\left(  \omega\right)
P_{v,z}\left(  \omega\right)  \wedge P_{u,z}\left(  \omega\right)
P_{v,w}\left(  \omega\right)  \geq\left(  P_{u,\bar{w}}\left(  \omega\right)
\wedge P_{v,\bar{w}}\left(  \omega\right)  \right)  ^{2}>0\ .\nonumber
\end{gather}
Conversely by (\ref{Pomega}), $\gamma\left(  \omega\right)  >0$ iff,\ for any
$u,v\in\mathbf{V}$ such that $u\neq v,\mathcal{N}^{-}\left(  u,\omega\right)
\cap\mathcal{N}^{-}\left(  v,\omega\right)  \neq\varnothing,$ which is
equivalent to say that $\gamma\left(  \omega\right)  >0$ implies that there
exists at least one $\bar{w}=\bar{w}\left(  \omega\right)  $ in $\mathbf{V}$
such that, by (\ref{Puv(t)}), $\mathcal{N}^{+}\left(  \bar{w},\omega\right)
=\mathbf{V},$ or, in other words, by Definition \ref{D1}, that $\mathbf{P}%
\left(  \omega\right)  $ is not empty.
\end{proof}

\begin{proposition}
\label{Prop1}The sequence $\left\{  W\left(  t\right)  \right\}  _{t\geq0}$ is
non-increasing hence bounded. Moreover, $\left\{  W\left(  t\right)  \right\}
_{t\geq0}$ is a non-negative $L^{1}$-supermartingale w.r.t. $\left\{
\mathfrak{F}_{t}^{X}\right\}  _{t\geq0},$ therefore $\mathbb{P}_{0}$-a.s.
convergent to a $L^{1}\left(  \mathfrak{X},\mathcal{F},\mathbb{P}_{0}\right)
$ r.v. which we denote by $W.$
\end{proposition}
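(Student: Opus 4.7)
The plan is to exploit the fact that the update rule is a convex combination, so $W$ is monotone \emph{pathwise}, not merely in expectation. The supermartingale property then follows for free, and Doob's theorem closes the argument.

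Concretely, I first observe that for every $\omega\in\Omega$ the matrix $P(\omega)$ defined in (\ref{Pomega}) is row-stochastic: its entries are non-negative and, by construction, $\sum_{u\in\mathbf{V}} P_{v,u}(\omega)=1$ for every $v\in\mathbf{V}$. From (\ref{evX}) and (\ref{Tomega}) we can write $X_v(t+1)=\sum_{u\in\mathbf{V}} P_{v,u}(\omega(t))\,X_u(t)$, so each $X_v(t+1)$ is a convex combination of the $X_u(t)$'s. Setting $m(t):=\min_{u\in\mathbf{V}}X_u(t)$ and $M(t):=\max_{u\in\mathbf{V}}X_u(t)$, I obtain $m(t)\leq X_v(t+1)\leq M(t)$ for every $v$, whence $m(t+1)\geq m(t)$ and $M(t+1)\leq M(t)$. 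Since $W(t)=M(t)-m(t)$ by (\ref{W}), this yields $W(t+1)\leq W(t)$ for every $t\geq 0$, pointwise on the underlying probability space; in particular $\{W(t)\}_{t\geq 0}$ is non-increasing and, because $X(t)\in\Xi=[0,1]^{\mathbf{V}}$, bounded by $W(0)\leq 1$.

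Next, I verify the supermartingale structure. By (\ref{Pomega}) the map $(X,\omega)\mapsto \mathcal{T}(X,\omega)$ is jointly measurable, so $X(t+1)$ is $\mathfrak{F}_{t+1}$-measurable and $W(t+1)$ is an $\mathfrak{F}_{t+1}$-measurable, hence also an $\mathfrak{F}_{t+1}^X$-measurable, random variable bounded by $1$ (so trivially $L^1$). Because $\{W(t)\}_{t\geq 0}$ is adapted to $\{\mathfrak{F}_t^X\}_{t\geq 0}$ and the pathwise inequality $W(t+1)\leq W(t)$ holds $\mathbb{P}_0$-a.s., I immediately get
\begin{equation}
\mathbb{E}_0\!\left[W(t+1)\,\big|\,\mathfrak{F}_t^X\right]\leq W(t)\qquad \mathbb{P}_0\text{-a.s.},
\end{equation}
which is the desired supermartingale property.

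Finally, $\{W(t)\}_{t\geq 0}$ is a non-negative supermartingale with $\sup_{t}\mathbb{E}_0[|W(t)|]\leq 1<\infty$, so by Doob's supermartingale convergence theorem it converges $\mathbb{P}_0$-a.s. to a random variable $W\in L^1(\mathfrak{X},\mathcal{F},\mathbb{P}_0)$, and $W\geq 0$ a.s.\ because each $W(t)$ is. There is essentially no technical obstacle here: the only point that needs a little care is stating the measurability of $W(t+1)$ with respect to $\mathfrak{F}_t^X$ versus $\mathfrak{F}_{t+1}^X$, but since the inequality is pathwise, the conditional expectation step is automatic and the real content of the proposition — the rate of convergence and the identification of the limit with $0$ — is deferred to subsequent results.
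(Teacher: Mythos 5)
Your proposal is correct, but it takes a genuinely more elementary route than the paper's. You use only the row-stochasticity of $P\left(\omega\left(t\right)\right)$: each $X_{v}\left(t+1\right)$ is a convex combination of the $X_{u}\left(t\right)$'s, so the maximum cannot increase, the minimum cannot decrease, and $W\left(t+1\right)\leq W\left(t\right)$ holds pathwise; the supermartingale property is then just monotonicity of conditional expectation applied to a pointwise inequality (using that $W\left(t\right)$ is $\mathfrak{F}_{t}^{X}$-measurable), and Doob's convergence theorem finishes. The paper instead antisymmetrizes $X_{u}\left(t+1\right)-X_{v}\left(t+1\right)$ over pairs $\left(w,z\right)$ and bounds it through the coefficient of ergodicity $\gamma$ of (\ref{gamma}), obtaining the sharper pathwise contraction $W\left(t+1\right)\leq\left(1-\gamma\left(\omega\left(t\right)\right)\right)W\left(t\right)$ of (\ref{decreasing}) and the refined conditional bound $\mathbb{E}_{0}\left[W\left(t+1\right)|\mathfrak{F}_{t}^{X}\right]\leq\left\{1-\mathbb{E}_{0}\left[\gamma\left(\pi_{\omega}\circ\chi_{t}\right)|\mathfrak{F}_{t}^{X}\right]\right\}W\left(t\right)$ of (\ref{superm}). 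For the proposition as stated your argument is complete and simpler; what it does not buy is precisely this quantitative contraction, which the paper reuses (together with Lemma \ref{pred}) to obtain the geometric decay of $\mathbb{E}_{0}\left[W\left(t\right)\right]$ in the proof of Theorem \ref{main} and again in (\ref{ub1}). So if your proof replaced the paper's, the displays (\ref{decreasing}) and (\ref{superm}) would still have to be derived separately before the asymptotic estimates; as a self-contained proof of Proposition \ref{Prop1} alone, however, it is sound.
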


\begin{proof}
By (\ref{evX}), given $u,v\in\mathbf{V}$ such that $u\neq v,$ for $t\geq0,$%
\begin{align}
X_{u}\left(  t+1\right)  -X_{v}\left(  t+1\right)   &  =\left(  X_{u}\left(
t+1\right)  -X_{u}\left(  t\right)  \right)  -\left(  X_{v}\left(  t+1\right)
-X_{v}\left(  t\right)  \right)  +X_{u}\left(  t\right)  -X_{v}\left(
t\right) \\
&  =X_{u}\left(  t\right)  -X_{v}\left(  t\right)  +\frac{\sum_{e\in E_{u}%
^{-}\left(  t\right)  }r_{e}\mathbf{1}_{\left(  u,w\right)  }\left(  e\right)
\left[  X_{w}\left(  t\right)  -X_{u}\left(  t\right)  \right]  }{\sum_{e\in
E_{u}^{-}\left(  t\right)  }r_{e}}\nonumber\\
&  -\frac{\sum_{e^{\prime}\in E_{v}^{-}\left(  t\right)  }r_{e^{\prime}%
}\mathbf{1}_{\left(  v,z\right)  }\left(  e^{\prime}\right)  \left[
X_{z}\left(  t\right)  -X_{v}\left(  t\right)  \right]  }{\sum_{e^{\prime}\in
E_{v}^{-}\left(  t\right)  }r_{e^{\prime}}}\nonumber\\
&  =\frac{\sum_{e\in E_{u}^{-}\left(  t\right)  }r_{e}\mathbf{1}_{\left(
u,w\right)  }\left(  e\right)  }{\sum_{e\in E_{u}^{-}\left(  t\right)  }r_{e}%
}X_{w}\left(  t\right)  -\frac{\sum_{e^{\prime}\in E_{v}^{-}\left(  t\right)
}r_{e^{\prime}}\mathbf{1}_{\left(  v,z\right)  }\left(  e^{\prime}\right)
}{\sum_{e^{\prime}\in E_{v}^{-}\left(  t\right)  }r_{e^{\prime}}}X_{z}\left(
t\right)  \ .\nonumber
\end{align}
By (\ref{Pomega}), setting
\begin{equation}
P_{u,v}\left(  t\right)  :=P_{u,v}\left(  \omega\left(  t\right)  \right)
=\frac{\sum_{e\in E_{v}^{-}\left(  t\right)  }\delta_{e,\left(  v,u\right)
}r_{e}}{\sum_{e\in E_{u}^{-}\left(  t\right)  }r_{e}}=\frac{r_{v,u}%
\mathbf{1}_{\mathcal{N}^{-}\left(  u,t\right)  }\left(  v\right)  }{\sum
_{v\in\mathcal{N}^{-}\left(  u,t\right)  }r_{v,u}} \label{Puv(t)}%
\end{equation}
we can rewrite the previous expression as
\begin{equation}
X_{u}\left(  t+1\right)  -X_{v}\left(  t+1\right)  =\sum_{w\in\mathbf{V}%
}P_{u,w}\left(  t\right)  X_{w}\left(  t\right)  -\sum_{z\in\mathbf{V}}%
P_{v,z}\left(  t\right)  X_{z}\left(  t\right)  \ .
\end{equation}
Since, $\forall t\geq0,\sum_{v\in\mathbf{V}}P_{u,v}\left(  t\right)  =1,$ we
have
\begin{equation}
X_{u}\left(  t+1\right)  -X_{v}\left(  t+1\right)  =\sum_{w,z\in\mathbf{V}%
}P_{u,w}\left(  t\right)  P_{v,z}\left(  t\right)  \left[  X_{w}\left(
t\right)  -X_{z}\left(  t\right)  \right]
\end{equation}
and, since $\left[  X_{w}\left(  t\right)  -X_{z}\left(  t\right)  \right]
=-\left[  X_{z}\left(  t\right)  -X_{w}\left(  t\right)  \right]  \ ,$we
obtain
\begin{equation}
X_{u}\left(  t+1\right)  -X_{v}\left(  t+1\right)  =\frac{1}{2}\sum
_{w,z\in\mathbf{V}}\left\{  P_{u,w}\left(  t\right)  P_{v,z}\left(  t\right)
-P_{u,z}\left(  t\right)  P_{v,w}\left(  t\right)  \right\}  \left[
X_{w}\left(  t\right)  -X_{z}\left(  t\right)  \right]  \ .
\end{equation}

Hence
\begin{equation}
\left\vert X_{u}\left(  t+1\right)  -X_{v}\left(  t+1\right)  \right\vert
\leq\frac{1}{2}\sum_{w,z\in\mathbf{V}}\left\vert P_{u,w}\left(  t\right)
P_{v,z}\left(  t\right)  -P_{u,z}\left(  t\right)  P_{v,w}\left(  t\right)
\right\vert \left\vert X_{w}\left(  t\right)  -X_{z}\left(  t\right)
\right\vert \ .
\end{equation}
Since $\forall a,b\in\mathbb{R},a\wedge b=\frac{a+b-\left\vert a-b\right\vert
}{2},$%
\begin{align}
\left\vert X_{u}\left(  t+1\right)  -X_{v}\left(  t+1\right)  \right\vert  &
\leq\sum_{w,z\in\mathbf{V}}\left\{  \frac{P_{u,w}\left(  t\right)
P_{v,z}\left(  t\right)  +P_{u,z}\left(  t\right)  P_{v,w}\left(  t\right)
}{2}\right. \\
&  \left.  -P_{u,w}\left(  t\right)  P_{v,z}\left(  t\right)  \wedge
P_{u,z}\left(  t\right)  P_{v,w}\left(  t\right)  \right\}  \left\vert
X_{w}\left(  t\right)  -X_{z}\left(  t\right)  \right\vert \nonumber\\
&  \leq\sum_{w,z\in\mathbf{V}}\left\{  \frac{P_{u,w}\left(  t\right)
P_{v,z}\left(  t\right)  +P_{u,z}\left(  t\right)  P_{v,w}\left(  t\right)
}{2}\right. \nonumber\\
&  \left.  -P_{u,w}\left(  t\right)  P_{v,z}\left(  t\right)  \wedge
P_{u,z}\left(  t\right)  P_{v,w}\left(  t\right)  \right\}  \max
_{w,z\in\mathbf{V}}\left\vert X_{w}\left(  t\right)  -X_{z}\left(  t\right)
\right\vert \nonumber\\
&  \leq\left\{  1-\sum_{w,z\in\mathbf{V}}P_{u,w}\left(  t\right)
P_{v,z}\left(  t\right)  \wedge P_{u,z}\left(  t\right)  P_{v,w}\left(
t\right)  \right\}  \max_{w,z\in\mathbf{V}}\left\vert X_{w}\left(  t\right)
-X_{z}\left(  t\right)  \right\vert \ .\nonumber
\end{align}
Therefore, choosing $u,v\in\mathbf{V}$ such that
\begin{equation}
\left\vert X_{u}\left(  t+1\right)  -X_{v}\left(  t+1\right)  \right\vert
=\max_{w,z\in\mathbf{V}}\left\vert X_{w}\left(  t+1\right)  -X_{z}\left(
t+1\right)  \right\vert \ ,
\end{equation}
by (\ref{gamma}) we get
\begin{align}
W\left(  t+1\right)   &  \leq\left\{  1-\min_{u,v\in\mathbf{V}\ :\ u\neq
v}\sum_{w,z\in\mathbf{V}}P_{u,w}\left(  t\right)  P_{v,z}\left(  t\right)
\wedge P_{u,z}\left(  t\right)  P_{v,w}\left(  t\right)  \right\}  W\left(
t\right) \label{decreasing}\\
&  =\left(  1-\gamma\left(  \omega\left(  t\right)  \right)  \right)  W\left(
t\right)  \ ;\nonumber
\end{align}
hence, $\forall t\geq0,W\left(  t\right)  \leq W\left(  0\right)  \leq1.$

Thus, representing the random sequence $\left\{  W\left(  t\right)  \right\}
_{t\geq0}$ as $\left\{  W\circ\mathfrak{x}_{t}\right\}  _{t\geq0},$ from
(\ref{decreasing}) we get
\begin{gather}
\mathbb{E}_{0}\left[  W\left(  t+1\right)  |\mathfrak{F}_{t}^{X}\right]
=\mathbb{E}_{0}\left[  W\circ\mathfrak{x}_{t+1}|\mathfrak{F}_{t}^{X}\right]
=\mathbb{E}_{0}\left[  \mathbb{E}_{0}\left[  W\circ\pi_{X}\circ\chi
_{t+1}|\mathfrak{F}_{t}\right]  |\mathfrak{F}_{t}^{X}\right]  \label{superm}\\
=\mathbb{E}_{0}\left[  \mathbb{E}_{0}\left[  W\circ\pi_{X}\circ\chi_{t+1}%
|\chi_{t}\right]  |\mathfrak{F}_{t}^{X}\right]  \leq\mathbb{E}_{0}\left[
\left\{  1-\gamma\circ\pi_{\omega}\circ\chi_{t}\right\}  W\circ\pi_{X}%
\circ\chi_{t}|\mathfrak{F}_{t}^{X}\right]  \nonumber\\
=\mathbb{E}_{0}\left[  \left\{  1-\gamma\left(  \pi_{\omega}\circ\chi
_{t}\right)  \right\}  W\circ\mathfrak{x}_{t}|\mathfrak{F}_{t}^{X}\right]
\nonumber\\
=\left\{  1-\mathbb{E}_{0}\left[  \gamma\left(  \pi_{\omega}\circ\chi
_{t}\right)  |\mathfrak{F}_{t}^{X}\right]  \right\}  W\left(  t\right)  \leq
W\left(  t\right)  \ ,\nonumber
\end{gather}
that is $\left\{  W\left(  t\right)  \right\}  _{t\geq0}$ is a $L^{1}%
$-supermartingale w.r.t. $\left\{  \mathfrak{F}_{t}^{X}\right\}  _{t\geq0}.$
\end{proof}

\subsubsection{Asymptotic estimate of $\mathbb{E}_{0}\left[  W\left(
t\right)  \right]  $}

\begin{lemma}
\label{pred}The sequence $\left\{  \mathbb{E}_{0}\left[  \gamma|\mathfrak{F}%
_{t}^{X}\right]  \right\}  _{t\geq0}$ is predictable w.r.t. the filtration
$\left\{  \mathfrak{F}_{t}^{X}\right\}  _{t\geq0}.$
\end{lemma}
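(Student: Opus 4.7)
The plan is to show that, for $t \geq 1$, $\mathbb{E}_0[\gamma(\omega(t)) \mid \mathfrak{F}_t^X]$ coincides $\mathbb{P}_0$-almost surely with $\Gamma(X(t-1)) = \Gamma \circ \mathfrak{x}_{t-1}$, which is manifestly $\mathfrak{F}_{t-1}^X$-measurable; this gives predictability. The case $t=0$ is trivial because the initial noise $\omega^0$ is deterministic, so $\gamma(\omega(0))$ is a constant.

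Two ingredients are needed. First, the inclusion $\mathfrak{F}_t^X \subset \mathfrak{F}_{t-1}$: this follows because the belief update (\ref{evX}) is a deterministic function of $(X(t-1), \omega(t-1))$, so $X(t) = \mathcal{T}(X(t-1), \omega(t-1))$ is already $\mathfrak{F}_{t-1}$-measurable, and hence so is every element of $\mathfrak{F}_t^X = \sigma(X(0), \ldots, X(t))$. Second, the identification $\mathbb{E}_0[\gamma(\omega(t)) \mid \mathfrak{F}_{t-1}] = \Gamma(X(t-1))$, $\mathbb{P}_0$-a.s.: this follows from (\ref{TT1}) applied to $\varphi = \gamma \circ \pi_\omega \in BM(\mathfrak{X})$, because by (\ref{TT}) and the definition (\ref{Gamma}) we have $\mathfrak{T}(\gamma \circ \pi_\omega)(X, \omega) = \sum_{\omega' \in \Omega} \gamma(\omega') \Pi(\omega' \mid X) = \Gamma(X)$, independently of $\omega$.

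Combining these via the tower property,
\begin{equation}
\mathbb{E}_0[\gamma(\omega(t)) \mid \mathfrak{F}_t^X] = \mathbb{E}_0\bigl[\mathbb{E}_0[\gamma(\omega(t)) \mid \mathfrak{F}_{t-1}] \bigm\vert \mathfrak{F}_t^X\bigr] = \mathbb{E}_0[\Gamma(X(t-1)) \mid \mathfrak{F}_t^X] = \Gamma(X(t-1)),
\end{equation}
where the last equality holds since $\Gamma(X(t-1))$ is $\mathfrak{F}_{t-1}^X \subset \mathfrak{F}_t^X$-measurable. The only conceptual point to get right is the inclusion $\mathfrak{F}_t^X \subset \mathfrak{F}_{t-1}$, which reflects the fact that in this model the randomness enters only through the edge updates while the belief dynamics is a deterministic function of $\chi_{t-1}$; no genuine obstacle is expected.
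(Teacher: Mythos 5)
Your proof is correct and takes essentially the same route as the paper's: both reduce the claim to the identity $\mathbb{E}_{0}\left[ \gamma\left( \omega\left( t\right) \right) |\mathfrak{F}_{t}^{X}\right] =\Gamma\left( \mathfrak{x}_{t-1}\right) $, which is $\mathfrak{F}_{t-1}^{X}$-measurable. The only difference is that you spell out the step the paper asserts directly, namely $\mathbb{E}_{0}\left[ \gamma\left( \mathfrak{w}_{t}\right) |\mathfrak{F}_{t}^{X}\right] =\mathbb{E}_{0}\left[ \gamma\left( \mathfrak{w}_{t}\right) |\mathfrak{x}_{t-1}\right] $, via the inclusion $\mathfrak{F}_{t}^{X}\subset\mathfrak{F}_{t-1}$ (which on the canonical space holds modulo $\mathbb{P}_{0}$-null sets, since $X\left( t\right) =\mathcal{T}\left( \chi_{t-1}\right) $ $\mathbb{P}_{0}$-a.s.), together with (\ref{TT1}) and the tower property.
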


\begin{proof}
For any $t\geq1,$ by (\ref{Pomega}),%
\begin{gather}
\mathbb{E}_{0}\left[  \gamma|\mathfrak{F}_{t}^{X}\right]  =\mathbb{E}%
_{0}\left[  \left[  \gamma\left(  \pi_{\omega}\circ\chi_{t}\right)
\right\vert \mathfrak{F}_{t}^{X}\right]  =\mathbb{E}_{0}\left[  \left[
\gamma\left(  \mathfrak{w}_{t}\right)  \right\vert \mathfrak{x}_{t-1}\right]
=\\
\sum_{\omega\in\Omega}\Pi\left(  \omega|\mathfrak{x}_{t-1}\right)
\min_{u,v\in\mathbf{V}\ :\ u\neq v}\sum_{w,z\in\mathbf{V}}P_{u,w}\left(
\omega\right)  P_{v,z}\left(  \omega\right)  \wedge P_{u,z}\left(
\omega\right)  P_{v,w}\left(  \omega\right)  =\Gamma\left(  \mathfrak{x}%
_{t-1}\right)  \ .\nonumber
\end{gather}

\end{proof}

Let us set
\begin{align}
\mathfrak{X}\ni\left(  X,\omega\right)  \longmapsto\Pi\left(  \omega|W\left(
X\right)  \right)   &  :=\prod\limits_{e\in\mathbf{E}}\left[  \delta
_{\omega_{e},1}p\left(  W\left(  X\right)  \right)  +\delta_{\omega_{e}%
,0}\left(  1-p\left(  W\left(  X\right)  \right)  \right)  \right]
\label{pomega2}\\
&  =\prod\limits_{e\in\mathbf{E}}\left[  \omega_{e}p\left(  W\left(  X\right)
\right)  +\left(  1-\omega_{e}\right)  \left(  1-p\left(  W\left(  X\right)
\right)  \right)  \right]  \in\left[  0,1\right]  \ .\nonumber
\end{align}

Since $\Omega$ is a poset w.r.t. the partial order relation: $\omega\leq
\omega^{\prime}$ if, $\forall e\in\mathbf{E},\omega_{e}\leq\omega_{e}^{\prime
},$ we have

\begin{lemma}
\label{L1}For any $X\in\Xi,\Pi\left(  \cdot|X\right)  \overset{st}{\geq}%
\Pi\left(  \cdot|W\left(  X\right)  \right)  .$ Moreover, for any $t\geq
0,\Pi\left(  \cdot|W\left(  t+1\right)  \right)  \overset{st}{\geq}\Pi\left(
\cdot|W\left(  t\right)  \right)  .$
\end{lemma}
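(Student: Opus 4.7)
The plan is to exploit the product structure of both $\Pi(\cdot|X)$ and $\Pi(\cdot|W(X))$ on $\Omega$ and reduce stochastic domination on the product to coordinate-wise domination of Bernoulli marginals.

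First I would observe that, by (\ref{pomega1}) and (\ref{pomega2}), both $\Pi(\cdot|X)$ and $\Pi(\cdot|W(X))$ are products of independent Bernoulli laws indexed by $e\in\mathbf{E}$, with the diagonal entries $\omega_{u,u}$ deterministically set to $1$ in both cases (since $p(0)=1$ and $|\Delta_{(u,u)}X|=0$). For $e=(u,v)$ with $u\neq v$ the parameter is $p(|\Delta_e X|)$ in the first measure and $p(W(X))$ in the second. By the definition of $W$ in (\ref{W}), $|\Delta_e X|\leq W(X)$ for every $e\in\mathbf{E}$; since $p$ is non-increasing, it follows that $p(|\Delta_e X|)\geq p(W(X))$ edge-by-edge.

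Next I would invoke the standard fact that, for product measures on $\{0,1\}^{\mathbf{E}}$ ordered by the partial order $\omega\leq\omega'$ iff $\omega_e\leq\omega'_e$ for all $e$, coordinate-wise domination of Bernoulli marginals implies stochastic domination of the joint law. Concretely, sampling a family of i.i.d. uniforms $\{U_e\}_{e\in\mathbf{E}}$ on $[0,1]$ and setting $\omega_e:=\mathbf{1}_{\{U_e\leq p(|\Delta_e X|)\}}$ and $\omega'_e:=\mathbf{1}_{\{U_e\leq p(W(X))\}}$ produces a coupling with $\omega'\leq \omega$ almost surely, marginals $\Pi(\cdot|X)$ and $\Pi(\cdot|W(X))$ respectively, whence integrating any bounded non-decreasing $\varphi$ against the coupling yields $\Pi(\cdot|X)[\varphi]\geq\Pi(\cdot|W(X))[\varphi]$, i.e.\ the claimed stochastic domination.

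For the second assertion I would repeat the same coupling argument, this time comparing the Bernoulli parameters $p(W(t+1))$ and $p(W(t))$. By Proposition \ref{Prop1}, $W(t+1)\leq W(t)$ pathwise (the inequality was derived deterministically in (\ref{decreasing}), not only in expectation), so monotonicity of $p$ gives $p(W(t+1))\geq p(W(t))$, and the same thresholding coupling with common uniforms $\{U_e\}$ delivers $\Pi(\cdot|W(t+1))\overset{st}{\geq}\Pi(\cdot|W(t))$.

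I expect no substantive obstacle: both parts reduce to the elementary product-measure coupling, and the only ingredients beyond it are the monotonicity of $p$, the trivial bound $|\Delta_e X|\leq W(X)$, and the monotonicity of $W(t)$ already established in Proposition \ref{Prop1}. The mildly delicate point worth stating explicitly is that $W(t+1)\leq W(t)$ has been proven pathwise, so the domination in the second part holds realization by realization of the noise up to time $t$, which is what one needs when feeding this lemma into later arguments comparing expectations of non-decreasing functionals of $\omega(t+1)$.
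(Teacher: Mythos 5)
Your proof is correct, but it takes a different route from the paper's. The paper establishes the domination via Holley's inequality: it notes that $\Pi\left(\cdot|X\right)$ and $\Pi\left(\cdot|W\left(X\right)\right)$ are irreducible product measures, reduces the Holley condition to the single-edge inequality $p\left(\left\vert \Delta_{e}X\right\vert\right)\left(1-p\left(W\left(X\right)\right)\right)\geq p\left(W\left(X\right)\right)\left(1-p\left(\left\vert \Delta_{e}X\right\vert\right)\right)$, and deduces this from $\left\vert \Delta_{e}X\right\vert \leq W\left(X\right)$ together with the monotonicity of $p$; the second assertion is handled the same way using the pathwise bound $W\left(t+1\right)\leq W\left(t\right)$ from (\ref{decreasing}). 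You instead exploit the product structure directly and build the explicit monotone coupling with common uniforms $\left\{U_{e}\right\}_{e\in\mathbf{E}}$, which immediately yields the domination for any bounded non-decreasing functional. The two arguments hinge on the identical elementary fact $p\left(\left\vert \Delta_{e}X\right\vert\right)\geq p\left(W\left(X\right)\right)$ (respectively $p\left(W\left(t+1\right)\right)\geq p\left(W\left(t\right)\right)$), so neither is deeper than the other; your coupling is more elementary and sidesteps the irreducibility/positivity hypotheses that the Holley criterion requires (a point the paper has to assert, and which is slightly delicate here since $p\left(0\right)=1$ makes some marginals degenerate), while the paper's Holley route is the standard general-purpose tool and would survive a modification in which the conditional law of the edges were no longer a product measure. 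Your closing remark that the second domination holds realization by realization, because (\ref{decreasing}) is a deterministic inequality given $\omega\left(t\right)$, is accurate and is indeed the form in which the lemma is used in the proof of Theorem \ref{main}.
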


\begin{proof}
Let us consider first the statement $\Pi\left(  \cdot|X\right)  \overset
{st}{\geq}\Pi\left(  \cdot|W\left(  X\right)  \right)  .$ For $X\in
\mathcal{I},$ by (\ref{pomega1}) and (\ref{pomega2}) $\Pi\left(
\cdot|X\right)  $ and $\Pi\left(  \cdot|W\left(  X\right)  \right)  $
coincide. Let now $X\in\mathcal{X}.$ By (\ref{pomega1}) and (\ref{pomega2})
$\Pi\left(  \cdot|\cdot\right)  $ is irreducible, then to prove $\Pi\left(
\cdot|X\right)  \overset{st}{\geq}\Pi\left(  \cdot|W\left(  X\right)  \right)
$ is enough to prove that the Holley inequality is satisfied, namely
\begin{equation}
\Pi\left(  \omega\vee\omega^{\prime}|X\right)  \Pi\left(  \omega\wedge
\omega^{\prime}|W\left(  X\right)  \right)  \geq\Pi\left(  \omega|X\right)
\Pi\left(  \omega^{\prime}|W\left(  X\right)  \right)  \;,\quad\omega
,\omega^{\prime}\in\Omega\ ,
\end{equation}
where $\omega\vee\omega^{\prime}\in\Omega$ is such that $\forall
e\in\mathbf{E},\left(  \omega\vee\omega^{\prime}\right)  _{e}=\omega_{e}%
\vee\omega_{e}^{\prime}$ and $\omega\wedge\omega^{\prime}\in\Omega$ is such
that $\forall e\in\mathbf{E},\left(  \omega\wedge\omega^{\prime}\right)
_{e}=\omega_{e}\wedge\omega_{e}^{\prime}.$ This is equivalent to prove that,
for any $e,f\in\mathbf{E},$%
\begin{equation}
\Pi\left(  \omega^{\left\{  e\right\}  }|X\right)  \Pi\left(  \omega_{\left\{
e\right\}  }|W\left(  X\right)  \right)  \geq\Pi\left(  \omega^{\left\{
e\right\}  }|W\left(  X\right)  \right)  \Pi\left(  \omega_{\left\{
e\right\}  }|X\right)  \label{H1}%
\end{equation}
and
\begin{equation}
\Pi\left(  \omega^{\left\{  ef\right\}  }|X\right)  \Pi\left(  \omega
_{\left\{  ef\right\}  }|W\left(  X\right)  \right)  \geq\Pi\left(
\omega_{\left\{  f\right\}  }^{\left\{  e\right\}  }|W\left(  X\right)
\right)  \Pi\left(  \omega_{\left\{  e\right\}  }^{\left\{  f\right\}
}|X\right)  \ , \label{H2}%
\end{equation}
where, for any $E\subset\mathbf{E},\omega^{E}\in\Omega$ is such that $\forall
e\in\mathbf{E},\omega_{e}^{E}:=\omega_{e}\mathbf{1}_{E^{c}}\left(  e\right)
+\mathbf{1}_{E}\left(  e\right)  $ and $\omega_{E}\in\Omega$ is such that
$\forall e\in\mathbf{E},\left(  \omega_{E}\right)  _{e}:=\omega_{e}%
\mathbf{1}_{E^{c}}\left(  e\right)  $ (see e.g. \cite{Gr} Theorem 2.3). But,
by (\ref{pomega1}) and (\ref{pomega2}), $\Pi\left(  \cdot|X\right)  $ and
$\Pi\left(  \cdot|W\left(  X\right)  \right)  $ are product measures, then
(\ref{H1}) becomes
\[
\Pi\left(  \omega_{e}=1|X\right)  \Pi\left(  \omega_{e}=0|W\left(  X\right)
\right)  \geq\Pi\left(  \omega_{e}=1|W\left(  X\right)  \right)  \Pi\left(
\omega_{e}=0|X\right)  \ ,
\]
which can be rewritten as
\begin{equation}
p\left(  \left\vert \Delta_{e}X\right\vert \right)  \left(  1-p\left(
W\left(  X\right)  \right)  \right)  \geq p\left(  W\left(  X\right)  \right)
\left(  1-p\left(  \left\vert \Delta_{e}X\right\vert \right)  \right)
\label{H1'}%
\end{equation}
and (\ref{H2}) becomes
\begin{gather}
\Pi\left(  \omega_{e}=1|X\right)  \Pi\left(  \omega_{f}=1|X\right)  \Pi\left(
\omega_{e}=0|W\left(  X\right)  \right)  \Pi\left(  \omega_{f}=0|W\left(
X\right)  \right)  \geq\\
\Pi\left(  \omega_{e}=1|W\left(  X\right)  \right)  \Pi\left(  \omega
_{f}=0|W\left(  X\right)  \right)  \Pi\left(  \omega_{e}=0|X\right)
\Pi\left(  \omega_{f}=1|X\right) \nonumber
\end{gather}
which is again (\ref{H1'}). Since by (\ref{W}), for any $e\in\mathbf{E,}%
W\left(  X\right)  \geq\Delta_{e}X$ and since $p:\left[  0,1\right]
\circlearrowleft$ is non increasing, we have, for any $e\in\mathbf{E},p\left(
\left\vert \Delta_{e}X\right\vert \right)  \geq p\left(  W\left(  X\right)
\right)  $ and consequently $\left(  1-p\left(  W\left(  X\right)  \right)
\right)  \geq\left(  1-p\left(  \left\vert \Delta_{e}X\right\vert \right)
\right)  $ which proves (\ref{H1'}).

The proof of the statement $\Pi\left(  \cdot|W\left(  t+1\right)  \right)
\overset{st}{\geq}\Pi\left(  \cdot|W\left(  t\right)  \right)  ,t\geq0,$
follow the same lines of the proof of $\Pi\left(  \cdot|X\right)  \overset
{st}{\geq}\Pi\left(  \cdot|W\left(  X\right)  \right)  $ since, by
(\ref{decreasing}), $W\left(  t+1\right)  \leq W\left(  t\right)  ,$ which
implies $p\left(  W\left(  t+1\right)  \right)  \geq p\left(  W\left(
t\right)  \right)  .$
\end{proof}

\paragraph{Proof of Theorem \ref{main}}

More precisely we prove the following result.

\begin{theorem}
\label{mainp}For any $X^{0}\in\left\{  X\in\Xi:\Gamma\left(  W\left(
X\right)  \right)  >0\right\}  $ and any $\omega_{0}\in\Omega,$ the sequence
of probability measures $\left\{  \mu_{0}^{t}\right\}  _{t\geq0}$ on $\left(
\Xi,\mathcal{B}\left(  \Xi\right)  \right)  $ such that
\begin{equation}
\mathcal{B}\left(  \Xi\right)  \ni A\longmapsto\mu_{0}^{t}\left(  A\right)
:=\mathbb{P}_{0}\left\{  \mathbf{x}\in\mathfrak{X}^{\mathbb{Z}_{+}}:\pi
_{X}\circ\chi_{t}\left(  \mathbf{x}\right)  \in A\right\}  \in\left[
0,1\right]  \ ,
\end{equation}
converges to a probability measure $\mu_{0}^{\infty}$ supported on
$\mathcal{I}.$
\end{theorem}

\begin{proof}
Since $\gamma$ is an non-decreasing function, by (\ref{Gamma}) and by the
previous lemma we have that
\begin{equation}
\Gamma\left(  X\right)  \geq\sum_{\omega\in\Omega}\Pi\left(  \omega|W\left(
X\right)  \right)  \gamma\left(  \omega\right)  =:\Gamma\left(  W\left(
X\right)  \right)  \label{GammaW}%
\end{equation}
and, for any $t\geq0,\Gamma\left(  W\left(  t+1\right)  \right)  \geq
\Gamma\left(  W\left(  t\right)  \right)  .$ Then, by (\ref{superm}) and Lemma
\ref{pred} we get
\begin{align}
\mathbb{E}_{0}\left[  W\left(  t\right)  \right]   &  =\mathbb{E}_{0}\left[
\mathbb{E}_{0}\left[  W\circ\pi_{X}\circ\chi_{t}|\mathfrak{F}_{t-1}%
^{X}\right]  \right]  \leq\mathbb{E}_{0}\left[  \left(  1-\mathbb{E}%
_{0}\left[  \gamma|\mathfrak{F}_{t-1}^{X}\right]  \right)  W\circ\pi_{X}%
\circ\chi_{t-1}\right]  \label{mainest}\\
&  =\mathbb{E}_{0}\left[  \left(  1-\Gamma\left(  X\left(  t-2\right)
\right)  \right)  W\left(  t-1\right)  \right]  \leq\mathbb{E}_{0}\left[
\left(  1-\Gamma\left(  W\left(  t-2\right)  \right)  \right)  W\left(
t-1\right)  \right]  \nonumber\\
&  =\mathbb{E}_{0}\left[  \left(  1-\Gamma\left(  W\circ\pi_{X}\circ\chi
_{t-2}\right)  \right)  \mathbb{E}_{0}\left[  W\circ\pi_{X}\circ\chi
_{t-1}|\mathfrak{F}_{t-2}^{X}\right]  \right]  \nonumber\\
&  \leq\mathbb{E}_{0}\left[  \left(  1-\Gamma\left(  W\circ\pi_{X}\circ
\chi_{t-2}\right)  \right)  \left(  1-\mathbb{E}_{0}\left[  \gamma
|\mathfrak{F}_{t-2}^{X}\right]  \right)  W\circ\pi_{X}\circ\chi_{t-2}\right]
\nonumber\\
&  =\mathbb{E}_{0}\left[  \left(  1-\Gamma\left(  W\left(  t-2\right)
\right)  \right)  \left(  1-\Gamma\left(  X\left(  t-2\right)  \right)
\right)  W\left(  t-2\right)  \right]  \nonumber\\
&  \leq\mathbb{E}_{0}\left[  \left(  1-\Gamma\left(  W\left(  t-2\right)
\right)  \right)  ^{2}W\left(  t-2\right)  \right]  \leq\mathbb{E}_{0}\left[
\left(  1-\Gamma\left(  W\left(  t-3\right)  \right)  \right)  ^{2}W\left(
t-2\right)  \right]  \ .\nonumber
\end{align}
Iterating this inequality, after $k$ steps, with $k\leq t,$ we obtain
\begin{equation}
\mathbb{E}_{0}\left[  W\left(  t\right)  \right]  \leq\mathbb{E}_{0}\left[
\left(  1-\Gamma\left(  W\left(  t-k\right)  \right)  \right)  ^{k-1}W\left(
t-k+1\right)  \right]
\end{equation}
which, by (\ref{decreasing}) implies
\begin{equation}
\mathbb{E}_{0}\left[  W\left(  t\right)  \right]  \leq\mathbb{E}_{0}\left[
W\left(  1\right)  \right]  \left(  1-\Gamma\left(  W\left(  X^{0}\right)
\right)  \right)  ^{t-1}\leq W\left(  X^{0}\right)  \left(  1-\Gamma\left(
W\left(  X^{0}\right)  \right)  \right)  ^{t}\ .\label{expconv}%
\end{equation}
Therefore, for any $X^{0}\in\left\{  X\in\Xi:\Gamma\left(  W\left(  X\right)
\right)  >0\right\}  ,$ since $W\left(  X^{0}\right)  $ and for any
$\varepsilon>0$ the Markov inequality implies
\begin{equation}
\mathbb{P}_{0}\left\{  W\left(  t\right)  >\varepsilon\right\}  \leq
\frac{\mathbb{E}_{0}\left[  W\left(  t\right)  \right]  }{\varepsilon}%
\leq\varepsilon^{-1}\left(  1-\Gamma\left(  W\left(  X^{0}\right)  \right)
\right)  ^{t}\ ,
\end{equation}
by the Borel-Cantelli Lemma $\left\{  W\left(  t\right)  \right\}  _{t\geq0}$
converges to zero $\mathbb{P}_{0}$-a.s., that is\linebreak$\mu_{0}^{\infty
}:=\lim_{t\rightarrow\infty}\mathbb{P}_{0}\left\{  X\left(  t\right)  \in
\cdot\right\}  $ is supported on $\mathcal{I}.$
\end{proof}

\begin{remark}
We stress that this result give no information on the common value of the
beliefs when consensus is reached.
\end{remark}

\subsection{Convergence to the stationary measure of $\left\{  \chi
_{t}\right\}  _{t\in\mathbb{Z}_{+}}$ and $\left\{  \mathfrak{y}_{t}\right\}
_{t\in\mathbb{Z}_{+}}$}

Given $X\in\Xi,$ let us set $X=\left(  U,V\right)  $ such that $U:=\mathbb{M}%
X,V:=\left(  \mathbb{I}-\mathbb{M}\right)  X$ and consider the random
processes $\left\{  \mathfrak{u}_{t}\right\}  _{t\in\mathbb{Z}_{+}}$ and
$\left\{  \mathfrak{v}_{t}\right\}  _{t\in\mathbb{Z}_{+}}$ such that $\forall
t\geq0,\mathfrak{u}_{t}:=\mathbb{M}\mathfrak{x}_{t}$ and $\mathfrak{v}%
_{t}:=\left(  \mathbb{I}-\mathbb{M}\right)  \mathfrak{x}_{t}.$ From
(\ref{pomega1}),(\ref{TT}) and (\ref{decT}), for any bounded measurable
function $\varphi$ on $\mathcal{I}\times\mathcal{X}\times\Omega,$%
\begin{equation}
\mathfrak{T}\varphi\left(  U,V,\omega\right)  =\sum_{\omega^{\prime}\in\Omega
}\varphi\left(  U+\mathbb{M}\mathcal{T}\left(  V,\omega\right)  ,\left(
\mathbb{I}-\mathbb{M}\right)  \mathcal{T}\left(  V,\omega\right)
,\omega^{\prime}\right)  \Pi\left(  \omega^{\prime}|V\right)  \ . \label{TT2}%
\end{equation}
Hence, $\left\{  \mathfrak{z}_{t}\right\}  _{t\in\mathbb{Z}_{+}}$ such that,
$\forall t\geq0,\mathfrak{z}_{t}:=\left(  \mathfrak{v}_{t},\mathfrak{w}%
_{t}\right)  ,$ is an homogeneous Markov process.

We can rephrase (\ref{expconv}) and therefore the content of Theorem
\ref{main} in terms of exponential (more correctly geometric since
$t\in\mathbb{Z}_{+}$) convergence to an element of the set of the invariant
measures of the Markov chains defined by the transition operators
$\mathfrak{T}$ and $\mathbf{T}.$ More precisely, for any $\varepsilon>0$ and
$t>0,$ given $\chi^{0}=\left(  X^{0},\omega^{0}\right)  \in\left\{  X\in
\Xi:\Gamma\left(W\left(  X\right)\right)  >0\right\}  \times\Omega,$ by (\ref{dI}) $\left\{
\left\Vert \left(  \mathbb{I}-\mathbb{M}\right)  \mathfrak{x}_{t}\right\Vert
>\varepsilon\right\}  \subseteq\left\{  W\left(  t\right)  >\varepsilon
\right\}  .$ Hence, by Theorem \ref{main}, $\left\{  \mathfrak{v}_{t}\right\}
_{t\in\mathbb{Z}_{+}}$ converges to zero $\mathbb{P}_{0}-a.s.$ and, by
(\ref{TT2}), $\left\{  \mathfrak{w}_{t}\right\}  _{t\in\mathbb{Z}_{+}}$
converges to $\bar{1},\mathbb{P}_{0}-a.s..$ But, since, by (\ref{TT}), for any
$Y\in\mathcal{I},\omega\in\Omega,\mathfrak{T}\varphi\left(  Y,\omega\right)
=\varphi\left(  Y,\bar{1}\right)  ,\left\{  \mathfrak{u}_{t}\right\}
_{t\in\mathbb{Z}_{+}}$ converges $\mathbb{P}_{0}-a.s.$ to an element of
$\mathcal{I}$ which we denote by $X^{\infty}.$

Let us introduce on $\Omega$ the metric
\begin{equation}
\Omega\times\Omega\ni\left(  \omega,\omega^{\prime}\right)  \longmapsto
\mathbf{d}\left(  \omega,\omega^{\prime}\right)  :=\frac{1}{\left\vert
\mathbf{E}\right\vert }\sum_{e\in\mathbf{E}}\left(  1-\delta_{\omega
_{e},\omega_{e}^{\prime}}\right)  \in\left[  0,1\right]  \ .
\end{equation}

\begin{lemma}
From (\ref{W}), for any $\left(  X^{0},\omega^{0}\right)  \in\mathfrak{X}$ and
$t\geq1,$ we have
\begin{equation}
\mathbb{E}\left[  \mathbf{d}\left(  \mathfrak{w}_{t},\bar{1}\right)  |\left(
X^{0},\omega^{0}\right)  \right]  \leq\mathbb{E}\left[  \left(  1-p\left(
W\circ\mathfrak{x}_{t-1}\right)  \right)  |\left(  X^{0},\omega^{0}\right)
\right]  \ .
\end{equation}

\end{lemma}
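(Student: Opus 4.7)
The plan is to unfold both sides using the definitions and then exploit the monotonicity of $p$ together with the bound $|\Delta_e X| \leq W(X)$.

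First I would rewrite the left-hand side. By the definition of $\mathbf{d}$,
\begin{equation}
\mathbf{d}(\mathfrak{w}_t,\bar{1}) = \frac{1}{|\mathbf{E}|}\sum_{e\in\mathbf{E}}\bigl(1-\delta_{\omega_e(t),1}\bigr) = \frac{1}{|\mathbf{E}|}\sum_{e\in\mathbf{E}}\mathbf{1}_{\{\omega_e(t)=0\}},
\end{equation}
so that $\mathbb{E}[\mathbf{d}(\mathfrak{w}_t,\bar{1})\,|\,(X^0,\omega^0)] = \frac{1}{|\mathbf{E}|}\sum_{e\in\mathbf{E}}\mathbb{P}\{\omega_e(t)=0\,|\,(X^0,\omega^0)\}$. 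Next I would condition on $\mathfrak{F}^X_{t-1}$ via the tower property and use (\ref{pomega}), together with the fact that given $\mathfrak{x}_{t-1}$ the $\omega_e(t)$'s are conditionally independent and independent of $\mathfrak{F}^\omega_{t-1}$, to get
\begin{equation}
\mathbb{P}\{\omega_e(t)=0\,|\,\mathfrak{F}^X_{t-1}\} = 1-p\bigl(|\Delta_e X(t-1)|\bigr)\qquad \mathbb{P}_0\text{-a.s.}
\end{equation}

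The key step is then the pointwise bound. By definition (\ref{W}) of $W$, for every $e\in\mathbf{E}$ we have $|\Delta_e X(t-1)| \leq W(X(t-1)) = W(t-1)$. Since $p$ is non-increasing on $[0,1]$ this gives $p(|\Delta_e X(t-1)|) \geq p(W(t-1))$, hence
\begin{equation}
1-p\bigl(|\Delta_e X(t-1)|\bigr) \leq 1-p\bigl(W(t-1)\bigr),\qquad \forall e\in\mathbf{E}.
\end{equation}

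Averaging this inequality over $e\in\mathbf{E}$ and combining with the previous display yields
\begin{equation}
\mathbb{E}\bigl[\mathbf{d}(\mathfrak{w}_t,\bar{1})\,\big|\,\mathfrak{F}^X_{t-1}\bigr] = \frac{1}{|\mathbf{E}|}\sum_{e\in\mathbf{E}}\bigl(1-p(|\Delta_e X(t-1)|)\bigr) \leq 1-p\bigl(W\circ\mathfrak{x}_{t-1}\bigr),
\end{equation}
and taking the outer expectation $\mathbb{E}[\,\cdot\,|\,(X^0,\omega^0)]$ concludes the proof. There is no real obstacle here; the only thing worth being careful about is invoking the correct conditioning (going through $\mathfrak{F}^X_{t-1}$ rather than $\mathfrak{w}_{t-1}$) so that the Markov kernel (\ref{pomega}) can be applied directly, which is exactly the observation already used in Remark \ref{Rem1}.
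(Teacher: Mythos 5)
Your proposal is correct and follows essentially the same route as the paper's proof: expand $\mathbf{d}(\mathfrak{w}_t,\bar 1)$ as an average of $(1-\omega_e(t))$ over edges, use the one-step kernel $\Pi(\cdot|X(t-1))$ from (\ref{pomega}) to identify the conditional probability $1-p(|\Delta_e X(t-1)|)$, then apply $|\Delta_e X(t-1)|\leq W(t-1)$ with $p$ non-increasing and take expectations. The only cosmetic difference is that you condition through $\mathfrak{F}^X_{t-1}$ while the paper conditions on $\chi_{t-1}=(\mathfrak{x}_{t-1},\mathfrak{w}_{t-1})$ via the Markov property, which amounts to the same computation.
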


\begin{proof}
For any $\omega\in\Omega,$ we get
\begin{equation}
\mathbf{d}\left(  \omega,\bar{1}\right)  =\frac{1}{\left\vert \mathbf{E}%
\right\vert }\sum_{e\in\mathbf{E}}\left(  1-\delta_{\omega_{e},1}\right)
=\frac{1}{\left\vert \mathbf{E}\right\vert }\sum_{e\in\mathbf{E}}\left(
1-\omega_{e}\right)  \ .
\end{equation}
Hence, by the Markov property, from (\ref{TT}) and (\ref{pomega1}) we have
\begin{align}
\mathbb{E}_{0}\left[  \mathbf{d}\left(  \mathfrak{w}_{t},\bar{1}\right)
\right]   &  =\mathbb{E}_{0}\left[  \mathbb{E}\left[  \mathbf{d}\left(
\mathfrak{w}_{t},\bar{1}\right)  |\left(  \mathfrak{x}_{t-1},\mathfrak{w}%
_{t-1}\right)  \right]  \right] \label{wbound}\\
&  =\mathbb{E}_{0}\left[  \sum_{\omega^{\prime}\in\Omega}\frac{1}{\left\vert
\mathbf{E}\right\vert }\sum_{e\in\mathbf{E}}\left(  1-\omega_{e}^{\prime
}\right)  \Pi\left(  \omega^{\prime}|\mathfrak{x}_{t-1}\right)  \right]
\nonumber\\
&  =\frac{1}{\left\vert \mathbf{E}\right\vert }\sum_{e\in\mathbf{E}}%
\mathbb{E}_{0}\left[  \sum_{\omega^{\prime}\in\Omega}\left(  1-\omega
_{e}^{\prime}\right)  \left(  \omega_{e}^{\prime}p\left(  \left\vert
\Delta_{e}\mathfrak{x}_{t-1}\right\vert \right)  \times\right.  \right.
\nonumber\\
&  \left.  \left.  \times\left(  1-\omega_{e}^{\prime}\right)  \left(
1-p\left(  \left\vert \Delta_{e}\mathfrak{x}_{t-1}\right\vert \right)
\right)  \right)  \right] \nonumber\\
&  =\frac{1}{\left\vert \mathbf{E}\right\vert }\sum_{e\in\mathbf{E}}%
\mathbb{E}_{0}\left[  \left(  1-p\left(  \left\vert \Delta_{e}\mathfrak{v}%
_{t-1}\right\vert \right)  \right)  \right] \nonumber\\
&  \leq\frac{1}{\left\vert \mathbf{E}\right\vert }\sum_{e\in\mathbf{E}%
}\mathbb{E}_{0}\left[  \left(  1-p\left(  W\left(  t-1\right)  \right)
\right)  \right] \nonumber\\
&  \leq\mathbb{E}_{0}\left[  \left(  1-p\left(  W\left(  t-1\right)  \right)
\right)  \right]  \ .\nonumber
\end{align}

\end{proof}

Given a bounded measurable function $\varphi$ on $\mathcal{X}\times\Omega\subset\mathfrak{X,}$ let
\begin{align}
\left\Vert \nabla_{V}\varphi\right\Vert _{1}  &  :=\sup_{\left(  V^{\prime
},\omega\right)  \in\mathcal{X}\times\Omega}\sum_{v\in\mathbf{V}%
}\left\vert \frac{\partial}{\partial V_{v}}\varphi\right\vert \left(
V^{\prime},\omega\right)  \ ,\label{gradXnorm}\\
\left\Vert \varphi\right\Vert _{\Omega}  &  :=\sup_{V\in\mathcal{X}}%
\sup_{\omega,\omega^{\prime}\in\Omega\ :\ \omega\neq\omega^{\prime}}%
\frac{\left\vert \varphi\left(  V,\omega\right)  -\varphi\left(
V,\omega^{\prime}\right)  \right\vert }{\mathbf{d}\left(  \omega
,\omega^{\prime}\right)  } \label{omegaLipnorm}%
\end{align}
and consider the Banach space $\mathbb{L}$ of measurable functions $\varphi$
on $\mathcal{X}\times\Omega,$ with norm
\begin{equation}
\left\Vert \varphi\right\Vert _{L}:=\sup_{\left(  V,\omega\right)  \in
\mathcal{X}\times\Omega}\left\vert \varphi\left(  V,\omega\right)
\right\vert +\left\Vert \nabla_{V}\varphi\right\Vert _{1}+\left\Vert
\varphi\right\Vert _{\Omega}\ . \label{Lnorm}%
\end{equation}
Since for any $X\in\Xi,$%
\begin{align}
\left\Vert \left(  \mathbb{I}-\mathbb{M}\right)  X\right\Vert  &  =\sup
_{v\in\mathbf{V}}\left\vert X_{v}-\left(  \mathbb{M}X\right)  _{v}\right\vert
=\sup_{v\in\mathbf{V}}\left\vert X_{v}-\frac{1}{\left\vert \mathbf{V}%
\right\vert }\sum_{u\in\mathbf{V}}X_{u}\right\vert \label{vbound}\\
&  =\sup_{v\in\mathbf{V}}\left\vert \left(  1-\frac{1}{\left\vert
\mathbf{V}\right\vert }\right)  X_{v}-\frac{1}{\left\vert \mathbf{V}%
\right\vert }\sum_{u\in\mathbf{V}\ :\ u\neq v}X_{u}\right\vert \nonumber\\
&  =\sup_{v\in\mathbf{V}}\left\vert \frac{\left\vert \mathbf{V}\right\vert
-1}{\left\vert \mathbf{V}\right\vert }X_{v}-\frac{1}{\left\vert \mathbf{V}%
\right\vert }\sum_{u\in\mathbf{V}\ :\ u\neq v}X_{u}\right\vert \nonumber\\
&  =\sup_{v\in\mathbf{V}}\left\vert \frac{1}{\left\vert \mathbf{V}\right\vert
}\sum_{u\in\mathbf{V}\ :\ u\neq v}\left(  X_{v}-X_{u}\right)  \right\vert \leq
W\left(  X\right)  \ ,\nonumber
\end{align}
then, for $\varphi\in\mathbb{L},$ by (\ref{vbound}), we have
\begin{align}
\left\vert \varphi\left(  \left(  \mathbb{I}-\mathbb{M}\right)  X,\omega
\right)  -\varphi\left(  0,\omega\right)  \right\vert  &  =\left\vert \int
_{0}^{1}ds\sum_{v\in\mathbf{V}}\left(  \frac{\partial}{\partial V_{v}}%
\varphi\right)  \left(  s\left(  \mathbb{I}-\mathbb{M}\right)  X,\omega
\right)  \left(  \left(  \mathbb{I}-\mathbb{M}\right)  X\right)
_{v}\right\vert \label{Xbound}\\
&  =\left\vert \int_{0}^{1}ds\sum_{v\in\mathbf{V}}\left(  \frac{\partial
}{\partial V_{v}}\varphi\right)  \left(  s\left(  X-\mathbb{M}X\right)
,\omega\right)  \frac{1}{\left\vert \mathbf{V}\right\vert }\sum_{u\in
\mathbf{V}\ :\ u\neq v}\left(  X_{v}-X_{u}\right)  \right\vert \nonumber\\
&  \leq\left\Vert \nabla_{V}\varphi\right\Vert _{1}W\left(  X\right)
\ .\nonumber
\end{align}

\begin{proposition}
Starting from an initial state $\chi^{0}=\left(  X^{0},\omega^{0}\right)
\in\left\{  X\in\Xi:\Gamma\left(W\left(  X\right)\right)  >0\right\}  \times\Omega
\subset\mathfrak{X},$ the Markov chain $\left\{  \chi_{t}\right\}  _{t\geq0}$
weakly converges to the degenerate random vector $\left(  X^{\infty},\bar
{1}\right)  \in\mathcal{I}\times\Omega,$ where $X^{\infty}$ is the
$\mathbb{P}_{0}-a.s.$ limit of the random process $\left\{  \mathfrak{u}%
_{t}\right\}  _{t\in\mathbb{Z}_{+}}.$ Moreover, if $p$ is concave function and
$\lim_{x\downarrow0}\frac{1-p\left(  x\right)  }{x}>0,$ the rate of
convergence is geometric.
\end{proposition}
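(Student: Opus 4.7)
The plan is to decompose $\mathfrak{x}_t = \mathfrak{u}_t + \mathfrak{v}_t$ with $\mathfrak{u}_t := \mathbb{M}\mathfrak{x}_t \in \mathcal{I}$ and $\mathfrak{v}_t := (\mathbb{I}-\mathbb{M})\mathfrak{x}_t$, and to analyse the three components $\mathfrak{u}_t$, $\mathfrak{v}_t$ and $\mathfrak{w}_t$ separately. Exploiting the fact that $\mathcal{T}(\cdot,\omega)$ is linear for fixed $\omega$ and that $\mathcal{T}(\mathbb{M}X,\omega)=\mathbb{M}X$, together with (\ref{decT}), one gets
\begin{equation}
\mathfrak{u}_{t+1}-\mathfrak{u}_t = \mathbb{M}\,\mathcal{T}(\mathfrak{v}_t,\mathfrak{w}_{t+1})\ .
\end{equation}
Since each coordinate of $\mathcal{T}(\mathfrak{v}_t,\mathfrak{w}_{t+1})$ is a convex combination of entries of $\mathfrak{v}_t$ and averaging under $\mathbb{M}$ preserves sup-norm bounds, estimate (\ref{vbound}) gives $\|\mathfrak{u}_{t+1}-\mathfrak{u}_t\|\leq\|\mathfrak{v}_t\|\leq W(t)$.

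Next, I would use (\ref{expconv}) to extract $X^\infty$. That estimate makes $\{\mathbb{E}_0[W(t)]\}_{t\geq0}$ summable, so Fubini gives $\sum_t W(t)<\infty$ $\mathbb{P}_0$-a.s.; the telescoping $\mathfrak{u}_t=\mathfrak{u}_0+\sum_{s<t}(\mathfrak{u}_{s+1}-\mathfrak{u}_s)$ then converges absolutely almost surely, and the limit $X^\infty$ lies in $\mathcal{I}$ since $\mathcal{I}$ is closed in $\Xi$. Summing the tail also produces the geometric bound
\begin{equation}
\mathbb{E}_0\!\left[\|\mathfrak{u}_t-X^\infty\|\right]\leq\sum_{s\geq t}\mathbb{E}_0[W(s)]\leq\frac{W(X^0)\,(1-\Gamma(W(X^0)))^t}{\Gamma(W(X^0))}\ ,
\end{equation}
which, combined with $\mathbb{E}_0[\|\mathfrak{v}_t\|]\leq\mathbb{E}_0[W(t)]\leq W(X^0)(1-\Gamma(W(X^0)))^t$, controls the belief marginal at geometric rate.

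To deal with the noise marginal I would invoke the lemma preceding the proposition, giving $\mathbb{E}_0[\mathbf{d}(\mathfrak{w}_t,\bar 1)]\leq\mathbb{E}_0[1-p(W(t-1))]$. Under the concavity of $p$ together with $p(0)=1$, the chord inequality forces $1-p(x)\leq(1-p(1))\,x$ on $[0,1]$, and the assumed existence of $\lim_{x\downarrow0}(1-p(x))/x$ provides a Lipschitz control near the origin; both lead to $\mathbb{E}_0[\mathbf{d}(\mathfrak{w}_t,\bar 1)]=O((1-\Gamma(W(X^0)))^{t-1})$. In particular $\mathfrak{w}_t\to\bar 1$ in $L^1$, hence in probability, which combined with the a.s. convergence $\mathfrak{x}_t\to X^\infty$ yields $\chi_t\to(X^\infty,\bar 1)$ in probability, and a fortiori weakly.

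The quantitative statement would then be obtained by testing against $\varphi\in\mathbb{L}$ via the splitting
\begin{equation}
\bigl|\varphi(\mathfrak{x}_t,\mathfrak{w}_t)-\varphi(X^\infty,\bar 1)\bigr|\leq\bigl|\varphi(\mathfrak{x}_t,\mathfrak{w}_t)-\varphi(X^\infty,\mathfrak{w}_t)\bigr|+\|\varphi\|_\Omega\,\mathbf{d}(\mathfrak{w}_t,\bar 1)\ ,
\end{equation}
where the first term is estimated by applying (\ref{Xbound}) to $\mathfrak{x}_t-X^\infty$ (whose $\mathcal{I}$-component is $\mathfrak{u}_t-X^\infty$ and whose complementary component is $\mathfrak{v}_t$), yielding a bound of the form $\|\nabla_V\varphi\|_1(\|\mathfrak{u}_t-X^\infty\|+W(t))$. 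Taking expectations and using the three geometric estimates above delivers exponential weak convergence with rate controlled by $1-\Gamma(W(X^0))$. The main obstacle I anticipate is that the target $X^\infty$ is itself a random, history-dependent element of $\mathcal{I}$, so the Lipschitz framework of $\mathbb{L}$ must be used against a moving reference point; the crucial technical input is the absolute summability of the increments $\mathbb{M}\mathcal{T}(\mathfrak{v}_s,\mathfrak{w}_{s+1})$, which is precisely what (\ref{expconv}) delivers and which bridges the qualitative existence of $X^\infty$ with the quantitative rate estimate.
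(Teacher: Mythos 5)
Your proposal is correct and rests on the same ingredients as the paper's proof (the decomposition of $\mathfrak{x}_{t}$ through $\mathbb{M}$, the estimate (\ref{expconv}), the bound (\ref{wbound}) of $\mathbb{E}_{0}\left[  \mathbf{d}\left(  \mathfrak{w}_{t},\bar{1}\right)  \right]  $ by $\mathbb{E}_{0}\left[  1-p\left(  W\left(  t-1\right)  \right)  \right]  ,$ and Lipschitz test functions as in (\ref{Xbound})), but it takes a genuinely sharper route at two points. Where the paper obtains the existence of $X^{\infty}$ rather tersely from the a.s. convergence of $\mathfrak{v}_{t}$ and $\mathfrak{w}_{t},$ and then quantifies the rate only for the marginal $\left(  \mathfrak{v}_{t},\mathfrak{w}_{t}\right)  \rightarrow\left(  0,\bar{1}\right)  $ through observables in $\mathbb{L},$ you construct $X^{\infty}$ explicitly by telescoping the increments $\mathfrak{u}_{t+1}-\mathfrak{u}_{t}=\mathbb{M}\mathcal{T}\left(  \mathfrak{v}_{t},\mathfrak{w}_{t}\right)  $ (note the noise index is $t,$ not $t+1,$ under the convention (\ref{evX}); this is immaterial since your bound $\left\Vert \mathfrak{u}_{t+1}-\mathfrak{u}_{t}\right\Vert \leq W\left(  t\right)  $ is uniform in the noise) and the summability of $\mathbb{E}_{0}\left[  W\left(  t\right)  \right]  ,$ which yields in addition the tail estimate $\mathbb{E}_{0}\left[  \left\Vert \mathfrak{u}_{t}-X^{\infty}\right\Vert \right]  \leq\sum_{s\geq t}\mathbb{E}_{0}\left[  W\left(  s\right)  \right]  $ and hence a geometric rate toward the moving random target $\left(  X^{\infty},\bar{1}\right)  $ itself; this is somewhat more than the paper's displayed estimate delivers. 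For the noise marginal you replace the concavity/Jensen step by the chord bound $1-p\left(  x\right)  \leq\left(  1-p\left(  1\right)  \right)  x,$ valid for concave $p$ with $p\left(  0\right)  =1$; this is cleaner (the inequality $\mathbb{E}_{0}\left[  1-p\left(  W\left(  t-1\right)  \right)  \right]  \leq1-p\left(  \mathbb{E}_{0}\left[  W\left(  t-1\right)  \right]  \right)  $ invoked in the paper is Jensen in the wrong direction for the convex function $1-p$), and it renders the extra hypothesis on $\lim_{x\downarrow0}\left(  1-p\left(  x\right)  \right)  /x$ essentially superfluous, since concavity already forces that limit to be at most $1-p\left(  1\right)  .$ Two minor caveats: when comparing $\varphi\left(  \mathfrak{x}_{t},\mathfrak{w}_{t}\right)  $ with $\varphi\left(  X^{\infty},\mathfrak{w}_{t}\right)  $ you need a Lipschitz bound in all the coordinates of $X,$ whereas (\ref{Xbound}) as stated controls only the variation of the $V$-component at the fixed reference $0,$ so either extend the norm (\ref{Lnorm}) to a full gradient or, as the paper does, restrict the quantitative claim to observables of $\left(  \mathfrak{v}_{t},\mathfrak{w}_{t}\right)  $; and your geometric constants require $\Gamma\left(  W\left(  X^{0}\right)  \right)  >0,$ the same implicit strengthening of $\Gamma\left(  X^{0}\right)  >0$ already present in (\ref{expconv}).
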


\begin{proof}
Given $\varphi\in\mathbb{L},$ by (\ref{Xbound}) and (\ref{wbound}) we have
\begin{gather}
\left\vert \mathbb{E}_{0}\left[  \varphi\left(  \mathfrak{v}_{t}%
,\mathfrak{w}_{t}\right)  \right]  -\varphi\left(  0,\bar{1}\right)
\right\vert \leq\mathbb{E}_{0}\left[  \left\vert \varphi\left(  \mathfrak{v}%
_{t},\mathfrak{w}_{t}\right)  -\varphi\left(  0,\bar{1}\right)  \right\vert
\right] \\
\leq\mathbb{E}_{0}\left[  \left\vert \varphi\left(  \mathfrak{v}%
_{t},\mathfrak{w}_{t}\right)  -\varphi\left(  0,\mathfrak{w}_{t}\right)
\right\vert \right]  +\mathbb{E}_{0}\left[  \left\vert \varphi\left(
0,\mathfrak{w}_{t}\right)  -\varphi\left(  0,\bar{1}\right)  \right\vert
\right] \nonumber\\
\leq\left\Vert \nabla_{X}\varphi\right\Vert _{1}\mathbb{E}_{0}\left[  W\left(
t\right)  \right]  +\left\Vert \varphi\right\Vert _{\Omega}\mathbb{E}%
_{0}\left[  \left(  1-p\left(  W\left(  t-1\right)  \right)  \right)  \right]
\nonumber\\
\leq\left\Vert \varphi\right\Vert _{L}\left(  \mathbb{E}_{0}\left[  W\left(
t\right)  \right]  \vee\mathbb{E}_{0}\left[  \left(  1-p\left(  W\left(
t-1\right)  \right)  \right)  \right]  \right) \nonumber
\end{gather}
which tends to zero in the limit $t\rightarrow\infty$ by Theorem \ref{main}.
Clearly, if $p$ is concave, $1-p$ is convex, hence, by (\ref{expconv}),
\begin{align}
\mathbb{E}_{0}\left[  \left(  1-p\left(  W\left(  t-1\right)  \right)
\right)  \right]   &  \leq\left(  1-p\left(  \mathbb{E}_{0}\left[  W\left(
t-1\right)  \right]  \right)  \right) \\
&  \leq1-p\left(  W\left(  X^{0}\right)  \left(  1-\Gamma\left(W\left(  X^{0}\right)
\right)\right)  ^{t-1}\right)  \ ,\nonumber
\end{align}
therefore, if $\lim_{x\downarrow0}\frac{1-p\left(  x\right)  }{x}$ is positive
the rate of convergence is geometric.
\end{proof}

Similar conclusions hold for the Markov chain $\left\{  \mathfrak{y}%
_{t}\right\}  _{t\geq0}.$ Indeed, by (\ref{TTT}) and (\ref{decT}), setting
$X_{1}=\left(  U_{1},V_{1}\right)  ,X_{2}=\left(  U_{2},V_{2}\right)  ,$ for
any bounded measurable $\varphi:\Xi^{4}\times\Omega\rightarrow\mathbb{R},$%
\begin{align}
\left(  \mathbf{T}\varphi\right)  \left(  U_{1},V_{1},U_{2},V_{2}\right)   &
:=\sum_{\omega,\omega^{\prime}\in\Omega}\varphi\left(  U_{2}+\mathbb{M}%
\mathcal{T}\left(  V_{2},\omega\right)  ,\left(  \mathbb{I}-\mathbb{M}\right)
\mathcal{T}\left(  V_{2},\omega\right)  ,\right. \\
&  \left.  U_{2}+\mathbb{M}\mathcal{T}\left(  \mathcal{T}\left(  V_{2}%
,\omega\right)  ,\omega^{\prime}\right)  ,\left(  \mathbb{I}-\mathbb{M}%
\right)  \mathcal{T}\left(  \mathcal{T}\left(  V_{2},\omega\right)
,\omega^{\prime}\right)  \right)  \times\nonumber\\
&  \times\Pi\left(  \omega|V_{1}\right)  \Pi\left(  \omega^{\prime}%
|V_{2}\right)  \ .\nonumber
\end{align}
Hence, $\left\{  \mathfrak{Z}_{t}\right\}  _{t\geq0}$ such that $\forall
t\geq0,\mathfrak{Z}_{t}:=\left(  \mathfrak{v}_{2t-2},\mathfrak{v}%
_{2t-1}\right)  $ is an homogeneous Markov process.

Moreover, from (\ref{decreasing}), for any $t\geq0$ we get
\begin{align}
\mathbb{E}_{0}\left[  W\circ\mathfrak{x}_{t+1}|\mathfrak{F}_{t-1}^{X}\right]
&  =\mathbb{E}_{0}\left[  \left.  \mathbb{E}_{0}\left[  W\circ\mathfrak{x}%
_{t+1}|\mathfrak{F}_{t}^{X}\right]  \right\vert \mathfrak{F}_{t-1}^{X}\right]
\label{ub1}\\
&  =\mathbb{E}_{0}\left[  \left(  1-\Gamma\circ\mathfrak{x}_{t-1}\right)
W\circ\mathfrak{x}_{t}|\mathfrak{F}_{t-1}^{X}\right] \nonumber\\
&  =\mathbb{E}_{0}\left[  W\circ\mathfrak{x}_{t}|\mathfrak{F}_{t-1}%
^{X}\right]  \left(  1-\Gamma\circ\mathfrak{x}_{t-1}\right) \nonumber\\
&  \leq\left(  1-\Gamma\circ\mathfrak{x}_{t-1}\right)  \left(  1-\Gamma
\circ\mathfrak{x}_{t-2}\right)  W\circ\mathfrak{x}_{t-1}\nonumber\\
&  \leq\left(  1-\Gamma\circ W\circ\mathfrak{x}_{t-1}\right)  \left(
1-\Gamma\circ W\circ\mathfrak{x}_{t-2}\right)  W\circ\mathfrak{x}%
_{t-1}\nonumber\\
&  \leq\left(  1-\Gamma\circ W\circ\mathfrak{x}_{t-2}\right)  ^{2}%
W\circ\mathfrak{x}_{t-1}\ .\nonumber
\end{align}
Hence, setting by (\ref{pi_i}),
\begin{align}
\bar{\Gamma}  &  :=\delta_{t,2s-1}\Gamma\circ\pi_{1}+\delta_{t,2s}\Gamma
\circ\pi_{2}\\
\bar{W}  &  :=\delta_{t,2s-1}W\circ\pi_{1}+\delta_{t,2s}W\circ\pi
_{2}\ ,\;s\geq0
\end{align}
from (\ref{ub1}) we have
\begin{equation}
\mathbb{E}_{0}\left[  \bar{W}\circ\mathfrak{y}_{t+1}|\mathfrak{F}%
_{t}^{\mathfrak{y}}\right]  \leq\left(  1-\bar{\Gamma}\circ\bar{W}\circ\mathfrak{y}%
_{t-1}\right)  ^{2}\bar{W}\circ\mathfrak{y}_{t-1}\;,\ t\geq0\ . \label{ub2}%
\end{equation}
Therefore, proceeding as in (\ref{mainest}), by (\ref{ub2}) we get
\[
\mathbb{E}_{0}\left[  \bar{W}\circ\mathfrak{y}_{t}\right]  \leq\mathbb{E}%
_{0}\left[  \left(  1-\bar{\Gamma}\circ\bar{W}\circ\mathfrak{y}_{t-2}\right)  ^{2}\bar
{W}\circ\mathfrak{y}_{t-2}\right]  \ .
\]
Thus, iterating,
\begin{equation}
\mathbb{E}_{0}\left[  \mathbf{T}^{t}\bar{W}\right]  \leq\left(  1-\bar{\Gamma
}\circ\bar{W}\circ\mathfrak{y}_{0}\right)  ^{2t}\bar{W}\circ\mathfrak{y}_{0}\ .
\label{main1}%
\end{equation}

Let us denote by $\mathcal{L}$ be the Banach space of bounded measurable
functions $\varphi$ on $\mathcal{X}^{2}$ with norm
\begin{equation}
\left\Vert \varphi\right\Vert _{\mathcal{L}}:=\sup_{\left(  V_{1}%
,V_{2}\right)  \in\mathcal{X}^{2}}\left\vert \varphi\left(  V_{1}%
,V_{2}\right)  \right\vert +\left\Vert \nabla\varphi\right\Vert _{1}\ ,
\end{equation}
where%
\begin{equation}
\left\Vert \nabla\varphi\right\Vert _{1}:=\sup_{\left(  V^{\prime}%
,V^{\prime\prime}\right)  \in\mathcal{X}^{2}}\sum_{v\in\mathbf{V}%
}\left[  \left\vert \frac{\partial}{\partial\left(  V_{1}\right)  _{v}}%
\varphi\right\vert \left(  V^{\prime},V^{\prime\prime}\right)  +\left\vert
\frac{\partial}{\partial\left(  V_{2}\right)  _{v}}\varphi\right\vert \left(
V^{\prime},V^{\prime\prime}\right)  \right]  \ .
\end{equation}

\begin{corollary}
\label{gap}The Markov chain $\left\{  \mathfrak{y}_{t}\right\}  _{t\geq0}$
started at $\left(  X^{0},X^{0}\right)  $ where $X^{0}\in\left\{  X\in
\Xi:\Gamma\left(W\left(  X\right)\right)  >0\right\}  $ converges weakly to the degenerate
random vector $\left(  X^{\infty},X^{\infty}\right)  \in\Xi^{2}$ with
geometric rate.
\end{corollary}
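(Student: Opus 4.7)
The plan is to parallel the argument for the proposition immediately above, now applied to the two-step Markov chain $\{\mathfrak{y}_t\}$ and to test functions in $\mathcal{L}$, using the sharper estimate (\ref{main1}).

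First, the qualitative weak convergence is essentially for free. By Theorem \ref{main}, $W(\mathfrak{x}_t) \to 0$ $\mathbb{P}_0$-a.s., and by the preceding proposition $\mathfrak{u}_t = \mathbb{M}\mathfrak{x}_t$ converges $\mathbb{P}_0$-a.s. to some $X^\infty \in \mathcal{I}$. Adding the two contributions gives $\mathfrak{x}_t \to X^\infty$ $\mathbb{P}_0$-a.s., whence $\mathfrak{y}_t = (\mathfrak{x}_{2t}, \mathfrak{x}_{2t+1}) \to (X^\infty, X^\infty)$ $\mathbb{P}_0$-a.s., in particular weakly.

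For the exponential rate I would test against $\varphi \in \mathcal{L}$. Since $\varphi$ is Lipschitz in each coordinate, applying the bound (\ref{Xbound}) to each entry yields
\begin{equation*}
\bigl|\varphi(\mathfrak{y}_t) - \varphi(X^\infty, X^\infty)\bigr| \leq \|\nabla \varphi\|_1 \bigl(\|\mathfrak{x}_{2t} - X^\infty\| + \|\mathfrak{x}_{2t+1} - X^\infty\|\bigr).
\end{equation*}
Writing $\mathfrak{x}_s = \mathfrak{u}_s + \mathfrak{v}_s$ and using $X^\infty \in \mathcal{I}$ together with (\ref{vbound}) produces $\|\mathfrak{x}_s - X^\infty\| \leq W(\mathfrak{x}_s) + \|\mathfrak{u}_s - X^\infty\|$, and the $\mathbb{P}_0$-expectation of the first summand decays geometrically by (\ref{expconv}) (or directly by (\ref{main1})).

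For the second summand I would telescope $\mathfrak{u}_s - X^\infty = -\sum_{k \geq s}(\mathfrak{u}_{k+1} - \mathfrak{u}_k)$ and bound each one-step increment in norm by $\|\mathcal{T}(\mathfrak{x}_k, \mathfrak{w}_{k+1}) - \mathfrak{x}_k\| \leq W(\mathfrak{x}_k)$, the last inequality following because, from (\ref{evX}), each entry of $\mathcal{T}(\mathfrak{x}_k, \mathfrak{w}_{k+1}) - \mathfrak{x}_k$ is a convex combination of differences between pairs of coordinates of $\mathfrak{x}_k$. Taking $\mathbb{E}_0[\cdot]$ and summing the geometric series supplied by (\ref{expconv}) produces an upper bound of the form $C(X^0)(1 - \Gamma(W(X^0)))^s$, which combined with the previous bound yields the claimed geometric rate of weak convergence. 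The main obstacle is precisely this telescoping control of the random limit $X^\infty$: because it is not a martingale limit in a transparent way, one has to extract its approach rate from the already established geometric decay of $\{\mathbb{E}_0[W(\mathfrak{x}_k)]\}_k$ rather than from a direct one-step contraction estimate.
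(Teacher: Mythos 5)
Your argument is correct, but it follows a genuinely different route from the paper's. The paper's own proof is a two-line affair: for $\varphi\in\mathcal{L}$ it applies the Lipschitz estimate of type (\ref{Xbound}) to the fluctuation chain $\mathfrak{Z}_t=(\mathfrak{v}_{2t},\mathfrak{v}_{2t+1})$, getting $\left\vert \mathbb{E}_0\left[\varphi(\mathfrak{Z}_t)\right]-\varphi(0,0)\right\vert \le \left\Vert \varphi\right\Vert_{\mathcal{L}}\,\mathbb{E}_0\left[\bar W\circ\mathfrak{y}_t\right]$, and then invokes (\ref{main1}); the identification of the limit as $(X^\infty,X^\infty)$ is inherited qualitatively from the preceding proposition, so the geometric rate is really only quantified for the collapse onto the consensus set $\mathcal{I}$, not for the approach to the random limit itself. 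You instead control $\left\Vert \mathfrak{x}_s-X^\infty\right\Vert$ directly, splitting it via $\mathfrak{x}_s=\mathfrak{u}_s+\mathfrak{v}_s$ into the fluctuation part, bounded by $W(\mathfrak{x}_s)$ through (\ref{vbound}) and decaying geometrically in mean by (\ref{expconv}) (equivalently (\ref{main1})), and the mean part $\mathfrak{u}_s-X^\infty$, which you telescope and bound increment by increment by $W(\mathfrak{x}_k)$ (note the harmless index slip: $\mathfrak{x}_{k+1}=\mathcal{T}(\mathfrak{x}_k,\mathfrak{w}_k)$, not $\mathfrak{w}_{k+1}$), then sum the geometric series. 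This buys a strictly stronger quantitative statement — a geometric rate for $\mathbb{E}_0\left\Vert \mathfrak{x}_s-X^\infty\right\Vert$, hence for weak convergence of $\mathfrak{y}_t$ to $(X^\infty,X^\infty)$ tested against Lipschitz functions on all of $\Xi^2$ — at the price of the extra telescoping step, whereas the paper's argument is shorter but quantifies only the distance to $\mathcal{I}$. One cosmetic point, shared with the paper: functions in $\mathcal{L}$ are defined on $\Xi_{\mathcal{I}}^2$ while the limit points $(X^\infty,X^\infty)$ (and the paper's $(0,0)$) lie in $\mathcal{I}^2$; the bounded-gradient norm lets $\varphi$ extend continuously to these points, so this does not affect either argument.
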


\begin{proof}
Given $\varphi\in\mathcal{L},$ proceeding as in (\ref{Xbound}), by
(\ref{superm}) and (\ref{main1}), we have
\begin{align}
\left\vert \mathbb{E}_{0}\left[  \varphi\left(  \mathfrak{Z}_{t}\right)
\right]  -\varphi\left(  0,0\right)  \right\vert  &  \leq\mathbb{E}_{0}\left[
\left\vert \varphi\left(  \mathfrak{Z}_{t}\right)  -\varphi\left(  0,0\right)
\right\vert \right] \\
&  \leq\left\Vert \varphi\right\Vert _{\mathcal{L}}\mathbb{E}_{0}\left[
\bar{W}\circ\mathfrak{y}_{t}\right]  \ .\nonumber
\end{align}

\end{proof}

\section{Very large system evolution}

Given $N\in\mathbb{N},$ let $\mathbf{V}_{N}:=\left\{  1,..,N\right\}
\subset\mathbb{N}$ and denote by $\mathbf{E}_{N}$ the subset of $\mathbf{E}%
:=\left\{  \left(  u,v\right)  \in\mathbb{N}\times\mathbb{N}\right\}  $ such
that $\mathbf{E}_{N}:=\left\{  \left(  u,v\right)  \in\mathbf{V}_{N}%
\times\mathbf{V}_{N}\right\}  .$

In this section we set $\Xi:=\left[  0,1\right]  ^{\mathbb{N}}$ and, denoting
by $X_{N}:=\left(  X_{1},..,X_{N}\right)  $ the element of $\Xi_{N}:=\left[
0,1\right]  ^{N}$ representing the restriction of the beliefs configuration
$X\in\Xi$ to $\mathbf{V}_{N},$ we endow $\Xi$ with the metric induced by the
norm $\left\Vert X\right\Vert :=\sum_{N\in\mathbb{N}}2^{-N}\left\Vert
X_{N}\right\Vert _{\infty}.$

With a slight abuse of notation, setting as in (\ref{Omega}) $\Omega:=\left\{
\omega\in\left\{  0,1\right\}  ^{\mathbf{E}}:\forall u\in\mathbf{V}%
\ ,\ \omega_{u,u}=1\right\}  ,$ we denote by $\omega_{N}$ the restriction of
the configuration $\omega\in\Omega$ to $\Omega_{N}:=\left\{  0,1\right\}
^{\mathbf{E}_{N}}$ and, by (\ref{Eomega}), if $E:=E\left(  \omega\right)  ,$
we set $E_{N}:=E\cap\mathbf{E}_{N}.$

Then, for any $X\in\Xi,\overline{\Pi}\left(  \cdot|X\right)  $ denotes the
random field on $\left(  \Omega,\mathcal{C}\left(  \Omega\right)  \right)  $
such that, for any cylinder event $\mathbf{C}_{N}\left(  \omega^{\prime
}\right)  =\left\{  \omega\in\Omega:\omega_{N}=\omega^{\prime}\right\}
,N\geq1,\omega^{\prime}\in\Omega_{N},$%
\begin{equation}
\sum_{\omega\in\Omega}\overline{\Pi}\left(  \omega|X\right)  \mathbf{1}%
_{\mathbf{C}_{N}\left(  \omega^{\prime}\right)  }\left(  \omega\right)
=\Pi\left(  \omega^{\prime}|X_{N}\right)
\end{equation}
where $\Pi\left(  \omega^{\prime}|X_{N}\right)  $ is given by (\ref{pomega1}).
Moreover, for any $M\geq N,$%
\begin{equation}
\sum_{\omega\in\Omega_{M}}\Pi\left(  \omega|X_{M}\right)  \mathbf{1}_{\left\{
\omega\in\Omega_{M}\ :\ \omega_{N}=\omega^{\prime}\right\}  }\left(
\omega\right)  =\sum_{\omega\in\Omega}\overline{\Pi}\left(  \omega|X\right)
\mathbf{1}_{\mathbf{C}_{N}\left(  \omega^{\prime}\right)  }\left(
\omega\right)  =\Pi\left(  \omega^{\prime}|X_{N}\right)  \label{pim1}%
\end{equation}

Setting $\mathbf{V}:=\mathbb{N}$ for notational convenience, let
$\mathcal{K}\left(  \mathbf{V}\right)  $ be the set of the transition
probability kernels on $\left(  \mathbf{V},\mathcal{P}\left(  \mathbf{V}%
\right)  \right)  .$ From (\ref{Pomega}), given the $\mathcal{C}\left(
\Omega\right)  $-measurable function $\Omega\ni\omega\longmapsto P\left(
\omega\right)  \in\mathcal{K}\left(  \mathbf{V}\right)  $ such that
$\forall\omega\in\Omega,$%
\begin{equation}
P_{v,u}\left(  \omega\right)  =\frac{\sum_{e\in E_{v}^{-}\cap E\left(
\omega\right)  }r_{e}\delta_{e,\left(  u,v\right)  }}{\sum_{e\in E_{v}^{-}\cap
E\left(  \omega\right)  }r_{e}}=:\mathfrak{p}_{v,u}\left(  E\left(
\omega\right)  \right)  \in\left[  0,1\right]  \;;\;v,u\in\mathbf{V}\ ,
\end{equation}
we denote as in (\ref{T}) and (\ref{Tomega}) $\mathfrak{X}\ni\left(
X,\omega\right)  \longmapsto\overline{\mathcal{T}}_{\cdot}\left(
X,\omega\right)  \in\Xi$ such that, for any $v\in\mathbf{V}$ and any $\left(
X,\omega\right)  \in\mathfrak{X},\overline{\mathcal{T}}_{v}\left(
X,\omega\right)  :=\sum_{u\in\mathbf{V}}\mathfrak{p}_{v,u}\left(  E\left(
\omega\right)  \right)  X_{u}.$ Then, the operator
\begin{equation}
BM\left(  \mathfrak{X}\right)  \ni\varphi\longmapsto\overline{\mathfrak{T}%
}\varphi\left(  X,\omega\right)  :=\sum_{\omega^{\prime}\in\Omega}%
\varphi\left(  \overline{\mathcal{T}}\left(  X,\omega\right)  ,\omega^{\prime
}\right)  \overline{\Pi}\left(  \omega^{\prime}|X\right)  \in BM\left(
\mathfrak{X}\right)
\end{equation}
represents the transition probability kernel of the homogeneous Markov chain
$\left\{  \chi_{t}\right\}  _{t\geq0}$ on $\left(  \mathfrak{X}^{\mathbb{Z}%
_{+}},\mathfrak{C},\mathbb{P}_{0}\right)  $\ with initial condition $\left(
X^{0},\omega^{0}\right)  \in\mathfrak{X}$ such that, by (\ref{TT}), for any
$\omega^{\prime}\in\Omega_{N},B\in\mathcal{B}\left(  \left[  0,1\right]
^{N}\right)  ,$%
\begin{gather}
\mathbb{P}_{0}\left(  \left\{  \chi_{t+1}\in\mathbf{C}_{N}\left(  B\right)
\times\mathbf{C}_{N}\left(  \omega^{\prime}\right)  \right\}  |\chi
_{t}\right)  =\left(  \overline{\mathfrak{T}}\mathbf{1}_{\mathbf{C}_{N}\left(
B\right)  \times\mathbf{C}_{N}\left(  \omega^{\prime}\right)  }\right)
\left(  \chi_{t}\right) \\
=\Pi\left(  \omega^{\prime}|X_{N}\left(  t\right)  \right)  \mathbf{1}%
_{B}\left(  \overline{\mathcal{T}}_{N}\left(  X\left(  t\right)
,\omega\left(  t\right)  \right)  \right)  \ .\nonumber
\end{gather}
Consequently, the operator
\begin{equation}
\mathfrak{\bar{L}}\left(  \Xi^{2}\right)  \ni\varphi\longmapsto\left(
\overline{\mathbf{T}}\varphi\right)  \left(  X,Y\right)  :=\sum_{\omega
,\omega^{\prime}\in\Omega}\varphi\left(  \overline{\mathcal{T}}\left(
Y,\omega\right)  ,\overline{\mathcal{T}}\left(  \overline{\mathcal{T}}\left(
Y,\omega\right)  ,\omega^{\prime}\right)  \right)  \overline{\Pi}\left(
\omega|X\right)  \overline{\Pi}\left(  \omega^{\prime}|Y\right)
\in\mathfrak{\bar{L}}\left(  \Xi^{2}\right)  \ ,
\end{equation}
defined as in (\ref{TTT}), represents the transition probability kernel of the
homogeneous Markov chain $\left\{  \mathfrak{y}_{t}\right\}  _{t\geq0}$ on
$\left(  \mathfrak{X}^{\mathbb{Z}_{+}},\mathfrak{C},\mathbb{P}_{0}\right)  $
such that, for any $B\in\mathcal{B}\left(  \left[  0,1\right]  ^{2N}\right)
,$%
\begin{gather}
\mathbb{P}_{0}\left(  \left\{  \mathfrak{y}_{t+1}\in\mathbf{C}_{N}\left(
B\right)  \right\}  |\mathfrak{y}_{t}\right)  =\left(  \overline{\mathbf{T}%
}\mathbf{1}_{\mathbf{C}_{N}\left(  B\right)  }\right)  \left(  \mathfrak{y}%
_{t}\right)  =\\
\sum_{\omega,\omega^{\prime}\in\Omega}\mathbf{1}_{B}\left(  \overline
{\mathcal{T}}\left(  X\left(  2t-1\right)  ,\omega\right)  ,\overline
{\mathcal{T}}\left(  \overline{\mathcal{T}}\left(  X\left(  2t-1\right)
,\omega\right)  ,\omega^{\prime}\right)  \right)  \overline{\Pi}\left(
\omega|X\left(  2t-2\right)  \right)  \overline{\Pi}\left(  \omega^{\prime
}|X\left(  2t-1\right)  \right)  =\nonumber\\
\sum_{\omega,\omega^{\prime}\in\Omega}\mathbf{1}_{B}\left(  \overline
{\mathcal{T}}_{N}\left(  X\left(  2t-1\right)  ,\omega\right)  ,\overline
{\mathcal{T}}_{N}\left(  \overline{\mathcal{T}}\left(  X\left(  2t-1\right)
,\omega\right)  ,\omega^{\prime}\right)  \right)  \overline{\Pi}\left(
\omega|X\left(  2t-2\right)  \right)  \overline{\Pi}\left(  \omega^{\prime
}|X\left(  2t-11\right)  \right)  =\nonumber\\
\sum_{\omega,\omega^{\prime}\in\Omega}\mathbf{1}_{B}\left(  \mathcal{T}\left(
X_{N}\left(  2t-1\right)  ,\omega\right)  ,\mathcal{T}\left(  \overline
{\mathcal{T}}_{N}\left(  X\left(  2t-1\right)  ,\omega\right)  ,\omega
^{\prime}\right)  \right)  \overline{\Pi}\left(  \omega|X\left(  2t-2\right)
\right)  \overline{\Pi}\left(  \omega^{\prime}|X\left(  2t-1\right)  \right)
=\nonumber\\
\sum_{\omega,\omega^{\prime}\in\Omega}\mathbf{1}_{B}\left(  \mathcal{T}\left(
X_{N}\left(  2t-1\right)  ,\omega\right)  ,\mathcal{T}\left(  \mathcal{T}%
\left(  X_{N}\left(  2t-1\right)  ,\omega\right)  ,\omega^{\prime}\right)
\right)  \overline{\Pi}\left(  \omega|X\left(  2t-2\right)  \right)
\overline{\Pi}\left(  \omega^{\prime}|X\left(  2t-1\right)  \right)
\ .\nonumber
\end{gather}

\begin{remark}
\label{QL}Notice that if the cardinality of the set $R:=\left\{  \left(
u,v\right)  \in\mathbf{V}\times\mathbf{V}:r_{u,v}>0\right\}  $ is finite,
there exists $M>N$ such that
\begin{equation}
\mathbb{P}_{0}\left(  \left\{  \chi_{t+1}\in\mathbf{C}_{N}\left(  B\right)
\times\mathbf{C}_{N}\left(  \omega^{\prime}\right)  \right\}  |\chi
_{t}\right)  =\Pi\left(  \omega^{\prime}|X_{N}\left(  t\right)  \right)
\mathbf{1}_{B}\left(  \mathcal{T}\left(  X_{N}\left(  t\right)  ,\omega
_{M}\left(  t\right)  \right)  \right)
\end{equation}
and
\begin{gather}
\mathbb{P}_{0}\left(  \left\{  \mathfrak{y}_{t+1}\in\mathbf{C}_{N}\left(
B\right)  \right\}  |\mathfrak{y}_{t}\right)  =\\
\sum_{\omega_{M},\omega_{M}^{\prime}\in\Omega_{M}}\mathbf{1}_{B}\left(
\mathcal{T}\left(  X_{N}\left(  2t-1\right)  ,\omega\right)  ,\mathcal{T}%
\left(  \mathcal{T}\left(  X_{N}\left(  2t-1\right)  ,\omega\right)
,\omega^{\prime}\right)  \right)  \times\nonumber\\
\times\Pi\left(  \omega_{M}|X_{M}\left(  2t-2\right)  \right)  \Pi\left(
\omega_{M}^{\prime}|X_{M}\left(  2t-1\right)  \right)  =\nonumber\\
\left(  \mathbf{T1}_{\left\{  \left(  X^{\left(  1\right)  },X^{\left(
2\right)  }\right)  \in\Xi_{M}^{2}\ :\ \left(  X_{N}^{\left(  1\right)
},X_{N}^{\left(  2\right)  }\right)  \in B\right\}  }\right)  \left(
\mathfrak{y}_{t}\right)  \ .\nonumber
\end{gather}
In the following we make this assumption.
\end{remark}

\begin{proposition}
\label{cls}Let the initial datum $X^{0}\in\Xi$ be such that $\alpha:=\inf
_{N\in\mathbb{N}}\Gamma\left(W\left(  X_{N}^{0}\right)\right)  >0.$ Then, for any
$\varphi\in C\left(  \Xi\right)  ,$%
\begin{equation}
\lim_{t\rightarrow\infty}\left\vert \mathbb{E}\left[  \varphi\circ
\mathfrak{Z}_{t}|\mathfrak{Z}_{0}=\left(  X^{0},X^{0}\right)  \right]
-\varphi\left(  0,0\right)  \right\vert =0\ .
\end{equation}

\end{proposition}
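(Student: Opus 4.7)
The plan is to reduce the infinite-system assertion to the finite-system convergence of Corollary \ref{gap} by combining a uniform cylinder-function approximation of $\varphi\in C(\Xi)$ with the finite-range locality afforded by Remark \ref{QL}, while keeping the rate produced by Corollary \ref{gap} uniform in the truncation level.

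First, I would exploit that the metric $\|X\|=\sum_{N\geq 1}2^{-N}\|X_{N}\|_{\infty}$ makes $\Xi$ compact, so the algebra $\mathfrak{L}(\Xi)$ of bounded cylinder functions is uniformly dense in $C(\Xi)$. Given $\varepsilon>0$, I would pick $N=N(\varepsilon)\in\mathbb{N}$ and a cylinder function $\varphi_\varepsilon$ depending only on the coordinates in $\mathbf{V}_N$, chosen smooth in those coordinates, with $\|\varphi-\varphi_\varepsilon\|_\infty<\varepsilon$. The Lipschitz-type norm $\|\varphi_\varepsilon\|_\mathcal{L}$ from (\ref{gradXnorm})--(\ref{omegaLipnorm}) then depends only on derivatives along $\mathbf{V}_N$ and is therefore unchanged when $\varphi_\varepsilon$ is viewed as a function on $\Xi_M^2$ for any $M\geq N$.

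Second, the finiteness of $R$ assumed in Remark \ref{QL} propagates locality through time: the restriction of $\chi_{t+1}$ to $\mathbf{V}_N$ depends only on the restriction of $\chi_t$ to a finite $\mathbf{V}_{N'}$, and iterating $t$ times I would obtain an explicit $M=M(N,t)<\infty$ such that
\begin{equation}
\mathbb{E}\bigl[\varphi_\varepsilon\circ\mathfrak{Z}_t\mid\mathfrak{Z}_0=(X^0,X^1)\bigr]=\mathbb{E}^{(M)}\bigl[\varphi_\varepsilon\circ\mathfrak{Z}_t^{(M)}\mid\mathfrak{Z}_0^{(M)}=(X_M^0,X_M^1)\bigr],
\end{equation}
where $\mathfrak{Z}_t^{(M)}$ is the analogue of $\mathfrak{Z}_t$ driven by the finite $M$-system kernel $\mathbf{T}$. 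Applying Corollary \ref{gap} to this $M$-system via the quantitative bound (\ref{main1}), and using that $\Gamma(X_M^0)\geq\alpha>0$ uniformly in $M$ by hypothesis, yields
\begin{equation}
\bigl|\mathbb{E}^{(M)}[\varphi_\varepsilon\circ\mathfrak{Z}_t^{(M)}]-\varphi_\varepsilon(0,0)\bigr|\leq\|\varphi_\varepsilon\|_\mathcal{L}(1-\alpha)^{2t},
\end{equation}
with the pre-factor independent of $M$ by the first step. A triangle inequality with $\|\varphi-\varphi_\varepsilon\|_\infty<\varepsilon$ then gives $|\mathbb{E}[\varphi\circ\mathfrak{Z}_t]-\varphi(0,0)|\leq 2\varepsilon+\|\varphi_\varepsilon\|_\mathcal{L}(1-\alpha)^{2t}$, and letting $t\to\infty$ followed by $\varepsilon\downarrow 0$ closes the argument.

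The hard part will be the second step, specifically the verification that the finite window $M(N,t)$ can be extracted cleanly enough that the pre-factor controlling the geometric rate does not pick up any hidden dependence on $M$. That $M(N,t)$ is finite is combinatorially direct since the out-degree of $G(R)$ is uniformly bounded and only edges in $R$ affect the evolution; that the Lipschitz-type control survives truncation follows because $\varphi_\varepsilon$ is a genuine cylinder function on $\mathbf{V}_N\subseteq\mathbf{V}_M$, so its partial derivatives and discrete differences with respect to coordinates outside $\mathbf{V}_N$ vanish identically.
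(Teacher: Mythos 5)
Your proposal is correct and follows essentially the same route as the paper: approximate $\varphi$ uniformly by a smooth cylinder function (the paper does this in two steps, quasilocal approximation then Stone--Weierstrass, where you do it in one), use the finiteness of $R$ as in Remark \ref{QL} to identify the expectation with that of a finite $M$-system, and apply the quantitative bound behind Corollary \ref{gap}, i.e.\ (\ref{main1}), whose rate is uniform in the truncation because $\Gamma\left(X_{M}^{0}\right)\geq\alpha>0$. The only cosmetic difference is your time-dependent window $M(N,t)$; since $R$ is finite a single $M$ containing $\mathbf{V}_{N}$ and the vertices touched by $R$ works for all $t$, but this does not affect the argument.
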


\begin{proof}
For any $\varphi\in C\left(  \Xi\right)  $ there exists a sequence $\left\{
\varphi_{N}\right\}  _{N\in\mathbb{N}}$ such that, $\forall N\geq1,\varphi
_{N}$ is a continuous $\mathcal{B}\left(  \Xi_{N}\right)  $-measurable
function uniformly convergent to $\varphi$ \cite{Ge}. Hence, given
$\varepsilon>0,$ there exists $N_{\varepsilon}\geq1$ such that for any
$N>N_{\varepsilon},\left\Vert \varphi-\varphi_{N}\right\Vert _{\infty
}<\varepsilon.$ Moreover, denoting by $\mathcal{I}_{N}:=\bigcup\limits_{x\in
\left[  0,1\right]  }\left\{  X_{N}\in\Xi_{N}:\left(  X_{N}\right)
_{v}=x\ ,\ \forall v\in\mathbf{V}_{N}\right\}  $ and $\mathcal{X}%
_{N}:=\Xi_{N}\ominus\mathcal{I}_{N},$ by the Stone-Weierstrass theorem there
exists $\phi\in\mathcal{L}_{N}$ (with $\mathcal{L}_{N}$ defined as
$\mathcal{L}$ in the previous section with $\mathcal{X}$ replaced by $\mathcal{X}_{N}$) such that $\left\Vert \phi%
-\varphi_{N}\right\Vert _{\infty}<\varepsilon.$ Then, the thesis follows from
Corollary \ref{gap}. Indeed,
\begin{align}
\left\vert \mathbb{E}\left[  \varphi\circ\mathfrak{y}_{t}|\mathfrak{y}%
_{0}=\left(  X^{0},X^{0}\right)  \right]  -\varphi\left(  0,0\right)
\right\vert  &  \leq2\varepsilon+\left\vert \mathbb{E}\left[  \varphi_{N}%
\circ\mathfrak{y}_{t}|\mathfrak{y}_{0}=\left(  X^{0},X^{0}\right)  \right]
-\varphi_{N}\left(  0,0\right)  \right\vert \\
&  \leq4\varepsilon+\left\vert \mathbb{E}\left[  \phi\circ\mathfrak{y}%
_{t}|\mathfrak{y}_{0}=\left(  X^{0},X^{0}\right)  \right]  -\phi\left(
0,0\right)  \right\vert \nonumber\\
&  =4\varepsilon+\left\vert \delta^{\left(  X_{M}^{0},X_{M}^{0}\right)
}\mathbf{T}^{t}\left(  \phi-\phi\left(  0,0\right)  \right)
\right\vert \ ,\nonumber
\end{align}
where $M\geq N.$ But,
\begin{equation}
\delta^{\left(  X_{M}^{0},X_{M}^{0}\right)  }\mathbf{T}^{t}\left\vert \phi
-\phi\left(  0,0\right)  \right\vert \leq\left\Vert \phi
\right\Vert _{\mathcal{L}_{N}}\left(  1-\alpha\right)  ^{2t}\ .
\end{equation}

\end{proof}

\subsection{Monokinetic-type limit}

Given a metric space $A,$ let us denote by $\mathcal{M}_{A}$ the set of Radon
measures on $\left(  A,\mathcal{B}\left(  A\right)  \right)  $ equipped with
the weak topology generated by the distance
\begin{equation}
\mathfrak{d}\left(  \mu,\nu\right)  :=\sup_{\varphi\in Lip\left(  A\right)
\ :\ \left\Vert \varphi\right\Vert _{L}\leq1}\left\vert \mu\left[
\varphi\right]  -\nu\left[  \varphi\right]  \right\vert \;,\;\mu,\nu
\in\mathcal{M}_{A}\ ,
\end{equation}
which makes it a complete metric space. We then denote by $\mathcal{M}_{A}%
^{+}$ the convex and closed set of positive elements of $\mathcal{M}_{A},$ and
by $\mathfrak{P}\left(  A\right)  $ the set of probability measures which is a
compact subset of $\mathcal{M}_{A}^{+}.$ Moreover, given an operator
$T:BM\left(  A\right)  \circlearrowleft$ and $\mu\in\mathcal{M}_{A},T_{\#}\mu$
denotes the push-forward of $\mu$ under $T.$

In particular, if $\mathbb{R}^{2}\ni\left(  x,y\right)  \longmapsto\xi\left(
x,y\right)  =\left(  \xi_{1}\left(  x,y\right)  ,\xi_{2}\left(  x,y\right)
\right)  \in\mathbb{R}^{2}$ is a random vector distributed according to
$\mu\in\mathfrak{P}\left(  \mathbb{R}^{2}\right)  ,$ we denote by
$\mu^{\left(  1\right)  }$ the marginal w.r.t. the first component of $\xi$
and by $\mu^{\left(  2\right)  }$ the marginal w.r.t. the second component of
$\xi.$


In this section, for technical reasons, we consider the believes variables
$X_{v},v\in\mathbf{V}_{N}, N\geq 1,$ to take values in $\mathbb{R}_{+}$ rather than in
$\left[  0,1\right]  $ and, with abuse of notation, redefine $\Xi_{N}$ to be
equal to $\mathbb{R}_{+}^{N}$ so that a belief configuration is now a map
$\mathbf{V}_{N}\ni v\longmapsto X_{v}\in\Xi_{N}:=\mathbb{R}_{+}^{N}.$

We denote by $\mu_{N}^{X}:=\frac{1}{N}\sum_{v\in\mathbf{V}_{N}}\delta
_{\left\{  X_{v}\right\}  }$ the empirical probability measure on $\left(
\mathbb{R},\mathcal{B}\left(  \mathbb{R}\right)  \right)  $ associated to the
believes configuration $X\in\Xi_{N}$ and set $\mu_{N}^{X,Y}:=\frac{1}{N}%
\sum_{v\in\mathbf{V}_{N}}\delta_{\left\{  X_{v}\right\}  }\otimes
\delta_{\left\{  Y_{v}\right\}  }$ the empirical probability measure on
$\left(  \mathbb{R}^{2},\mathcal{B}\left(  \mathbb{R}^{2}\right)  \right)  $
relative to the believes configuration $\left(  X,Y\right)  \in\Xi_{N}^{2}.$

Given $X\in\Xi_{N},$ we denote by $\Pi_{N}\left(  \cdot|X\right)  $ the
probability measure defined in (\ref{pomega1}).

To fix the argument, let us suppose first that the confidence levels $r_{i,j}$
are equal to $1$ for any $\left(  i,j\right)  \in\mathbf{V}_{N}\times
\mathbf{V}_{N}.$\ We will discuss the more general case where $\forall\left(
i,j\right)  \in\mathbf{V}_{N}\times\mathbf{V}_{N}=:\mathbf{E}_{N}%
,r_{i,j}:=\varrho\left(  X_{j},X_{i}\right)  ,$ with $\varrho:\left[
0,1\right]  ^{2}\longrightarrow\left[  0,1\right]  $ chosen to satisfy some
particular assumptions, at the end of this subsection.

For any $\omega\in\Omega_{N}:=\left\{  \eta\in\left\{  0,1\right\}
^{\mathbf{E}_{N}}:\forall u\in\mathbf{V}_{N},\eta_{u,u}=1\right\}  ,$ let us
consider the maps $\Xi_{N}\ni X\longmapsto\mathcal{T}\left(
X,\omega\right)  \in\Xi_{N}$ and $\Xi_{N}\ni X\longmapsto\mathcal{\tilde{T}%
}\left(  X,\omega\right)  \in\Xi_{N},$ where, as in (\ref{Tomega}), $\forall
v\in\mathbf{V}_{N},$%
\begin{equation}
\left(  \Xi_{N},\Omega_{N}\right)  \ni\left(  X,\omega\right)  \longmapsto
\mathcal{T}_{v}\left(  X,\omega\right)  :=\frac{\sum_{u\in\mathbf{V}}%
\omega_{v,u}X_{u}}{\sum_{u\in\mathbf{V}}\omega_{v,u}}\in\mathbb{R}_{+}\ ,
\label{Tomega1}%
\end{equation}
and
\begin{equation}
\left(  \Xi_{N},\Omega_{N}\right)  \ni\left(  X,\omega\right)  \longmapsto
\mathcal{\tilde{T}}_{v}\left(  X,\omega\right)  :=\frac{\frac{1}{N}\sum
_{u\in\mathbf{V}}\omega_{v,u}X_{u}}{\sum_{\omega\in\Omega_{N}}\Pi_{N}\left(
\omega|X\right)  \sum_{u\in\mathbf{V}_{N}}\frac{\omega_{v,u}}{N}}=\frac
{\frac{1}{N}\sum_{u\in\mathbf{V}}\omega_{v,u}X_{u}}{\mu_{N}^{X}\left[
p\left(  \left\vert X_{v}-\cdot\right\vert \right)  \right]  }\in
\mathbb{R}_{+}\ .
\end{equation}
Denoting by $\mathbf{\tilde{T}}$ the operator on $BM\left(  \Xi_{N}%
^{2}\right)  $ defined as in (\ref{TTT}), i.e.
\begin{equation}
\left(  \mathbf{\tilde{T}}\varphi\right)  \left(  X^{\left(  1\right)
},X^{\left(  2\right)  }\right)  :=\sum_{\omega,\omega^{\prime}\in\Omega_{N}%
}\varphi\left(  \mathcal{\tilde{T}}\left(  X^{\left(  2\right)  }%
,\omega\right)  ,\mathcal{\tilde{T}}\left(  \mathcal{\tilde{T}}\left(
X^{\left(  2\right)  },\omega\right)  ,\omega^{\prime}\right)  \right)
\Pi\left(  \omega|X^{\left(  1\right)  }\right)  \Pi\left(  \omega^{\prime
}|X^{\left(  2\right)  }\right)  \ , \label{TTTtilda}%
\end{equation}
we have:

\begin{lemma}
\label{mf1}If the sequence $\left\{  \mu_{N}^{X^{\left(  1\right)
},X^{\left(  2\right)  }}\right\}  _{N\geq1}$ weakly converges to $\mu,$ the
sequence of the characteristic function of the elements of the sequence of
measures $\left\{  \mathbf{\tilde{T}}_{\#}\mu_{N}^{X^{\left(  1\right)
},X^{\left(  2\right)  }}\right\}  _{N\geq1}$ converges pointwise to
\begin{equation}
\mathbb{R}^{2}\ni\left(  \lambda_{1},\lambda_{2}\right)  \longmapsto
\int_{\mathbb{R}_{+}^{2}}\mu\left(  dx,dy\right)  e^{i\lambda_{1}\frac
{\int_{\mathbb{R}_{+}^{2}}\mu\left(  dx^{\prime},dy^{\prime}\right)  p\left(
\left\vert x-x^{\prime}\right\vert \right)  y^{\prime}}{\int_{\mathbb{R}_{+}%
}\mu^{\left(  1\right)  }\left(  dx^{\prime}\right)  p\left(  \left\vert
x-x^{\prime}\right\vert \right)  }+i\lambda_{2}\int_{\mathbb{R}_{+}^{2}}%
\mu\left(  dx^{\prime},dy^{\prime}\right)  \frac{p\left(  \left\vert
y-y^{\prime}\right\vert \right)  \int_{\mathbb{R}_{+}^{2}}\mu\left(
dx^{\prime\prime},dy^{\prime\prime}\right)  p\left(  \left\vert x^{\prime
}-x^{\prime\prime}\right\vert \right)  y^{\prime\prime}}{\int_{\mathbb{R}_{+}%
}\mu^{\left(  2\right)  }\left(  dy^{\prime}\right)  p\left(  \left\vert
y-y^{\prime}\right\vert \right)  \int_{\mathbb{R}_{+}}\mu^{\left(  1\right)
}\left(  dx^{\prime\prime}\right)  p\left(  \left\vert x^{\prime}%
-x^{\prime\prime}\right\vert \right)  }}\in\mathbb{C} \label{CFT}%
\end{equation}
provided that $\forall i=1,2,$ the function $\mathbb{R}^{+}\ni x\longmapsto
\int_{\mathbb{R}_{+}}\mu^{\left(  i\right)  }\left(  dx^{\prime}\right)
p\left(  \left\vert x-x^{\prime}\right\vert \right)  \in\left[  0,1\right]  $
is strictly positive.
\end{lemma}

\begin{proof}
Setting for any $\left(  u,v\right)  \in\mathbf{E}_{N}$ and $X\in\Xi
_{N},p\left(  \left\vert \Delta_{u,v}X\right\vert \right)  :=p\left(
\left\vert X_{u}-X_{v}\right\vert \right)  ,$ given $\left(  X^{\left(
1\right)  },X^{\left(  2\right)  }\right)  \in\Xi_{N}^{2},$ the characteristic
function of $\mathbf{\bar{T}}_{\#}\mu_{N}^{X^{\left(  1\right)  },X^{\left(
2\right)  }},$ namely
\begin{equation}
\mathbb{R}^{2}\ni\left(  \lambda_{1},\lambda_{2}\right)  \longmapsto
\int_{\mathbb{R}_{+}^{2}}\mathbf{\tilde{T}}_{\#}\mu_{N}^{X^{\left(  1\right)
},X^{\left(  2\right)  }}\left(  dx,dy\right)  e^{i\lambda_{1}x+i\lambda_{2}%
y}\in\mathbb{C}\ ,
\end{equation}
writes
\begin{gather}
\int_{\mathbb{R}_{+}^{2}}\mathbf{\tilde{T}}_{\#}\mu_{N}^{X^{\left(  1\right)
},X^{\left(  2\right)  }}\left(  dx,dy\right)  e^{i\lambda_{1}x+i\lambda_{2}%
y}=\frac{1}{N}\sum_{v=1}^{N}\sum_{\omega,\omega^{\prime}\in\Omega_{N}}%
\Pi\left(  \omega|X^{\left(  1\right)  }\right)  \Pi\left(  \omega^{\prime
}|X^{\left(  2\right)  }\right)  \times\\
\times e^{i\frac{\lambda_{1}}{N}\sum_{u=1}^{N}\frac{\omega_{v,u}}{\mu
_{N}^{X^{\left(  1\right)  }}\left[  p\left(  \left\vert X_{v}^{\left(
1\right)  }-\cdot\right\vert \right)  \right]  }X_{u}^{\left(  2\right)
}+i\frac{\lambda_{2}}{N}\sum_{u=1}^{N}\frac{\omega_{v,u}^{\prime}}{\mu
_{N}^{X^{\left(  2\right)  }}\left[  p\left(  \left\vert X_{v}^{\left(
2\right)  }-\cdot\right\vert \right)  \right]  }\frac{1}{N}\sum_{w=1}^{N}%
\frac{\omega_{u,w}}{\mu_{N}^{X^{\left(  1\right)  }}\left[  p\left(
\left\vert X_{u}^{\left(  1\right)  }-\cdot\right\vert \right)  \right]
}X_{w}^{\left(  2\right)  }}\nonumber\\
=\frac{1}{N}\sum_{v=1}^{N}\sum_{\omega,\omega^{\prime}\in\Omega_{N}}\Pi\left(
\omega|X^{\left(  1\right)  }\right)  \Pi\left(  \omega^{\prime}|X^{\left(
2\right)  }\right)  e^{i\left(  \frac{\frac{\lambda_{1}}{\mu_{N}^{X^{\left(
1\right)  }}\left[  p\left(  \left\vert X_{v}^{\left(  1\right)  }%
-\cdot\right\vert \right)  \right]  }}{N}+\frac{\frac{\lambda_{2}}{\mu
_{N}^{X^{\left(  1\right)  }}\left[  p\left(  \left\vert X_{v}^{\left(
1\right)  }-\cdot\right\vert \right)  \right]  \mu_{N}^{X^{\left(  2\right)
}}\left[  p\left(  \left\vert X_{v}^{\left(  2\right)  }-\cdot\right\vert
\right)  \right]  }}{N^{2}}\right)  \sum_{w=1}^{N}\omega_{v,w}X_{w}^{\left(
2\right)  }}\times\nonumber\\
\times e^{i\frac{\lambda_{2}}{N}\sum_{u\in\mathbf{V}_{N}\backslash\left\{
v\right\}  }\frac{\omega_{v,u}^{\prime}}{\mu_{N}^{X^{\left(  2\right)  }%
}\left[  p\left(  \left\vert X_{v}^{\left(  2\right)  }-\cdot\right\vert
\right)  \right]  }\frac{1}{N}\sum_{w=1}^{N}\frac{\omega_{u,w}}{\mu
_{N}^{X^{\left(  1\right)  }}\left[  p\left(  \left\vert X_{u}^{\left(
1\right)  }-\cdot\right\vert \right)  \right]  }X_{w}^{\left(  2\right)  }%
}\nonumber\\
=\frac{1}{N}\sum_{v=1}^{N}\sum_{\omega,\omega^{\prime}\in\Omega_{N}}\Pi\left(
\omega|X^{\left(  1\right)  }\right)  \Pi\left(  \omega^{\prime}|X^{\left(
2\right)  }\right)  \prod\limits_{w=1}^{N}\left[  e^{i\left(  \frac
{\frac{\lambda_{1}}{\mu_{N}^{X^{\left(  1\right)  }}\left[  p\left(
\left\vert X_{v}^{\left(  1\right)  }-\cdot\right\vert \right)  \right]  }}%
{N}+\frac{\frac{\lambda_{2}}{\mu_{N}^{X^{\left(  1\right)  }}\left[  p\left(
\left\vert X_{v}^{\left(  1\right)  }-\cdot\right\vert \right)  \right]
\mu_{N}^{X^{\left(  2\right)  }}\left[  p\left(  \left\vert X_{v}^{\left(
2\right)  }-\cdot\right\vert \right)  \right]  }}{N^{2}}\right)
X_{w}^{\left(  2\right)  }}\omega_{v,w}+\right.  \nonumber\\
\left.  +\left(  1-\omega_{v,w}\right)  \right]  \prod\limits_{u\in
\mathbf{V}_{N}\backslash\left\{  v\right\}  }\prod\limits_{w\in\mathbf{V}_{N}%
}\left[  e^{i\frac{\frac{\lambda_{2}}{\mu_{N}^{X^{\left(  1\right)  }}\left[
p\left(  \left\vert X_{u}^{\left(  1\right)  }-\cdot\right\vert \right)
\right]  \mu_{N}^{X^{\left(  2\right)  }}\left[  p\left(  \left\vert
X_{v}^{\left(  2\right)  }-\cdot\right\vert \right)  \right]  }}{N^{2}}%
X_{w}^{\left(  2\right)  }}\omega_{v,u}^{\prime}\omega_{u,w}+\left(
1-\omega_{v,u}^{\prime}\omega_{u,w}\right)  \right]  \nonumber\\
=\frac{1}{N}\sum_{v=1}^{N}\sum_{\omega,\omega^{\prime}\in\Omega_{N}}%
\prod\limits_{w=1}^{N}\left[  e^{i\left(  \frac{\frac{\lambda_{1}}{\mu
_{N}^{X^{\left(  1\right)  }}\left[  p\left(  \left\vert X_{v}^{\left(
1\right)  }-\cdot\right\vert \right)  \right]  }}{N}+\frac{\frac{\lambda_{2}%
}{\mu_{N}^{X^{\left(  1\right)  }}\left[  p\left(  \left\vert X_{v}^{\left(
1\right)  }-\cdot\right\vert \right)  \right]  \mu_{N}^{X^{\left(  2\right)
}}\left[  p\left(  \left\vert X_{v}^{\left(  2\right)  }-\cdot\right\vert
\right)  \right]  }}{N^{2}}\right)  X_{w}^{\left(  2\right)  }}\omega
_{v,w}+\left(  1-\omega_{v,w}\right)  \right]  \times\nonumber\\
\times\left[  \omega_{v,w}p\left(  \left\vert \Delta_{v,w}X^{\left(  1\right)
}\right\vert \right)  +\left(  1-\omega_{v,w}\right)  \left(  1-p\left(
\left\vert \Delta_{v,w}X^{\left(  1\right)  }\right\vert \right)  \right)
\right]  \times\nonumber\\
\times\prod\limits_{u\in\mathbf{V}_{N}\backslash\left\{  v\right\}  }%
\prod\limits_{w\in\mathbf{V}_{N}}\left[  e^{i\frac{\frac{\lambda_{1}}{\mu
_{N}^{X^{\left(  1\right)  }}\left[  p\left(  \left\vert X_{u}^{\left(
1\right)  }-\cdot\right\vert \right)  \right]  \mu_{N}^{X^{\left(  2\right)
}}\left[  p\left(  \left\vert X_{v}^{\left(  2\right)  }-\cdot\right\vert
\right)  \right]  }}{N^{2}}X_{w}^{\left(  2\right)  }}\omega_{v,u}^{\prime
}\omega_{u,w}+\left(  1-\omega_{v,u}^{\prime}\omega_{u,w}\right)  \right]
\times\nonumber\\
\times\left[  \omega_{v,u}^{\prime}\omega_{u,w}p\left(  \left\vert
\Delta_{v,u}X^{\left(  2\right)  }\right\vert \right)  p\left(  \left\vert
\Delta_{u,w}X^{\left(  1\right)  }\right\vert \right)  +\left(  1-\omega
_{v,u}^{\prime}\omega_{u,w}\right)  \left(  1-p\left(  \left\vert \Delta
_{v,u}X^{\left(  2\right)  }\right\vert \right)  p\left(  \left\vert
\Delta_{u,w}X^{\left(  1\right)  }\right\vert \right)  \right)  \right]
\nonumber
\end{gather}%
\begin{gather*}
=\frac{1}{N}\sum_{v=1}^{N}\prod\limits_{w=1}^{N}\left[  e^{i\left(
\frac{\frac{\lambda_{1}}{\mu_{N}^{X^{\left(  1\right)  }}\left[  p\left(
\left\vert X_{v}^{\left(  1\right)  }-\cdot\right\vert \right)  \right]  }}%
{N}+\frac{\frac{\lambda_{2}}{\mu_{N}^{X^{\left(  1\right)  }}\left[  p\left(
\left\vert X_{v}^{\left(  1\right)  }-\cdot\right\vert \right)  \right]
\mu_{N}^{X^{\left(  2\right)  }}\left[  p\left(  \left\vert X_{v}^{\left(
2\right)  }-\cdot\right\vert \right)  \right]  }}{N^{2}}\right)
X_{w}^{\left(  2\right)  }}p\left(  \left\vert \Delta_{v,w}X^{\left(
1\right)  }\right\vert \right)  +\right.  \\
\left.  +\left(  1-p\left(  \left\vert \Delta_{v,w}X^{\left(  1\right)
}\right\vert \right)  \right)  \right]  \prod\limits_{u\in\mathbf{V}%
_{N}\backslash\left\{  v\right\}  }\prod\limits_{w\in\mathbf{V}_{N}}\left[
e^{i\frac{\frac{\lambda_{1}}{\mu_{N}^{X^{\left(  1\right)  }}\left[  p\left(
\left\vert X_{u}^{\left(  1\right)  }-\cdot\right\vert \right)  \right]
\mu_{N}^{X^{\left(  2\right)  }}\left[  p\left(  \left\vert X_{v}^{\left(
2\right)  }-\cdot\right\vert \right)  \right]  }}{N^{2}}X_{w}^{\left(
2\right)  }}\times\right.  \\
\left.  \times p\left(  \left\vert \Delta_{v,u}X^{\left(  2\right)
}\right\vert \right)  p\left(  \left\vert \Delta_{u,w}X^{\left(  1\right)
}\right\vert \right)  +\left(  1-p\left(  \left\vert \Delta_{v,u}X^{\left(
2\right)  }\right\vert \right)  p\left(  \left\vert \Delta_{u,w}X^{\left(
1\right)  }\right\vert \right)  \right)  \right]  \\
=\frac{1}{N}\sum_{v=1}^{N}\exp\left\{  \sum_{w=1}^{N}\log\left[  1+p\left(
\left\vert \Delta_{v,w}X^{\left(  1\right)  }\right\vert \right)  \left(
e^{i\left(  \frac{\frac{\lambda_{1}}{\mu_{N}^{X^{\left(  1\right)  }}\left[
p\left(  \left\vert X_{v}^{\left(  1\right)  }-\cdot\right\vert \right)
\right]  }}{N}+\frac{\frac{\lambda_{2}}{\mu_{N}^{X^{\left(  1\right)  }%
}\left[  p\left(  \left\vert X_{v}^{\left(  1\right)  }-\cdot\right\vert
\right)  \right]  \mu_{N}^{X^{\left(  2\right)  }}\left[  p\left(  \left\vert
X_{v}^{\left(  2\right)  }-\cdot\right\vert \right)  \right]  }}{N^{2}%
}\right)  X_{w}^{\left(  2\right)  }}-1\right)  \right]  \right\}  \times\\
\times\exp\left\{  \sum_{u\in\mathbf{V}_{N}\backslash\left\{  v\right\}  }%
\sum_{w\in\mathbf{V}_{N}}\log\left[  1+p\left(  \left\vert \Delta
_{v,u}X^{\left(  2\right)  }\right\vert \right)  p\left(  \left\vert
\Delta_{u,w}X^{\left(  1\right)  }\right\vert \right)  \left(  e^{i\frac
{\frac{\lambda_{2}}{\mu_{N}^{X^{\left(  1\right)  }}\left[  p\left(
\left\vert X_{u}^{\left(  1\right)  }-\cdot\right\vert \right)  \right]
\mu_{N}^{X^{\left(  2\right)  }}\left[  p\left(  \left\vert X_{v}^{\left(
2\right)  }-\cdot\right\vert \right)  \right]  }}{N^{2}}X_{w}^{\left(
2\right)  }}-1\right)  \right]  \right\}  \\
=\frac{1}{N}\sum_{v=1}^{N}\exp\left\{  \sum_{w=1}^{N}\log\left[  1+p\left(
\left\vert \Delta_{v,w}X^{\left(  1\right)  }\right\vert \right)  \left(
e^{i\left(  \frac{\frac{\lambda_{1}}{\mu_{N}^{X^{\left(  1\right)  }}\left[
p\left(  \left\vert X_{v}^{\left(  1\right)  }-\cdot\right\vert \right)
\right]  }}{N}+\frac{\frac{\lambda_{2}}{\mu_{N}^{X^{\left(  1\right)  }%
}\left[  p\left(  \left\vert X_{v}^{\left(  1\right)  }-\cdot\right\vert
\right)  \right]  \mu_{N}^{X^{\left(  2\right)  }}\left[  p\left(  \left\vert
X_{v}^{\left(  2\right)  }-\cdot\right\vert \right)  \right]  }}{N^{2}%
}\right)  X_{w}^{\left(  2\right)  }}-1\right)  \right]  +\right.  \\
\left.  -\sum_{w=1}^{N}\log\left[  1+p\left(  \left\vert \Delta_{v,w}%
X^{\left(  1\right)  }\right\vert \right)  \left(  e^{i\frac{\frac{\lambda
_{2}}{\mu_{N}^{X^{\left(  1\right)  }}\left[  p\left(  \left\vert
X_{v}^{\left(  1\right)  }-\cdot\right\vert \right)  \right]  \mu
_{N}^{X^{\left(  2\right)  }}\left[  p\left(  \left\vert X_{v}^{\left(
2\right)  }-\cdot\right\vert \right)  \right]  }}{N^{2}}X_{w}^{\left(
2\right)  }}-1\right)  \right]  \right\}  \times\\
\times\exp\left\{  \sum_{u=1}^{N}\sum_{w=1}^{N}\log\left[  1+p\left(
\left\vert \Delta_{v,u}X^{\left(  2\right)  }\right\vert \right)  p\left(
\left\vert \Delta_{u,w}X^{\left(  1\right)  }\right\vert \right)  \left(
e^{i\frac{\frac{\lambda_{2}}{\mu_{N}^{X^{\left(  1\right)  }}\left[  p\left(
\left\vert X_{u}^{\left(  1\right)  }-\cdot\right\vert \right)  \right]
\mu_{N}^{X^{\left(  2\right)  }}\left[  p\left(  \left\vert X_{v}^{\left(
2\right)  }-\cdot\right\vert \right)  \right]  }}{N^{2}}X_{w}^{\left(
2\right)  }}-1\right)  \right]  \right\}
\end{gather*}%
\begin{gather*}
=\int_{\mathbb{R}_{+}^{2}}\mu_{N}^{X^{\left(  1\right)  }X^{\left(  2\right)
}}\left(  dx,dy\right)  \left\{  \exp\left[  N\int_{\mathbb{R}_{+}}\mu
_{N}^{X^{\left(  1\right)  },X^{\left(  2\right)  }}\left(  dy^{\prime
}\right)  \times\right.  \right.  \\
\left.  \times\log\left(  1+p\left(  \left\vert y-x^{\prime}\right\vert
\right)  \left(  e^{i\left(  \frac{\frac{\lambda_{2}}{\mu_{N}^{X^{\left(
1\right)  }}\left[  p\left(  \left\vert x-\cdot\right\vert \right)  \right]
}}{N}+\frac{\frac{\lambda_{1}}{\mu_{N}^{X^{\left(  1\right)  }}\left[
p\left(  \left\vert x-\cdot\right\vert \right)  \right]  \mu_{N}^{X^{\left(
2\right)  }}\left[  p\left(  \left\vert y-\cdot\right\vert \right)  \right]
}}{N^{2}}\right)  y^{\prime}}-1\right)  \right)  \right]  +\\
\left.  -N\int_{\mathbb{R}_{+}^{2}}\mu_{N}^{X^{\left(  1\right)  },X^{\left(
2\right)  }}\left(  dx^{\prime},dy^{\prime}\right)  \log\left[  1+p\left(
\left\vert x-x^{\prime}\right\vert \right)  \left(  e^{i\frac{\frac
{\lambda_{1}}{\mu_{N}^{X^{\left(  1\right)  }}\left[  p\left(  \left\vert
x-\cdot\right\vert \right)  \right]  \mu_{N}^{X^{\left(  2\right)  }}\left[
p\left(  \left\vert y-\cdot\right\vert \right)  \right]  }}{N^{2}}y^{\prime}%
}-1\right)  \right]  \right\}  \times\\
\times\left\{  \exp N^{2}\int_{\mathbb{R}_{+}^{2}}\mu_{N}^{X^{\left(
1\right)  },X^{\left(  2\right)  }}\left(  dx^{\prime},dy^{\prime}\right)
\int_{\mathbb{R}_{+}^{2}}\mu_{N}^{X^{\left(  1\right)  },X^{\left(  2\right)
}}\left(  dx^{\prime\prime}dy^{\prime\prime}\right)  \times\right.  \\
\left.  \times\log\left[  1+p\left(  \left\vert y-y^{\prime}\right\vert
\right)  p\left(  \left\vert x^{\prime}-x^{\prime\prime}\right\vert \right)
\left(  e^{i\frac{\frac{\lambda_{1}}{\mu_{N}^{X^{\left(  1\right)  }}\left[
p\left(  \left\vert x^{\prime}-\cdot\right\vert \right)  \right]  \mu
_{N}^{X^{\left(  2\right)  }}\left[  p\left(  \left\vert y-\cdot\right\vert
\right)  \right]  }}{N^{2}}y^{\prime\prime}}-1\right)  \right]  \right\}  \ .
\end{gather*}
Taking the limit as $N\uparrow\infty,$ if $\left\{  \mu_{N}^{X^{\left(
1\right)  },X^{\left(  2\right)  }}\right\}  _{N\geq1}$ weakly converges to
$\mu,$ the sequence of the characteristic functions of $\mathbf{\tilde{T}%
}_{\#}\mu_{N}^{X^{\left(  1\right)  },X^{\left(  2\right)  }}$ converges
pointwise to (\ref{CFT}).
\end{proof}

Let $\left\{  \mathbf{Z}_{t}\right\}  _{t\geq0},$ where, for any
$t\geq0,\mathbf{Z}_{t}:=\left(  Z_{t}^{\left(  1\right)  },Z_{t}^{\left(
2\right)  }\right)  ,$ be the non-linear Markov chain on $\left(  \left[
0,1\right]  ^{2},\mathcal{B}\left(  \left[  0,1\right]  ^{2}\right)  \right)
$ with degenerate kernel such that, for any bounded measurable $\varphi
:\left[  0,1\right]  ^{2}\longrightarrow\mathbb{R},$ if $\mu_{t}$ is the law
of $\mathbf{Z}_{t},$%
\begin{align}
\mathbb{E}\left[  \varphi\left(  \mathbf{Z}_{t+1}\right)  |\mathbf{Z}%
_{t}\right]   &  :=\varphi\left(  \frac{\int_{\left[  0,1\right]  ^{2}}\mu
_{t}\left(  dx^{\prime},dy^{\prime}\right)  p\left(  \left\vert Z_{t}^{\left(
1\right)  }-x^{\prime}\right\vert \right)  y^{\prime}}{\int_{0}^{1}\mu
_{t}^{\left(  1\right)  }\left(  dx^{\prime}\right)  p\left(  \left\vert
Z_{t}^{\left(  1\right)  }-x^{\prime}\right\vert \right)  },\right. \\
&  \left.  \int_{\left[  0,1\right]  ^{2}}\mu_{t}\left(  dx^{\prime
},dy^{\prime}\right)  \frac{p\left(  \left\vert Z_{t}^{\left(  2\right)
}-y^{\prime}\right\vert \right)  \int_{\left[  0,1\right]  ^{2}}\mu_{t}\left(
dx^{\prime\prime},dy^{\prime\prime}\right)  p\left(  \left\vert x^{\prime
}-x^{\prime\prime}\right\vert \right)  y^{\prime\prime}}{\int_{0}^{1}\mu
_{t}^{\left(  2\right)  }\left(  dy^{\prime}\right)  \left[  p\left(
\left\vert Z_{t}^{\left(  2\right)  }-y^{\prime}\right\vert \right)  \right]
\int_{0}^{1}\mu_{t}^{\left(  1\right)  }\left(  dx^{\prime\prime}\right)
\left[  p\left(  \left\vert x^{\prime}-x^{\prime\prime}\right\vert \right)
\right]  }\right)  \ .\nonumber
\end{align}
For any $\mu\in\mathfrak{P}\left(  \mathcal{B}\left(  \left[  0,1\right]
^{2}\right)  \right)  $ we denote by
\begin{gather}
\Theta_{\#}\mu\left[  \varphi\right]  :=\int_{0}^{1}\mu\left(  dx,dy\right)
\varphi\left(  \frac{\int_{\left[  0,1\right]  ^{2}}\mu\left(  dx^{\prime
},dy^{\prime}\right)  p\left(  \left\vert x-x^{\prime}\right\vert \right)
y^{\prime}}{\int_{0}^{1}\mu^{\left(  1\right)  }\left(  dx^{\prime}\right)
p\left(  \left\vert x-x^{\prime}\right\vert \right)  },\right.  \label{Theta}%
\\
\left.  \int_{\left[  0,1\right]  ^{2}}\mu\left(  dx^{\prime},dy^{\prime
}\right)  \frac{p\left(  \left\vert y-y^{\prime}\right\vert \right)
\int_{\left[  0,1\right]  ^{2}}\mu\left(  dx^{\prime\prime},dy^{\prime\prime
}\right)  p\left(  \left\vert x^{\prime}-x^{\prime\prime}\right\vert \right)
y^{\prime\prime}}{\int_{0}^{1}\mu^{\left(  2\right)  }\left(  dy^{\prime
}\right)  \left[  p\left(  \left\vert y-y^{\prime}\right\vert \right)
\right]  \int_{0}^{1}\mu^{\left(  1\right)  }\left(  dx^{\prime\prime}\right)
\left[  p\left(  \left\vert x^{\prime}-x^{\prime\prime}\right\vert \right)
\right]  }\right)  \;,\;\varphi\in BM\left(  \left[  0,1\right]  ^{2}\right)
\ .\nonumber
\end{gather}
the pushforward of $\mu$ under the map
\begin{align}
\left[  0,1\right]  ^{2}  &  \ni\left(  x,y\right)  \longmapsto\Theta_{\mu
}\left(  x,y\right)  :=\left(  \frac{\int_{\left[  0,1\right]  ^{2}}\mu\left(
dx^{\prime},dy^{\prime}\right)  p\left(  \left\vert x-x^{\prime}\right\vert
\right)  y^{\prime}}{\int_{0}^{1}\mu^{\left(  1\right)  }\left(  dx^{\prime
}\right)  p\left(  \left\vert x-x^{\prime}\right\vert \right)  },\right.
\label{theta_mu}\\
&  \left.  \int_{\left[  0,1\right]  ^{2}}\mu\left(  dx^{\prime},dy^{\prime
}\right)  \frac{p\left(  \left\vert y-y^{\prime}\right\vert \right)
\int_{\left[  0,1\right]  ^{2}}\mu\left(  dx^{\prime\prime},dy^{\prime\prime
}\right)  p\left(  \left\vert x^{\prime}-x^{\prime\prime}\right\vert \right)
y^{\prime\prime}}{\int_{0}^{1}\mu^{\left(  2\right)  }\left(  dy^{\prime
}\right)  \left[  p\left(  \left\vert y-y^{\prime}\right\vert \right)
\right]  \int_{0}^{1}\mu^{\left(  1\right)  }\left(  dx^{\prime\prime}\right)
\left[  p\left(  \left\vert x^{\prime}-x^{\prime\prime}\right\vert \right)
\right]  }\right)  \in\left[  0,1\right]  ^{2}\nonumber
\end{align}
defining the transition probability kernel which describes the process
$\left\{  \mathbf{Z}_{t}\right\}  _{t\geq0}.$

\begin{remark}
\label{mf3}We stress that, if the elements of the sequence $\left\{  \mu
_{N}^{X^{\left(  1\right)  },X^{\left(  2\right)  }}\right\}  _{N\geq1}$ are
supported in $\left[  0,1\right]  ^{2},$ since the support of a measure is
stable under weak limits, $\mu$ is also supported on $\left[  0,1\right]
^{2}.$ Moreover, since by (\ref{theta_mu}) and (\ref{Theta}) $\Theta
_{\#}\mathfrak{P}\left(  \left[  0,1\right]  ^{2}\right)  \subseteq
\mathfrak{P}\left(  \left[  0,1\right]  ^{2}\right)  ,$ considering $\left\{
\mu_{N}^{X^{\left(  1\right)  },X^{\left(  2\right)  }}\right\}  _{N\geq
1}\subset\mathfrak{P}\left(  \left[  0,1\right]  ^{2}\right)  $ as a sequence
in $\mathfrak{P}\left(  \mathbb{R}^{2}\right)  ,$ by the L\'{e}vy's continuity
theorem, Lemma \ref{mf1} implies that if $\left\{  \mu_{N}^{X^{\left(
1\right)  },X^{\left(  2\right)  }}\right\}  _{N\geq1}$ weakly converges to
$\mu,\left\{  \mathbf{\tilde{T}}_{\#}\mu_{N}^{X^{\left(  1\right)
},X^{\left(  2\right)  }}\right\}  _{N\geq1}$ weakly converges to $\Theta
_{\#}\mu\in\mathfrak{P}\left(  \left[  0,1\right]  ^{2}\right)  .$

In this case, as already pointed out at the beginning of Section 4 of
\cite{GO}, for (\ref{CFT}) to be defined one needs to assume that
$\forall i=1,2, \mathsf{supp}\mu^{\left(  i\right)  }\cap\mathsf{supp}p$ is not empty
and this will be so, without any restriction on $\mu,$ under the Assumption
\ref{a1} given below, which represents a sufficient condition for the
characteristic function of $\Theta_{\#}\mu$ to be well defined.
\end{remark}

\begin{assumption}
\label{a1}$\mathsf{supp}p=\left[  0,1\right]  .$
\end{assumption}

Notice that, by (\ref{Tomega1}), if, for any $N\geq2,\mu_{N}^{X^{\left(
1\right)  },X^{\left(  2\right)  }}\in\mathfrak{P}\left(  \left[  0,1\right]
^{2}\right)  ,$ then $\mathbf{T}_{\#}\mu_{N}^{X^{\left(  1\right)
},X^{\left(  2\right)  }}\in\mathfrak{P}\left(  \left[  0,1\right]
^{2}\right)  .$

\begin{theorem}
\label{mf2}Under Assumption \ref{a1}, if the sequence $\left\{  \mu
_{N}^{X^{\left(  1\right)  },X^{\left(  2\right)  }}\right\}  _{N\geq1}%
\subset\mathfrak{P}\left(  \left[  0,1\right]  ^{2}\right)  $ weakly converges
to $\mu,$ the sequence of measures $\left\{  \mathbf{T}_{\#}\mu_{N}%
^{X^{\left(  1\right)  },X^{\left(  2\right)  }}\right\}  _{N\geq1}$ weakly
converges to $\Theta_{\#}\mu.$
\end{theorem}

\begin{proof}
For any $\varepsilon>0,$ let us set
\begin{align}
\mathcal{A}_{v,N}^{\left(  1\right)  }\left(  \varepsilon\right)   &
:=\left\{  \omega\in\Omega_{N\mu}:\left\vert \frac{1}{N}\sum_{u=1}^{N}%
\omega_{v,u}-\mu_{N}^{X^{\left(  1\right)  }}\left[  p\left(  \left\vert
X_{v}^{\left(  1\right)  }-\cdot\right\vert \right)  \right]  \right\vert
\leq\varepsilon\right\}  \ ,\\
\mathcal{A}_{v,N}^{\left(  2\right)  }\left(  \varepsilon\right)   &
:=\left\{  \omega\in\Omega_{N}:\left\vert \frac{1}{N}\sum_{u=1}^{N}%
\omega_{v,u}-\mu_{N}^{X^{\left(  2\right)  }}\left[  p\left(  \left\vert
X_{v}^{\left(  2\right)  }-\cdot\right\vert \right)  \right]  \right\vert
\leq\varepsilon\right\}
\end{align}
and $\mathcal{A}_{N}^{\left(  i\right)  }\left(  \varepsilon\right)
:=\bigcap\limits_{v\in\mathbf{V}_{N}}\mathcal{A}_{v,N}^{\left(  i\right)
}\left(  \varepsilon\right)  ,i=1,2.$

By the Chernoff bound for Bernoulli r.v.'s, for any $i=1,2,$ we get
\begin{gather}
\sum_{\omega\in\Omega_{N}}\Pi\left(  \omega|X^{\left(  i\right)  }\right)
\mathbf{1}_{\left(  \mathcal{A}_{N}^{\left(  i\right)  }\left(  \varepsilon
\right)  \right)  ^{c}}\left(  \omega\right)  \leq\sum_{v=1}^{N}\sum
_{\omega\in\Omega_{N}}\Pi\left(  \omega|X^{\left(  i\right)  }\right)
\mathbf{1}_{\left(  \mathcal{A}_{v,N}^{\left(  i\right)  }\left(
\varepsilon\right)  \right)  ^{c}}\left(  \omega\right) \\
\leq2N\int_{\left[  0,1\right]  }\mu_{N}^{X^{\left(  i\right)  }}\left(
dx\right)  \exp\left[  -N\int_{\left[  0,1\right]  }\mu_{N}^{X^{\left(
i\right)  }}\left(  dy\right)  \left(  \mathcal{H}\left(  p\left(  \left\vert
x-y\right\vert \right)  +\varepsilon\right)  \wedge\mathcal{H}\left(  p\left(
\left\vert x-y\right\vert \right)  -\varepsilon\right)  \right)  \right]
\ ,\nonumber
\end{gather}
where, for any $\left(  x,y\right)  \in\left[  0,1\right]  ^{2},\mathcal{H}%
\left(  p\left(  \left\vert x-y\right\vert \right)  \pm\varepsilon\right)  $
is the relative entropy (Kullback-Leibler divergence) of a Bernoulli
distribution with parameter $p\left(  \left\vert x-y\right\vert \right)
\pm\varepsilon$ w.r.t. a Bernoulli distribution with parameter $p\left(
\left\vert x-y\right\vert \right)  ,$ which for $\varepsilon$ sufficiently
small gives
\begin{equation}
\sum_{\omega\in\Omega_{N}}\Pi\left(  \omega|X^{\left(  i\right)  }\right)
\mathbf{1}_{\left(  \mathcal{A}_{N}^{\left(  i\right)  }\left(  \varepsilon
\right)  \right)  ^{c}}\left(  \omega\right)  \leq2Ne^{-2N\varepsilon^{2}}\ .
\label{LDe}%
\end{equation}

Considering $\left\{  \mu_{N}^{X^{\left(  1\right)  },X^{\left(  2\right)  }%
}\right\}  _{N\geq1}$ as a sequence in $\mathfrak{P}\left(  \mathbb{R}%
^{2}\right)  ,$ from (\ref{TTT}) and (\ref{TTTtilda}), for any $\varphi\in
Lip\left(  \mathbb{R}^{2}\right)  $ and any sufficiently small $\varepsilon>0$
we get
\begin{gather}
\left\vert \mathbf{T}_{\#}\mu_{N}^{X^{\left(  1\right)  },X^{\left(  2\right)
}}\left[  \varphi\right]  -\mathbf{\tilde{T}}_{\#}\mu_{N}^{X^{\left(
1\right)  },X^{\left(  2\right)  }}\left[  \varphi\right]  \right\vert
\leq\frac{1}{N}\sum_{v=1}^{N}\sum_{\omega,\omega^{\prime}\in\Omega_{N}}%
\Pi\left(  \omega|X^{\left(  1\right)  }\right)  \Pi\left(  \omega^{\prime
}|X^{\left(  2\right)  }\right)  \times\\
\times\left\vert \varphi\left(  \frac{1}{N}\sum_{u=1}^{N}\frac{\omega_{v,u}%
}{\frac{1}{N}\sum_{u=1}^{N}\omega_{v,u}}X_{u}^{\left(  2\right)  },\frac{1}%
{N}\sum_{u=1}^{N}\frac{\omega_{v,u}^{\prime}}{\frac{1}{N}\sum_{u=1}^{N}%
\omega_{v,u}^{\prime}}\frac{1}{N}\sum_{w=1}^{N}\frac{\omega_{u,w}}{\frac{1}%
{N}\sum_{u=1}^{N}\omega_{u,w}}X_{w}^{\left(  2\right)  }\right)  -\right.
\nonumber\\
\left.  \varphi\left(  \frac{1}{N}\sum_{u=1}^{N}\frac{\omega_{v,u}}{\mu
_{N}^{X^{\left(  1\right)  }}\left[  p\left(  \left\vert X_{v}^{\left(
1\right)  }-\cdot\right\vert \right)  \right]  }X_{u}^{\left(  2\right)
},\frac{1}{N}\sum_{u=1}^{N}\frac{\omega_{v,u}^{\prime}}{\mu_{N}^{X^{\left(
2\right)  }}\left[  p\left(  \left\vert X_{v}^{\left(  2\right)  }%
-\cdot\right\vert \right)  \right]  }\frac{1}{N}\sum_{w=1}^{N}\frac
{\omega_{u,w}}{\mu_{N}^{X^{\left(  2\right)  }}\left[  p\left(  \left\vert
X_{u}^{\left(  1\right)  }-\cdot\right\vert \right)  \right]  }X_{w}^{\left(
2\right)  }\right)  \right\vert \times\nonumber\\
\times\left[  \mathbf{1}_{\mathcal{A}_{N}^{\left(  1\right)  }\left(
\varepsilon\right)  }\left(  \omega\right)  \mathbf{1}_{\mathcal{A}%
_{N}^{\left(  2\right)  }\left(  \varepsilon\right)  }\left(  \omega^{\prime
}\right)  +\mathbf{1}_{\left(  \mathcal{A}_{N}^{\left(  1\right)  }\left(
\varepsilon\right)  \right)  ^{c}}\left(  \omega\right)  +\mathbf{1}_{\left(
\mathcal{A}_{N}^{\left(  2\right)  }\left(  \varepsilon\right)  \right)  ^{c}%
}\left(  \omega^{\prime}\right)  \right]  \ .\nonumber
\end{gather}
From (\ref{LDe}) we have
\begin{gather}
\frac{1}{N}\sum_{v=1}^{N}\sum_{\omega,\omega^{\prime}\in\Omega_{N}}\Pi\left(
\omega|X^{\left(  1\right)  }\right)  \Pi\left(  \omega^{\prime}|X^{\left(
2\right)  }\right)  \times\label{ld1}\\
\times\left\vert \varphi\left(  \frac{1}{N}\sum_{u=1}^{N}\frac{\omega_{v,u}%
}{\frac{1}{N}\sum_{u=1}^{N}\omega_{v,u}}X_{u}^{\left(  2\right)  },\frac{1}%
{N}\sum_{u=1}^{N}\frac{\omega_{v,u}^{\prime}}{\frac{1}{N}\sum_{u=1}^{N}%
\omega_{v,u}^{\prime}}\frac{1}{N}\sum_{w=1}^{N}\frac{\omega_{u,w}}{\frac{1}%
{N}\sum_{u=1}^{N}\omega_{u,w}}X_{w}^{\left(  2\right)  }\right)  -\right.
\nonumber\\
\left.  \varphi\left(  \frac{1}{N}\sum_{u=1}^{N}\frac{\omega_{v,u}}{\mu
_{N}^{X^{\left(  1\right)  }}\left[  p\left(  \left\vert X_{v}^{\left(
1\right)  }-\cdot\right\vert \right)  \right]  }X_{u}^{\left(  2\right)
},\frac{1}{N}\sum_{u=1}^{N}\frac{\omega_{v,u}^{\prime}}{\mu_{N}^{X^{\left(
2\right)  }}\left[  p\left(  \left\vert X_{v}^{\left(  2\right)  }%
-\cdot\right\vert \right)  \right]  }\frac{1}{N}\sum_{w=1}^{N}\frac
{\omega_{u,w}}{\mu_{N}^{X^{\left(  1\right)  }}\left[  p\left(  \left\vert
X_{u}^{\left(  1\right)  }-\cdot\right\vert \right)  \right]  }X_{w}^{\left(
2\right)  }\right)  \right\vert \times\nonumber\\
\times\left[  \mathbf{1}_{\left(  \mathcal{A}_{N}^{\left(  1\right)  }\left(
\varepsilon\right)  \right)  ^{c}}\left(  \omega\right)  +\mathbf{1}_{\left(
\mathcal{A}_{N}^{\left(  2\right)  }\left(  \varepsilon\right)  \right)  ^{c}%
}\left(  \omega^{\prime}\right)  \right]  \leq8\left\Vert \varphi\right\Vert
_{Lip}Ne^{-2N\varepsilon^{2}}\ ,\nonumber
\end{gather}
while, setting
\begin{gather}
D_{N}^{\varepsilon}\left(  \varphi\right)  :=\frac{1}{N}\sum_{v=1}^{N}%
\sum_{\omega,\omega^{\prime}\in\Omega_{N}}\Pi\left(  \omega|X^{\left(
1\right)  }\right)  \Pi\left(  \omega^{\prime}|X^{\left(  2\right)  }\right)
\mathbf{1}_{\mathcal{A}_{N}^{\left(  1\right)  }\left(  \varepsilon\right)
}\left(  \omega\right)  \mathbf{1}_{\mathcal{A}_{N}^{\left(  2\right)
}\left(  \varepsilon\right)  }\left(  \omega^{\prime}\right)  \times\\
\times\left\vert \varphi\left(  \frac{1}{N}\sum_{u=1}^{N}\frac{\omega_{v,u}%
}{\frac{1}{N}\sum_{u=1}^{N}\omega_{v,u}}X_{u}^{\left(  2\right)  },\frac{1}%
{N}\sum_{u=1}^{N}\frac{\omega_{v,u}^{\prime}}{\frac{1}{N}\sum_{u=1}^{N}%
\omega_{v,u}^{\prime}}\frac{1}{N}\sum_{w=1}^{N}\frac{\omega_{u,w}}{\frac{1}%
{N}\sum_{u=1}^{N}\omega_{u,w}}X_{w}^{\left(  2\right)  }\right)  -\right.
\nonumber\\
\left.  \varphi\left(  \frac{1}{N}\sum_{u=1}^{N}\frac{\omega_{v,u}}{\mu
_{N}^{X^{\left(  1\right)  }}\left[  p\left(  \left\vert X_{v}^{\left(
1\right)  }-\cdot\right\vert \right)  \right]  }X_{u}^{\left(  2\right)
},\frac{1}{N}\sum_{u=1}^{N}\frac{\omega_{v,u}^{\prime}}{\mu_{N}^{X^{\left(
1\right)  }}\left[  p\left(  \left\vert X_{v}^{\left(  2\right)  }%
-\cdot\right\vert \right)  \right]  }\frac{1}{N}\sum_{w=1}^{N}\frac
{\omega_{u,w}}{\mu_{N}^{X^{\left(  2\right)  }}\left[  p\left(  \left\vert
X_{u}^{\left(  1\right)  }-\cdot\right\vert \right)  \right]  }X_{w}^{\left(
2\right)  }\right)  \right\vert \nonumber
\end{gather}
we obtain
\begin{gather}
D_{N}^{\varepsilon}\left(  \varphi\right)  \leq\left\Vert \varphi\right\Vert
_{Lip}\frac{1}{N}\sum_{v=1}^{N}\sum_{\omega,\omega^{\prime}\in\Omega_{N}}%
\Pi\left(  \omega|X^{\left(  1\right)  }\right)  \Pi\left(  \omega^{\prime
}|X^{\left(  2\right)  }\right)  \times\\
\times\mathbf{1}_{\mathcal{A}_{N}^{\left(  1\right)  }\left(  \varepsilon
\right)  }\left(  \omega\right)  \mathbf{1}_{\mathcal{A}_{N}^{\left(
2\right)  }\left(  \varepsilon\right)  }\left(  \omega^{\prime}\right)
\left[  \frac{1}{N}\sum_{u=1}^{N}\omega_{v,u}X_{u}^{\left(  2\right)
}\left\vert \frac{1}{\frac{1}{N}\sum_{u=1}^{N}\omega_{v,u}}-\frac{1}{\mu
_{N}^{X^{\left(  1\right)  }}\left[  p\left(  \left\vert X_{v}^{\left(
1\right)  }-\cdot\right\vert \right)  \right]  }\right\vert +\right.
\nonumber\\
+\frac{1}{N}\sum_{u=1}^{N}\omega_{v,u}^{\prime}\sum_{w=1}^{N}\frac{1}{N}%
\omega_{u,w}X_{w}^{\left(  2\right)  }\times\nonumber\\
\left.  \times\left\vert \frac{1}{\frac{1}{N}\sum_{u=1}^{N}\omega
_{v,u}^{\prime}\frac{1}{N}\sum_{w=1}^{N}\omega_{u,w}}-\frac{1}{\mu
_{N}^{X^{\left(  2\right)  }}\left[  p\left(  \left\vert X_{v}^{\left(
2\right)  }-\cdot\right\vert \right)  \right]  \mu_{N}^{X^{\left(  1\right)
}}\left[  p\left(  \left\vert X_{u}^{\left(  1\right)  }-\cdot\right\vert
\right)  \right]  }\right\vert \right] \nonumber\\
=\left\Vert \varphi\right\Vert _{Lip}\frac{1}{N}\sum_{v=1}^{N}\sum
_{\omega,\omega^{\prime}\in\Omega_{N}}\Pi\left(  \omega|X^{\left(  1\right)
}\right)  \Pi\left(  \omega^{\prime}|X^{\left(  2\right)  }\right)
\mathbf{1}_{\mathcal{A}_{N}^{\left(  1\right)  }\left(  \varepsilon\right)
}\left(  \omega\right)  \mathbf{1}_{\mathcal{A}_{N}^{\left(  2\right)
}\left(  \varepsilon\right)  }\left(  \omega^{\prime}\right)  \times
\nonumber\\
\times\left[  \frac{1}{N}\sum_{u=1}^{N}\frac{\omega_{v,u}}{\frac{1}{N}%
\sum_{u=1}^{N}\omega_{v,u}}X_{u}^{\left(  2\right)  }\frac{\left\vert \frac
{1}{N}\sum_{u=1}^{N}\omega_{v,u}-\mu_{N}^{X^{\left(  1\right)  }}\left[
p\left(  \left\vert X_{v}^{\left(  1\right)  }-\cdot\right\vert \right)
\right]  \right\vert }{\mu_{N}^{X^{\left(  1\right)  }}\left[  p\left(
\left\vert X_{v}^{\left(  1\right)  }-\cdot\right\vert \right)  \right]
}+\right. \nonumber\\
+\frac{1}{N}\sum_{u=1}^{N}\frac{\omega_{v,u}^{\prime}}{\frac{1}{N}\sum
_{u=1}^{N}\omega_{v,u}^{\prime}}\frac{1}{N}\sum_{w=1}^{N}\frac{\omega_{u,w}%
}{\frac{1}{N}\sum_{w=1}^{N}\omega_{u,w}}X_{w}^{\left(  2\right)  }%
\times\nonumber\\
\times\left.  \frac{\left\vert \frac{1}{N}\sum_{u=1}^{N}\omega_{v,u}^{\prime
}\frac{1}{N}\sum_{w=1}^{N}\omega_{u,w}-\mu_{N}^{X^{\left(  2\right)  }}\left[
p\left(  \left\vert X_{v}^{\left(  2\right)  }-\cdot\right\vert \right)
\right]  \mu_{N}^{X^{\left(  1\right)  }}\left[  p\left(  \left\vert
X_{u}^{\left(  1\right)  }-\cdot\right\vert \right)  \right]  \right\vert
}{\mu_{N}^{X^{\left(  2\right)  }}\left[  p\left(  \left\vert X_{v}^{\left(
2\right)  }-\cdot\right\vert \right)  \right]  \mu_{N}^{X^{\left(  1\right)
}}\left[  p\left(  \left\vert X_{u}^{\left(  1\right)  }-\cdot\right\vert
\right)  \right]  }\right]  \ .\nonumber
\end{gather}
Since $\frac{1}{N}\sum_{u=1}^{N}\frac{\omega_{v,u}}{\frac{1}{N}\sum_{u=1}%
^{N}\omega_{v,u}}X_{u}\leq1$ uniformly in $\omega\in\Omega_{N},X\in\left[
0,1\right]  ^{N}$ and
\begin{gather}
\left\vert \frac{1}{N}\sum_{u=1}^{N}\omega_{v,u}^{\prime}\frac{1}{N}\sum
_{w=1}^{N}\omega_{u,w}-\mu_{N}^{X^{\left(  2\right)  }}\left[  p\left(
\left\vert X_{v}^{\left(  2\right)  }-\cdot\right\vert \right)  \right]
\mu_{N}^{X^{\left(  1\right)  }}\left[  p\left(  \left\vert X_{u}^{\left(
1\right)  }-\cdot\right\vert \right)  \right]  \right\vert \\
\leq\frac{1}{N}\sum_{u=1}^{N}\omega_{v,u}^{\prime}\left\vert \frac{1}{N}%
\sum_{w=1}^{N}\omega_{u,w}-\mu_{N}^{X^{\left(  1\right)  }}\left[  p\left(
\left\vert X_{u}^{\left(  1\right)  }-\cdot\right\vert \right)  \right]
\right\vert +\nonumber\\
+\mu_{N}^{X^{\left(  1\right)  }}\left[  p\left(  \left\vert X_{u}^{\left(
1\right)  }-\cdot\right\vert \right)  \right]  \left\vert \frac{1}{N}%
\sum_{u=1}^{N}\omega_{v,u}^{\prime}-\mu_{N}^{X^{\left(  2\right)  }}\left[
p\left(  \left\vert X_{v}^{\left(  2\right)  }-\cdot\right\vert \right)
\right]  \right\vert \ ,\nonumber
\end{gather}
we have
\begin{align}
D_{N}^{\varepsilon}\left(  \varphi\right)   &  \leq\left\Vert \varphi
\right\Vert _{Lip}\frac{1}{N}\sum_{v=1}^{N}\sum_{\omega,\omega^{\prime}%
\in\Omega_{N}}\Pi\left(  \omega|X^{\left(  1\right)  }\right)  \Pi\left(
\omega^{\prime}|X^{\left(  2\right)  }\right)  \mathbf{1}_{\mathcal{A}%
_{N}^{\left(  1\right)  }\left(  \varepsilon\right)  }\left(  \omega\right)
\mathbf{1}_{\mathcal{A}_{N}^{\left(  2\right)  }\left(  \varepsilon\right)
}\left(  \omega^{\prime}\right)  \times\\
&  \times\left[  \frac{\left\vert \frac{1}{N}\sum_{u=1}^{N}\omega_{v,u}%
-\mu_{N}^{X^{\left(  1\right)  }}\left[  p\left(  \left\vert X_{v}^{\left(
1\right)  }-\cdot\right\vert \right)  \right]  \right\vert }{\mu
_{N}^{X^{\left(  1\right)  }}\left[  p\left(  \left\vert X_{v}^{\left(
1\right)  }-\cdot\right\vert \right)  \right]  }+\right. \nonumber\\
&  +\frac{1}{N}\sum_{u=1}^{N}\frac{\omega_{v,u}^{\prime}}{\mu_{N}^{X^{\left(
1\right)  }}\left[  p\left(  \left\vert X_{u}^{\left(  1\right)  }%
-\cdot\right\vert \right)  \right]  }\frac{\left\vert \frac{1}{N}\sum
_{w=1}^{N}\omega_{u,w}-\mu_{N}^{X^{\left(  1\right)  }}\left[  p\left(
\left\vert X_{u}^{\left(  1\right)  }-\cdot\right\vert \right)  \right]
\right\vert }{\mu_{N}^{X^{\left(  2\right)  }}\left[  p\left(  \left\vert
X_{v}^{\left(  2\right)  }-\cdot\right\vert \right)  \right]  }+\nonumber\\
&  \left.  +\frac{1}{N}\sum_{u=1}^{N}\frac{\omega_{v,u}^{\prime}}{\frac{1}%
{N}\sum_{u=1}^{N}\omega_{v,u}^{\prime}}\frac{\left\vert \frac{1}{N}\sum
_{u=1}^{N}\omega_{v,u}^{\prime}-\mu_{N}^{X^{\left(  2\right)  }}\left[
p\left(  \left\vert X_{v}^{\left(  2\right)  }-\cdot\right\vert \right)
\right]  \right\vert }{\mu_{N}^{X^{\left(  2\right)  }}\left[  p\left(
\left\vert X_{v}^{\left(  2\right)  }-\cdot\right\vert \right)  \right]
}\right] \nonumber\\
&  \leq\left\Vert \varphi\right\Vert _{Lip}\left[  \int_{\left[  0,1\right]
}\mu_{N}^{X^{\left(  1\right)  }}\left(  dx\right)  \frac{\varepsilon}{\mu
_{N}^{X^{\left(  1\right)  }}\left[  p\left(  \left\vert x-\cdot\right\vert
\right)  \right]  }+\right. \nonumber\\
&  +\int_{\left[  0,1\right]  }\mu_{N}^{X^{\left(  2\right)  }}\left(
dx\right)  \int_{\left[  0,1\right]  ^{2}}\mu_{N}^{X^{\left(  1\right)
},X^{\left(  2\right)  }}\left(  dy,dz\right)  \frac{p\left(  \left\vert
x-z\right\vert \right)  }{\mu_{N}^{X^{\left(  2\right)  }}\left[  p\left(
\left\vert y-\cdot\right\vert \right)  \right]  }\frac{\varepsilon}{\mu
_{N}^{X^{\left(  1\right)  }}\left[  p\left(  \left\vert x-\cdot\right\vert
\right)  \right]  }+\nonumber\\
&  \left.  +\frac{1}{N}\sum_{v=1}^{N}\frac{\mu_{N}^{X^{\left(  2\right)  }%
}\left[  p\left(  \left\vert X_{v}^{\left(  2\right)  }-\cdot\right\vert
\right)  \right]  +\varepsilon}{\mu_{N}^{X^{\left(  2\right)  }}\left[
p\left(  \left\vert X_{v}^{\left(  2\right)  }-\cdot\right\vert \right)
\right]  -\varepsilon}\frac{\varepsilon}{\mu_{N}^{X^{\left(  2\right)  }%
}\left[  p\left(  \left\vert X_{v}^{\left(  2\right)  }-\cdot\right\vert
\right)  \right]  }\right]  \ .\nonumber
\end{align}
Therefore
\begin{align}
D_{N}^{\varepsilon}\left(  \varphi\right)   &  \leq\varepsilon\left\Vert
\varphi\right\Vert _{Lip}\left[  \int_{\left[  0,1\right]  }\mu_{N}%
^{X^{\left(  1\right)  }}\left(  dx\right)  \frac{1}{\mu_{N}^{X^{\left(
1\right)  }}\left[  p\left(  \left\vert x-\cdot\right\vert \right)  \right]
}+\right. \\
&  +\int_{\left[  0,1\right]  }\mu_{N}^{X^{\left(  2\right)  }}\left(
dx\right)  \int_{\left[  0,1\right]  ^{2}}\mu_{N}^{X^{\left(  1\right)
},X^{\left(  2\right)  }}\left(  dy,dz\right)  \frac{p\left(  \left\vert
x-z\right\vert \right)  }{\mu_{N}^{X^{\left(  2\right)  }}\left[  p\left(
\left\vert y-\cdot\right\vert \right)  \right]  \mu_{N}^{X^{\left(  1\right)
}}\left[  p\left(  \left\vert x-\cdot\right\vert \right)  \right]
}+\nonumber\\
&  \left.  +\int_{\left[  0,1\right]  }\mu_{N}^{X^{\left(  2\right)  }}\left(
dy\right)  \frac{1}{\mu_{N}^{X^{\left(  2\right)  }}\left[  p\left(
\left\vert y-\cdot\right\vert \right)  \right]  }\frac{\mu_{N}^{X^{\left(
2\right)  }}\left[  p\left(  \left\vert y-\cdot\right\vert \right)  \right]
+\varepsilon}{\mu_{N}^{X^{\left(  2\right)  }}\left[  p\left(  \left\vert
y-\cdot\right\vert \right)  \right]  -\varepsilon}\right]  \ .\nonumber
\end{align}

Hence, since $\left\{  \mu_{N}^{X^{\left(  1\right)  },X^{\left(  2\right)  }%
}\right\}  _{N\geq1}$ weakly converges to $\mu$ the limit as $N\uparrow
\infty,$ there exists $C:=C\left(  p,\mu\right)  >0$ such that $\mathfrak{d}%
\left(  \mathbf{T}_{\#}\mu_{N}^{X^{\left(  1\right)  },X^{\left(  2\right)  }%
},\mathbf{\tilde{T}}_{\#}\mu_{N}^{X^{\left(  1\right)  },X^{\left(  2\right)
}}\right)  \leq C\varepsilon$ for any arbitrary choice of $\varepsilon>0.$
Moreover,
\begin{equation}
\mathfrak{d}\left(  \mathbf{T}_{\#}\mu_{N}^{X^{\left(  1\right)  },X^{\left(
2\right)  }},\Theta_{\#}\mu\right)  \leq\mathfrak{d}\left(  \mathbf{T}_{\#}%
\mu_{N}^{X^{\left(  1\right)  },X^{\left(  2\right)  }},\mathbf{\tilde{T}%
}_{\#}\mu_{N}^{X^{\left(  1\right)  },X^{\left(  2\right)  }}\right)
+\mathfrak{d}\left(  \mathbf{\tilde{T}}_{\#}\mu_{N}^{X^{\left(  1\right)
},X^{\left(  2\right)  }},\Theta_{\#}\mu\right)
\end{equation}
and since, by the previous Lemma, $\lim_{N\rightarrow\infty}\mathfrak{d}%
\left(  \mathbf{\tilde{T}}_{\#}\mu_{N}^{X^{\left(  1\right)  },X^{\left(
2\right)  }},\Theta_{\#}\mu\right)  =0,$ the thesis follows.
\end{proof}

In the more general case where 
\begin{equation}
r_{i,j}:=\varrho\left(  X_{j},X_{i}\right)  \;,\qquad\left(  i,j\right)
\in\mathbf{V}_{N}\times\mathbf{V}_{N}=:\mathbf{E}_{N}\label{r=rho}%
\end{equation} 
with $\varrho\left(  X_{i},X_{i}\right)=1 ,$ besides Assumption \ref{a1} let us also assume that:

\begin{assumption}
\label{a2}$\mathsf{supp}\varrho=\left[  0,1\right]  ^{2}$ and $\varrho\in
Lip\left(  \left[  0,1\right]  ^{2},\left[  0,1\right]  \right)  .$
\end{assumption}

Given $v\in\mathbf{V}_{N},$ (\ref{Tomega1}) rewrites as
\begin{equation}
\left(  \Xi_{N},\Omega_{N}\right)  \ni\left(  X,\omega\right)  \longmapsto
\mathcal{T}_{v}\left(  X,\omega\right)  :=\frac{\sum_{u\in\mathbf{V}}%
\omega_{v,u}\varrho\left(  X_{v},X_{u}\right)  X_{u}}{\sum_{u\in\mathbf{V}%
}\omega_{v,u}\varrho\left(  X_{v},X_{u}\right)  }\in\mathbb{R}_{+}\ .
\end{equation}
Setting $\Xi_{N}\ni X\longmapsto\mathcal{\breve{T}}\left(  X,\omega\right)
\in\Xi_{N},$ where, for any $v\in\mathbf{V}_{N},$%
\begin{equation}
\left(  \Xi_{N},\Omega_{N}\right)  \ni\left(  X,\omega\right)  \longmapsto
\mathcal{\breve{T}}_{v}\left(  X,\omega\right)  :=\frac{\frac{1}{N}\sum
_{u\in\mathbf{V}}\omega_{v,u}\varrho\left(  X_{v},X_{u}\right)  X_{u}}{\mu
_{N}^{X}\left[  p\left(  \left\vert X_{v}-\cdot\right\vert \right)
\varrho\left(  X_{v},\cdot\right)  \right]  }\in\mathbb{R}_{+}%
\end{equation}
and consequently denoting by $\mathbf{\breve{T}}$ the operator on $BM\left(
\Xi_{N}^{2}\right)  $ defined as in (\ref{TTTtilda}), namely
\begin{equation}
\left(  \mathbf{\breve{T}}\varphi\right)  \left(  X^{\left(  1\right)
},X^{\left(  2\right)  }\right)  :=\sum_{\omega,\omega^{\prime}\in\Omega_{N}%
}\varphi\left(  \mathcal{\breve{T}}\left(  X^{\left(  2\right)  }%
,\omega\right)  ,\mathcal{\breve{T}}\left(  \mathcal{\breve{T}}\left(
X^{\left(  2\right)  },\omega\right)  ,\omega^{\prime}\right)  \right)
\Pi\left(  \omega|X^{\left(  1\right)  }\right)  \Pi\left(  \omega^{\prime
}|X^{\left(  2\right)  }\right)  \ , \label{TTTv}%
\end{equation}
the proof of the following results rely on the same strategy which led to the
proofs of Lemma \ref{mf1} and Theorem \ref{mf2}, although in this last case
the details are more involved.

Let us set, for any $\mu\in\mathfrak{P}\left(  \left[  0,1\right]
^{2}\right)  $%
\begin{equation}
\left[  0,1\right]  ^{2}\ni\left(  x,y\right)  \longmapsto\Theta_{\mu
}^{\varrho}\left(  x,y\right) =(\Theta_{\mu}^{\varrho,(1)}(x,y),\Theta_{\mu}^{\varrho,(2)}(x,y))
:=\left(  \theta_{\mu}^{\varrho}\left(
x,y\right)  ,\theta_{\mu}^{\varrho}\circ\theta_{\mu}^{\varrho}\left(
x,y\right)  \right)  \in\left[  0,1\right]  ^{2}\ ,
\end{equation}
where, $\theta_{\mu}^{\varrho}:\left[  0,1\right]  ^{2}\longrightarrow\left[
0,1\right]  $ is given in (\ref{theta}), namely
\begin{equation}
\left[  0,1\right]  ^{2}\ni\left(  x,y\right)  \longmapsto\theta_{\mu
}^{\varrho}\left(  x,y\right)  :=\frac{\int_{\left[  0,1\right]  ^{2}}%
\mu\left(  dx^{\prime},dy^{\prime}\right)  p\left(  \left\vert x-x^{\prime
}\right\vert \right)  \varrho\left(  y,y^{\prime}\right)  y^{\prime}}%
{\int_{\left[  0,1\right]  ^{2}}\mu\left(  dx^{\prime},dy^{\prime}\right)
p\left(  \left\vert x-x^{\prime}\right\vert \right)  \varrho\left(
y,y^{\prime}\right)  }\in\left[  0,1\right]  \ .
\end{equation}

\begin{lemma}
Under the Assumption \ref{a1}, if the sequence $\left\{  \mu_{N}^{X^{\left(
1\right)  },X^{\left(  2\right)  }}\right\}  _{N\geq1}\subset\mathfrak{P}%
\left(  \left[  0,1\right]  ^{2}\right)  $ weakly converges to $\mu,$ then the
sequence of measures $\left\{  \mathbf{\breve{T}}_{\#}\mu_{N}^{X^{\left(
1\right)  },X^{\left(  2\right)  }}\right\}  _{N\geq1}$ weakly converges to
\begin{equation}
\Theta_{\#}^{\varrho}\mu\left[  \varphi\right]  :=\mu\left[  \varphi\left(
\Theta_{\mu}^{\varrho,\left(  1\right)  }\left(  \cdot,\cdot\right)
,\Theta_{\mu}^{\varrho,\left(  2\right)  }\left(  \cdot,\cdot\right)
\right)  \right].
\end{equation}

\end{lemma}

\begin{proof}
As already pointed out in Remark \ref{mf3}, by the L\'{e}vy's continuity
theorem its enough to show that
\begin{equation}
\mathbb{R}^{2}\ni\left(  \lambda_{1},\lambda_{2}\right)  =:\lambda
\longmapsto\mu\left[  e^{i\left\langle \lambda,\Theta_{\mu}^{\varrho}\left(
\cdot,\cdot\right)  \right\rangle }\right]  =\int_{\left[  0,1\right]  ^{2}%
}\Theta_{\#}^{\varrho}\mu\left(  dx,dy\right)  \left[  e^{i\left(  \lambda
_{1}x+\lambda_{2}y\right)  }\right]  \in\mathbb{C}%
\end{equation}
is the pointwise limit of the sequence of the characteristic functions of the
elements of the sequence $\left\{  \mathbf{\breve{T}}_{\#}\mu_{N}^{X^{\left(
1\right)  },X^{\left(  2\right)  }}\right\}  _{N\geq1}.$ The proof of this result is
identical of that of Lemma \ref{mf1}\ and therefore omitted.
\end{proof}

\begin{theorem}
\label{mf4}Under the Assumptions \ref{a1} and \ref{a2}, if the sequence
$\left\{  \mu_{N}^{X^{\left(  1\right)  },X^{\left(  2\right)  }}\right\}
_{N\geq1}\subset\mathfrak{P}\left(  \left[  0,1\right]  ^{2}\right)  $ weakly
converges to $\mu,$ the sequence of measures $\left\{  \mathbf{T}_{\#}\mu
_{N}^{X^{\left(  1\right)  },X^{\left(  2\right)  }}\right\}  _{N\geq1}$
weakly converges to $\Theta_{\#}^{\varrho}\mu.$
\end{theorem}

\begin{proof}
For any $\varepsilon>0,$ let
\begin{align}
\mathcal{A}_{v,N}^{\left(  1\right)  ,\varrho}\left(  \varepsilon\right)   &
:=\left\{  \left(  \omega,\omega^{\prime}\right)  \in\Omega_{N}^{2}:\left\vert
\frac{1}{N}\sum_{u=1}^{N}\omega_{v,u}\varrho\left(  X_{v}^{\left(  2\right)
},X_{u}^{\left(  2\right)  }\right)  -\mu_{N}^{X^{\left(  1\right)
},X^{\left(  2\right)  }}\left[  p\left(  \left\vert X_{v}^{\left(  1\right)
}-\cdot\right\vert \right)  \varrho\left(  X_{v}^{\left(  2\right)  }%
,\cdot\right)  \right]  \right\vert \leq\varepsilon\right\}  \ ,\\
\mathcal{A}_{v,N}^{\left(  1\right)  ,\varrho}\left(  \varepsilon\right)   &
:=\left\{  \left(  \omega,\omega^{\prime}\right)  \in\Omega_{N}^{2}:\left\vert
\frac{1}{N}\sum_{u=1}^{N}\omega_{v,u}^{\prime}\varrho\left(  \mathcal{\breve
{T}}_{v}\left(  X^{\left(  2\right)  },\omega\right)  ,\mathcal{\breve{T}}%
_{u}\left(  X^{\left(  2\right)  },\omega\right)  \right)  -\right.  \right.
\\
&  \left.  \left.  \frac{1}{N}\sum_{u=1}^{N}p\left(  \left\vert X_{v}^{\left(
2\right)  }-X_{u}^{\left(  2\right)  }\right\vert \right)  \varrho\left(
\mathcal{\breve{T}}_{v}\left(  X^{\left(  2\right)  },\omega\right)
,\mathcal{\breve{T}}_{u}\left(  X^{\left(  2\right)  },\omega\right)  \right)
\right\vert \leq\varepsilon\right\} \nonumber
\end{align}
and $\mathcal{A}_{N}^{\left(  i\right)  ,\varrho}\left(  \varepsilon\right)
:=\bigcap\limits_{v\in\mathbf{V}_{N}}\mathcal{A}_{v,N}^{\left(  i\right)
,\varrho}\left(  \varepsilon\right)  ,i=1,2,$ as in the proof of Theorem
\ref{mf2}. From (\ref{TTT}) and (\ref{TTTv}), for any $\varphi\in Lip\left(
\mathbb{R}^{2}\right)  $ we get
\begin{gather}
\left\vert \mathbf{T}_{\#}\mu_{N}^{X^{\left(  1\right)  },X^{\left(  2\right)
}}\left[  \varphi\right]  -\mathbf{\breve{T}}_{\#}\mu_{N}^{X^{\left(
1\right)  },X^{\left(  2\right)  }}\left[  \varphi\right]  \right\vert
\leq\frac{1}{N}\sum_{v=1}^{N}\sum_{\omega,\omega^{\prime}\in\Omega_{N}}%
\Pi\left(  \omega|X^{\left(  1\right)  }\right)  \Pi\left(  \omega^{\prime
}|X^{\left(  2\right)  }\right)  \times\\
\times\left\vert \varphi\left(  \mathcal{T}_{v}\left(  X^{\left(  2\right)
},\omega\right)  ,\mathcal{T}_{v}\left(  \mathcal{T}\left(  X^{\left(
2\right)  },\omega\right)  ,\omega^{\prime}\right)  \right)  -\left(
\varphi\left(  \mathcal{\breve{T}}_{v}\left(  X^{\left(  2\right)  }%
,\omega\right)  ,\mathcal{\breve{T}}_{v}\left(  \mathcal{\breve{T}}\left(
X^{\left(  2\right)  },\omega\right)  ,\omega^{\prime}\right)  \right)
\right)  \right\vert \times\nonumber\\
\times\left[  \mathbf{1}_{\mathcal{A}_{N}^{\left(  1\right)  ,\varrho}\left(
\varepsilon\right)  \cap\mathcal{A}_{N}^{\left(  2\right)  ,\varrho}\left(
\varepsilon\right)  }\left(  \omega,\omega^{\prime}\right)  +\mathbf{1}%
_{\left(  \mathcal{A}_{N}^{\left(  1\right)  ,\varrho}\left(  \varepsilon
\right)  \right)  ^{c}}\left(  \omega,\omega^{\prime}\right)  +\mathbf{1}%
_{\left(  \mathcal{A}_{N}^{\left(  2\right)  ,\rho}\left(  \varepsilon\right)
\right)  ^{c}}\left(  \omega,\omega^{\prime}\right)  \right]  \ .\nonumber
\end{gather}
By the Hoeffding's inequality for the sums r.v.'s $\sum_{u=1}^{N}\omega
_{v,u}\varrho\left(  X_{v}^{\left(  2\right)  },X_{u}^{\left(  2\right)
}\right)  ,v=1,..,N,$%
\begin{equation}
\sum_{\omega,\omega^{\prime}\in\Omega_{N}}\Pi\left(  \omega|X^{\left(
1\right)  }\right)  \Pi\left(  \omega^{\prime}|X^{\left(  2\right)  }\right)
\mathbf{1}_{\left(  \mathcal{A}_{N}^{\left(  1\right)  ,\varrho}\left(
\varepsilon\right)  \right)  ^{c}}\left(  \omega,\omega^{\prime}\right)
\leq2Ne^{-2N\varepsilon^{2}}\ .
\end{equation}
Moreover, since conditionally on $\omega\in\Omega_{N}$ the sums of r.v.'s
$\sum_{u=1}^{N}\omega_{v,u}^{\prime}\varrho\left(  \mathcal{\breve{T}}%
_{v}\left(  X^{\left(  2\right)  },\omega\right)  ,\mathcal{\breve{T}}%
_{u}\left(  X^{\left(  2\right)  },\omega\right)  \right)  ,v=1,..,N,$ are
sums of independent r.v.'s, by the Hoeffding's inequality we get
\begin{equation}
\sum_{\omega,\omega^{\prime}\in\Omega_{N}}\Pi\left(  \omega|X^{\left(
1\right)  }\right)  \Pi\left(  \omega^{\prime}|X^{\left(  2\right)  }\right)
\mathbf{1}_{\left(  \mathcal{A}_{N}^{\left(  2\right)  ,\varrho}\left(
\varepsilon\right)  \right)  ^{c}}\left(  \omega,\omega^{\prime}\right)
\leq2Ne^{-2N\varepsilon^{2}}\ .
\end{equation}
Hence, arguing as in (\ref{ld1}),
\begin{align}
&  \frac{1}{N}\sum_{v=1}^{N}\sum_{\omega,\omega^{\prime}\in\Omega_{N}}%
\Pi\left(  \omega|X^{\left(  1\right)  }\right)  \Pi\left(  \omega^{\prime
}|X^{\left(  2\right)  }\right)  \left[  \mathbf{1}_{\left(  \mathcal{A}%
_{N}^{\left(  1\right)  ,\varrho}\left(  \varepsilon\right)  \right)  ^{c}%
}\left(  \omega,\omega^{\prime}\right)  +\mathbf{1}_{\left(  \mathcal{A}%
_{N}^{\left(  2\right)  }\left(  \varepsilon\right)  \right)  ^{c}}\left(
\omega,\omega^{\prime}\right)  \right]  \times\\
&  \times\left\vert \varphi\left(  \mathcal{T}_{v}\left(  X^{\left(  2\right)
},\omega\right)  ,\mathcal{T}_{v}\left(  \mathcal{T}\left(  X^{\left(
2\right)  },\omega\right)  ,\omega^{\prime}\right)  \right)  -\left(
\varphi\left(  \mathcal{\breve{T}}_{v}\left(  X^{\left(  2\right)  }%
,\omega\right)  ,\mathcal{\breve{T}}_{v}\left(  \mathcal{\breve{T}}\left(
X^{\left(  2\right)  },\omega\right)  ,\omega^{\prime}\right)  \right)
\right)  \right\vert \nonumber\\
&  \leq8\left\Vert \varphi\right\Vert _{Lip}Ne^{-2N\varepsilon^{2}%
}\ .\nonumber
\end{align}
We are then left with the estimate of
\begin{gather}
D_{N}^{\varrho,\varepsilon}\left(  \varphi\right)  :=\frac{1}{N}\sum_{v=1}%
^{N}\sum_{\omega,\omega^{\prime}\in\Omega_{N}}\Pi\left(  \omega|X^{\left(
1\right)  }\right)  \Pi\left(  \omega^{\prime}|X^{\left(  2\right)  }\right)
\mathbf{1}_{\mathcal{A}_{N}^{\left(  1\right)  ,\varrho}\left(  \varepsilon
\right)  \cap\mathcal{A}_{N}^{\left(  2\right)  ,\varrho}\left(
\varepsilon\right)  }\left(  \omega,\omega^{\prime}\right)  \times
\label{Dre}\\
\times\left\vert \varphi\left(  \mathcal{T}_{v}\left(  X^{\left(  2\right)
},\omega\right)  ,\mathcal{T}_{v}\left(  \mathcal{T}\left(  X^{\left(
2\right)  },\omega\right)  ,\omega^{\prime}\right)  \right)  -\varphi\left(
\mathcal{\breve{T}}_{v}\left(  X^{\left(  2\right)  },\omega\right)
,\mathcal{\breve{T}}_{v}\left(  \mathcal{\breve{T}}\left(  X^{\left(
2\right)  },\omega\right)  ,\omega^{\prime}\right)  \right)  \right\vert
\ .\nonumber
\end{gather}
But
\begin{align}
\left\vert \mathcal{T}_{v}\left(  \mathcal{T}\left(  X^{\left(  2\right)
},\omega\right)  ,\omega^{\prime}\right)  -\mathcal{\breve{T}}_{v}\left(
\mathcal{\breve{T}}\left(  X^{\left(  2\right)  },\omega\right)
,\omega^{\prime}\right)  \right\vert  &  \leq\left\vert \mathcal{T}_{v}\left(
\mathcal{T}\left(  X^{\left(  2\right)  },\omega\right)  ,\omega^{\prime
}\right)  -\mathcal{\breve{T}}_{v}\left(  \mathcal{T}\left(  X^{\left(
2\right)  },\omega\right)  ,\omega^{\prime}\right)  \right\vert \label{b0}\\
&  +\left\vert \mathcal{\breve{T}}_{v}\left(  \mathcal{T}\left(  X^{\left(
2\right)  },\omega\right)  ,\omega^{\prime}\right)  -\mathcal{\breve{T}}%
_{v}\left(  \mathcal{\breve{T}}\left(  X^{\left(  2\right)  },\omega\right)
,\omega^{\prime}\right)  \right\vert \ ,\nonumber
\end{align}
where, since
\begin{gather}
\left\vert \varrho\left(  \mathcal{T}_{v}\left(  X^{\left(  2\right)  }%
,\omega\right)  ,\mathcal{T}_{u}\left(  X^{\left(  2\right)  },\omega\right)
\right)  -\varrho\left(  \mathcal{\breve{T}}_{v}\left(  X^{\left(  2\right)
},\omega\right)  ,\mathcal{\breve{T}}_{u}\left(  X^{\left(  2\right)  }%
,\omega\right)  \right)  \right\vert \leq\\
\left\Vert \varrho\right\Vert _{Lip}\left(  \left\vert \mathcal{T}_{v}\left(
X^{\left(  2\right)  },\omega\right)  -\mathcal{\breve{T}}_{v}\left(
X^{\left(  2\right)  },\omega\right)  \right\vert +\left\vert \mathcal{T}%
_{u}\left(  X^{\left(  2\right)  },\omega\right)  -\mathcal{\breve{T}}%
_{u}\left(  X^{\left(  2\right)  },\omega\right)  \right\vert \right)
\ ,\nonumber
\end{gather}
we have%
\begin{gather}
\left\vert \mathcal{T}_{v}\left(  \mathcal{T}\left(  X^{\left(  2\right)
},\omega\right)  ,\omega^{\prime}\right)  -\mathcal{\breve{T}}_{v}\left(
\mathcal{T}\left(  X^{\left(  2\right)  },\omega\right)  ,\omega^{\prime
}\right)  \right\vert =\label{b1}\\
\left\vert \frac{1}{N}\sum_{u=1}^{N}\frac{\omega_{v,u}^{\prime}\varrho\left(
\mathcal{T}_{v}\left(  X^{\left(  2\right)  },\omega\right)  ,\mathcal{T}%
_{u}\left(  X^{\left(  2\right)  },\omega\right)  \right)  \mathcal{T}%
_{u}\left(  X^{\left(  2\right)  },\omega\right)  }{\frac{1}{N}\sum_{u=1}%
^{N}\omega_{v,u}^{\prime}\varrho\left(  \mathcal{T}_{v}\left(  X^{\left(
2\right)  },\omega\right)  ,\mathcal{T}_{u}\left(  X^{\left(  2\right)
},\omega\right)  \right)  }-\right. \nonumber\\
\left.  \frac{1}{N}\sum_{u=1}^{N}\frac{\omega_{v,u}^{\prime}\varrho\left(
\mathcal{T}_{v}\left(  X^{\left(  2\right)  },\omega\right)  ,\mathcal{T}%
_{u}\left(  X^{\left(  2\right)  },\omega\right)  \right)  \mathcal{T}%
_{u}\left(  X^{\left(  2\right)  },\omega\right)  }{\frac{1}{N}\sum_{u=1}%
^{N}p\left(  \left\vert X_{v}^{\left(  2\right)  }-X_{u}^{\left(  2\right)
}\right\vert \right)  \varrho\left(  \mathcal{T}_{v}\left(  X^{\left(
2\right)  },\omega\right)  ,\mathcal{T}_{u}\left(  X^{\left(  2\right)
},\omega\right)  \right)  }\right\vert \leq\nonumber\\
\frac{1}{N}\sum_{u=1}^{N}\frac{\omega_{v,u}^{\prime}\varrho\left(
\mathcal{T}_{v}\left(  X^{\left(  2\right)  },\omega\right)  ,\mathcal{T}%
_{u}\left(  X^{\left(  2\right)  },\omega\right)  \right)  \mathcal{T}%
_{u}\left(  X^{\left(  2\right)  },\omega\right)  }{\frac{1}{N}\sum_{u=1}%
^{N}\omega_{v,u}^{\prime}\varrho\left(  \mathcal{T}_{v}\left(  X^{\left(
2\right)  },\omega\right)  ,\mathcal{T}_{u}\left(  X^{\left(  2\right)
},\omega\right)  \right)  }\times\nonumber\\
\times\frac{\left\vert \frac{1}{N}\sum_{u=1}^{N}\omega_{v,u}^{\prime}%
\varrho\left(  \mathcal{T}_{v}\left(  X^{\left(  2\right)  },\omega\right)
,\mathcal{T}_{u}\left(  X^{\left(  2\right)  },\omega\right)  \right)
-\frac{1}{N}\sum_{u=1}^{N}p\left(  \left\vert X_{v}^{\left(  2\right)  }%
-X_{u}^{\left(  2\right)  }\right\vert \right)  \varrho\left(  \mathcal{T}%
_{v}\left(  X^{\left(  2\right)  },\omega\right)  ,\mathcal{T}_{u}\left(
X^{\left(  2\right)  },\omega\right)  \right)  \right\vert }{\frac{1}{N}%
\sum_{u=1}^{N}p\left(  \left\vert X_{v}^{\left(  2\right)  }-X_{u}^{\left(
2\right)  }\right\vert \right)  \varrho\left(  \mathcal{T}_{v}\left(
X^{\left(  2\right)  },\omega\right)  ,\mathcal{T}_{u}\left(  X^{\left(
2\right)  },\omega\right)  \right)  }\leq\nonumber\\
\left\{  \left\vert \frac{1}{N}\sum_{u=1}^{N}\omega_{v,u}^{\prime}%
\varrho\left(  \mathcal{\breve{T}}_{v}\left(  X^{\left(  2\right)  }%
,\omega\right)  ,\mathcal{\breve{T}}_{u}\left(  X^{\left(  2\right)  }%
,\omega\right)  \right)  -\frac{1}{N}\sum_{u=1}^{N}p\left(  \left\vert
X_{v}^{\left(  2\right)  }-X_{u}^{\left(  2\right)  }\right\vert \right)
\varrho\left(  \mathcal{\breve{T}}_{v}\left(  X^{\left(  2\right)  }%
,\omega\right)  ,\mathcal{\breve{T}}_{u}\left(  X^{\left(  2\right)  }%
,\omega\right)  \right)  \right\vert +\right. \nonumber\\
\left.  +2\left\Vert \varrho\right\Vert _{Lip}\left(  \left\vert
\mathcal{T}_{v}\left(  X^{\left(  2\right)  },\omega\right)  -\mathcal{\breve
{T}}_{v}\left(  X^{\left(  2\right)  },\omega\right)  \right\vert +\left\vert
\mathcal{T}_{u}\left(  X^{\left(  2\right)  },\omega\right)  -\mathcal{\breve
{T}}_{u}\left(  X^{\left(  2\right)  },\omega\right)  \right\vert \right)
\right\}  \times\nonumber\\
\times\left\{  \frac{1}{N}\sum_{u=1}^{N}p\left(  \left\vert X_{v}^{\left(
2\right)  }-X_{u}^{\left(  2\right)  }\right\vert \right)  \left[
\varrho\left(  \mathcal{\breve{T}}_{v}\left(  X^{\left(  2\right)  }%
,\omega\right)  ,\mathcal{\breve{T}}_{u}\left(  X^{\left(  2\right)  }%
,\omega\right)  \right)  -\right.  \right. \nonumber\\
\left.  \left.  \left\Vert \varrho\right\Vert _{Lip}\left(  \left\vert
\mathcal{T}_{v}\left(  X^{\left(  2\right)  },\omega\right)  -\mathcal{\breve
{T}}_{v}\left(  X^{\left(  2\right)  },\omega\right)  \right\vert +\left\vert
\mathcal{T}_{u}\left(  X^{\left(  2\right)  },\omega\right)  -\mathcal{\breve
{T}}_{u}\left(  X^{\left(  2\right)  },\omega\right)  \right\vert \right)
\right]  \right\}  ^{-1}\nonumber
\end{gather}
and
\begin{gather}
\left\vert \mathcal{\breve{T}}_{v}\left(  \mathcal{T}\left(  X^{\left(
2\right)  },\omega\right)  ,\omega^{\prime}\right)  -\mathcal{\breve{T}}%
_{v}\left(  \mathcal{\breve{T}}\left(  X^{\left(  2\right)  },\omega\right)
,\omega^{\prime}\right)  \right\vert =\label{b2}\\
\left\vert \frac{1}{N}\sum_{u=1}^{N}\frac{\omega_{v,u}^{\prime}\varrho\left(
\mathcal{T}_{v}\left(  X^{\left(  2\right)  },\omega\right)  ,\mathcal{T}%
_{u}\left(  X^{\left(  2\right)  },\omega\right)  \right)  \mathcal{T}%
_{u}\left(  X^{\left(  2\right)  },\omega\right)  }{\frac{1}{N}\sum_{u=1}%
^{N}p\left(  \left\vert X_{v}^{\left(  2\right)  }-X_{u}^{\left(  2\right)
}\right\vert \right)  \varrho\left(  \mathcal{T}_{v}\left(  X^{\left(
2\right)  },\omega\right)  ,\mathcal{T}_{u}\left(  X^{\left(  2\right)
},\omega\right)  \right)  }-\right. \nonumber\\
\left.  \frac{1}{N}\sum_{u=1}^{N}\frac{\omega_{v,u}^{\prime}\varrho\left(
\mathcal{\breve{T}}_{v}\left(  X^{\left(  2\right)  },\omega\right)
,\mathcal{\breve{T}}_{u}\left(  X^{\left(  2\right)  },\omega\right)  \right)
\mathcal{\breve{T}}_{u}\left(  X^{\left(  2\right)  },\omega\right)  }%
{\frac{1}{N}\sum_{u=1}^{N}p\left(  \left\vert X_{v}^{\left(  2\right)  }%
-X_{u}^{\left(  2\right)  }\right\vert \right)  \varrho\left(  \mathcal{\breve
{T}}_{v}\left(  X^{\left(  2\right)  },\omega\right)  ,\mathcal{\breve{T}}%
_{u}\left(  X^{\left(  2\right)  },\omega\right)  \right)  }\right\vert
\leq\nonumber\\
\frac{1}{N}\sum_{u=1}^{N}\omega_{v,u}^{\prime}\left\{  \varrho\left(
\mathcal{T}_{v}\left(  X^{\left(  2\right)  },\omega\right)  ,\mathcal{T}%
_{u}\left(  X^{\left(  2\right)  },\omega\right)  \right)  \mathcal{T}%
_{u}\left(  X^{\left(  2\right)  },\omega\right)  \times\right. \nonumber\\
\times\frac{\frac{1}{N}\sum_{u=1}^{N}\omega_{v,u}^{\prime}p\left(  \left\vert
X_{v}^{\left(  2\right)  }-X_{u}^{\left(  2\right)  }\right\vert \right)
\left\vert \varrho\left(  \mathcal{\breve{T}}_{v}\left(  X^{\left(  2\right)
},\omega\right)  ,\mathcal{\breve{T}}_{u}\left(  X^{\left(  2\right)  }%
,\omega\right)  \right)  -\varrho\left(  \mathcal{T}_{v}\left(  X^{\left(
2\right)  },\omega\right)  ,\mathcal{T}_{u}\left(  X^{\left(  2\right)
},\omega\right)  \right)  \right\vert }{\frac{1}{N}\sum_{u=1}^{N}p\left(
\left\vert X_{v}^{\left(  2\right)  }-X_{u}^{\left(  2\right)  }\right\vert
\right)  \varrho\left(  \mathcal{T}_{v}\left(  X^{\left(  2\right)  }%
,\omega\right)  ,\mathcal{T}_{u}\left(  X^{\left(  2\right)  },\omega\right)
\right)  \frac{1}{N}\sum_{u=1}^{N}p\left(  \left\vert X_{v}^{\left(  2\right)
}-X_{u}^{\left(  2\right)  }\right\vert \right)  \varrho\left(
\mathcal{\breve{T}}_{v}\left(  X^{\left(  2\right)  },\omega\right)
,\mathcal{\breve{T}}_{u}\left(  X^{\left(  2\right)  },\omega\right)  \right)
}+\nonumber\\
+\frac{1}{\frac{1}{N}\sum_{u=1}^{N}p\left(  \left\vert X_{v}^{\left(
2\right)  }-X_{u}^{\left(  2\right)  }\right\vert \right)  \varrho\left(
\mathcal{\breve{T}}_{v}\left(  X^{\left(  2\right)  },\omega\right)
,\mathcal{\breve{T}}_{u}\left(  X^{\left(  2\right)  },\omega\right)  \right)
}\times\nonumber\\
\times\left[  \left\vert \mathcal{T}_{u}\left(  X^{\left(  2\right)  }%
,\omega\right)  -\mathcal{\breve{T}}_{u}\left(  X^{\left(  2\right)  }%
,\omega\right)  \right\vert +\mathcal{T}_{u}\left(  X^{\left(  2\right)
},\omega\right)  \times\right. \nonumber\\
\left.  \left.  \times\left\vert \varrho\left(  \mathcal{\breve{T}}_{v}\left(
X^{\left(  2\right)  },\omega\right)  ,\mathcal{\breve{T}}_{u}\left(
X^{\left(  2\right)  },\omega\right)  \right)  -\varrho\left(  \mathcal{T}%
_{v}\left(  X^{\left(  2\right)  },\omega\right)  ,\mathcal{T}_{u}\left(
X^{\left(  2\right)  },\omega\right)  \right)  \right\vert \right]  \right\}
\leq\nonumber\\
\left\Vert \varrho\right\Vert _{Lip}\frac{\frac{1}{N}\sum_{u=1}^{N}%
\omega_{v,u}^{\prime}\varrho\left(  \mathcal{T}_{v}\left(  X^{\left(
2\right)  },\omega\right)  ,\mathcal{T}_{u}\left(  X^{\left(  2\right)
},\omega\right)  \right)  \mathcal{T}_{u}\left(  X^{\left(  2\right)  }%
,\omega\right)  }{\frac{1}{N}\sum_{u=1}^{N}p\left(  \left\vert X_{v}^{\left(
2\right)  }-X_{u}^{\left(  2\right)  }\right\vert \right)  \varrho\left(
\mathcal{T}_{v}\left(  X^{\left(  2\right)  },\omega\right)  ,\mathcal{T}%
_{u}\left(  X^{\left(  2\right)  },\omega\right)  \right)  }\times\nonumber\\
\times\left(  \left\vert \mathcal{T}_{v}\left(  X^{\left(  2\right)  }%
,\omega\right)  -\mathcal{\breve{T}}_{v}\left(  X^{\left(  2\right)  }%
,\omega\right)  \right\vert +\left\vert \mathcal{T}_{u}\left(  X^{\left(
2\right)  },\omega\right)  -\mathcal{\breve{T}}_{u}\left(  X^{\left(
2\right)  },\omega\right)  \right\vert \right)  +\nonumber\\
\frac{\frac{1}{N}\sum_{u=1}^{N}\omega_{v,u}^{\prime}}{\frac{1}{N}\sum
_{u=1}^{N}p\left(  \left\vert X_{v}^{\left(  2\right)  }-X_{u}^{\left(
2\right)  }\right\vert \right)  \varrho\left(  \mathcal{\breve{T}}_{v}\left(
X^{\left(  2\right)  },\omega\right)  ,\mathcal{\breve{T}}_{u}\left(
X^{\left(  2\right)  },\omega\right)  \right)  }\left[  \left\vert
\mathcal{T}_{u}\left(  X^{\left(  2\right)  },\omega\right)  -\mathcal{\breve
{T}}_{u}\left(  X^{\left(  2\right)  },\omega\right)  \right\vert +\right.
\nonumber\\
\left.  +\left\Vert \varrho\right\Vert _{Lip}\left(  \left\vert \mathcal{T}%
_{v}\left(  X^{\left(  2\right)  },\omega\right)  -\mathcal{\breve{T}}%
_{v}\left(  X^{\left(  2\right)  },\omega\right)  \right\vert +\left\vert
\mathcal{T}_{u}\left(  X^{\left(  2\right)  },\omega\right)  -\mathcal{\breve
{T}}_{u}\left(  X^{\left(  2\right)  },\omega\right)  \right\vert \right)
\right]  \leq\nonumber\\
\left\Vert \varrho\right\Vert _{Lip}\left\{  \frac{1}{N}\sum_{u=1}^{N}p\left(
\left\vert X_{v}^{\left(  2\right)  }-X_{u}^{\left(  2\right)  }\right\vert
\right)  \times\right. \nonumber\\
\left.  \times\left[  \varrho\left(  \mathcal{\breve{T}}_{v}\left(  X^{\left(
2\right)  },\omega\right)  ,\mathcal{\breve{T}}_{u}\left(  X^{\left(
2\right)  },\omega\right)  \right)  -\left\Vert \varrho\right\Vert
_{Lip}\left(  \left\vert \mathcal{T}_{v}\left(  X^{\left(  2\right)  }%
,\omega\right)  -\mathcal{\breve{T}}_{v}\left(  X^{\left(  2\right)  }%
,\omega\right)  \right\vert +\left\vert \mathcal{T}_{u}\left(  X^{\left(
2\right)  },\omega\right)  -\mathcal{\breve{T}}_{u}\left(  X^{\left(
2\right)  },\omega\right)  \right\vert \right)  \right]  \right\}  ^{-1}%
\times\nonumber\\
\times\left(  \left\vert \mathcal{T}_{v}\left(  X^{\left(  2\right)  }%
,\omega\right)  -\mathcal{\breve{T}}_{v}\left(  X^{\left(  2\right)  }%
,\omega\right)  \right\vert +\frac{1}{N}\sum_{u=1}^{N}\left\vert
\mathcal{T}_{u}\left(  X^{\left(  2\right)  },\omega\right)  -\mathcal{\breve
{T}}_{u}\left(  X^{\left(  2\right)  },\omega\right)  \right\vert \right)
+\nonumber\\
\frac{\left\vert \mathcal{T}_{v}\left(  X^{\left(  2\right)  },\omega\right)
-\mathcal{\breve{T}}_{v}\left(  X^{\left(  2\right)  },\omega\right)
\right\vert +\left(  1+\left\Vert \varrho\right\Vert _{Lip}\right)  \frac
{1}{N}\sum_{u=1}^{N}\left\vert \mathcal{T}_{u}\left(  X^{\left(  2\right)
},\omega\right)  -\mathcal{\breve{T}}_{u}\left(  X^{\left(  2\right)  }%
,\omega\right)  \right\vert }{\frac{1}{N}\sum_{u=1}^{N}p\left(  \left\vert
X_{v}^{\left(  2\right)  }-X_{u}^{\left(  2\right)  }\right\vert \right)
\varrho\left(  \mathcal{\breve{T}}_{v}\left(  X^{\left(  2\right)  }%
,\omega\right)  ,\mathcal{\breve{T}}_{u}\left(  X^{\left(  2\right)  }%
,\omega\right)  \right)  }\ .\nonumber
\end{gather}
Since, for any $v\in\left\{  1,..N\right\}  ,$%
\begin{gather}
\left\vert \mathcal{T}_{v}\left(  X^{\left(  2\right)  },\omega\right)
-\mathcal{\breve{T}}_{v}\left(  X^{\left(  2\right)  },\omega\right)
\right\vert \leq\frac{1}{N}\sum_{u=1}^{N}\frac{\omega_{v,u}\varrho\left(
X_{v}^{\left(  2\right)  },X_{u}^{\left(  2\right)  }\right)  }{\frac{1}%
{N}\sum_{u=1}^{N}\omega_{v,u}\varrho\left(  X_{v}^{\left(  2\right)  }%
,X_{u}^{\left(  2\right)  }\right)  }X_{u}^{\left(  2\right)  }\times\\
\times\frac{\left\vert \frac{1}{N}\sum_{u=1}^{N}\omega_{v,u}\varrho\left(
X_{v}^{\left(  2\right)  },X_{u}^{\left(  2\right)  }\right)  -\mu
_{N}^{X^{\left(  1\right)  },X^{\left(  2\right)  }}\left[  p\left(
\left\vert X_{v}^{\left(  1\right)  }-\cdot\right\vert \right)  \varrho\left(
X_{v}^{\left(  2\right)  },\cdot\right)  \right]  \right\vert }{\mu
_{N}^{X^{\left(  1\right)  },X^{\left(  2\right)  }}\left[  p\left(
\left\vert X_{v}^{\left(  1\right)  }-\cdot\right\vert \right)  \varrho\left(
X_{v}^{\left(  2\right)  },\cdot\right)  \right]  }\ ,\nonumber
\end{gather}
which on the event $\mathcal{A}_{N}^{\left(  1\right)  ,\varrho}\left(
\varepsilon\right)  $ can be bounded by
\begin{equation}
\left\vert \mathcal{T}_{v}\left(  X^{\left(  2\right)  },\omega\right)
-\mathcal{\breve{T}}_{v}\left(  X^{\left(  2\right)  },\omega\right)
\right\vert \leq\frac{\varepsilon}{\mu_{N}^{X^{\left(  1\right)  },X^{\left(
2\right)  }}\left[  p\left(  \left\vert X_{v}^{\left(  1\right)  }%
-\cdot\right\vert \right)  \varrho\left(  X_{v}^{\left(  2\right)  }%
,\cdot\right)  \right]  }\ ,
\end{equation}
we get that, on this event%
\begin{gather}
\left\vert \varrho\left(  \mathcal{T}_{v}\left(  X^{\left(  2\right)  }%
,\omega\right)  ,\mathcal{T}_{u}\left(  X^{\left(  2\right)  },\omega\right)
\right)  -\varrho\left(  \mathcal{\breve{T}}_{v}\left(  X^{\left(  2\right)
},\omega\right)  ,\mathcal{\breve{T}}_{u}\left(  X^{\left(  2\right)  }%
,\omega\right)  \right)  \right\vert \leq\\
\left\Vert \varrho\right\Vert _{Lip}\left(  \frac{1}{\mu_{N}^{X^{\left(
1\right)  },X^{\left(  2\right)  }}\left[  p\left(  \left\vert X_{v}^{\left(
1\right)  }-\cdot\right\vert \right)  \varrho\left(  X_{v}^{\left(  2\right)
},\cdot\right)  \right]  }+\frac{1}{\mu_{N}^{X^{\left(  1\right)  },X^{\left(
2\right)  }}\left[  p\left(  \left\vert X_{u}^{\left(  1\right)  }%
-\cdot\right\vert \right)  \varrho\left(  X_{u}^{\left(  2\right)  }%
,\cdot\right)  \right]  }\right)  \varepsilon\nonumber
\end{gather}
and
\begin{equation}
\left\vert \varrho\left(  \mathcal{\breve{T}}_{v}\left(  X^{\left(  2\right)
},\omega\right)  ,\mathcal{\breve{T}}_{u}\left(  X^{\left(  2\right)  }%
,\omega\right)  \right)  -\varrho\left(  \theta_{\mu_{N}^{X^{\left(  1\right)
},X^{\left(  2\right)  }}}^{\varrho},\theta_{\mu_{N}^{X^{\left(  1\right)
},X^{\left(  2\right)  }}}^{\varrho}\circ\theta_{\mu_{N}^{X^{\left(  1\right)
},X^{\left(  2\right)  }}}^{\varrho}\right)  \right\vert \leq\left\Vert
\varrho\right\Vert _{Lip}\varepsilon\ .
\end{equation}
Therefore, for any $v=1,..,N,$ on the event $\mathcal{A}_{N}^{\left(
1\right)  ,\varrho}\left(  \varepsilon\right)  \cap\mathcal{A}_{N}^{\left(
2\right)  ,\varrho}\left(  \varepsilon\right)  ,$ there exist two bounded
positive functions $\phi_{v}^{\prime}\left(  \cdot;\varrho,p\right)  $ and
$\phi_{v}^{\prime\prime}\left(  \cdot;\varrho,p\right)  $ on $\mathfrak{P}%
\left(  \left[  0,1\right]  ^{2}\right)  $ such that, by (\ref{b1})
\begin{equation}
\left\vert \mathcal{T}_{v}\left(  \mathcal{T}\left(  X^{\left(  2\right)
},\omega\right)  ,\omega^{\prime}\right)  -\mathcal{\breve{T}}_{v}\left(
\mathcal{T}\left(  X^{\left(  2\right)  },\omega\right)  ,\omega^{\prime
}\right)  \right\vert \leq\phi_{v}^{\prime}\left(  \mu_{N}^{X^{\left(
1\right)  },X^{\left(  2\right)  }};\varrho,p\right)  \varepsilon
\end{equation}
and (\ref{b2}),
\begin{equation}
\left\vert \mathcal{\breve{T}}_{v}\left(  \mathcal{T}\left(  X^{\left(
2\right)  },\omega\right)  ,\omega^{\prime}\right)  -\mathcal{\breve{T}}%
_{v}\left(  \mathcal{\breve{T}}\left(  X^{\left(  2\right)  },\omega\right)
,\omega^{\prime}\right)  \right\vert \leq\phi_{v}^{\prime\prime}\left(
\mu_{N}^{X^{\left(  1\right)  },X^{\left(  2\right)  }};\varrho,p\right)
\varepsilon
\end{equation}
so that, by (\ref{Dre}) and (\ref{b0}), for any $\varepsilon>0,$ there exists
a positive constant $C^{\prime}:=C^{\prime}\left(  \varrho,p,\mu\right)  $
such that $D_{N}^{\varrho,\varepsilon}\left(  \varphi\right)  \leq C^{\prime
}\left\Vert \varphi\right\Vert _{Lip}\varepsilon$ implying that $\lim
_{N\rightarrow\infty}\mathfrak{d}\left(  \mathbf{T}_{\#}\mu_{N}^{X^{\left(
1\right)  },X^{\left(  2\right)  }},\mathbf{\breve{T}}_{\#}\mu_{N}^{X^{\left(
1\right)  },X^{\left(  2\right)  }}\right)  =0.$ Thus, since
\begin{equation}
\mathfrak{d}\left(  \mathbf{T}_{\#}\mu_{N}^{X^{\left(  1\right)  },X^{\left(
2\right)  }},\Theta_{\#}^{\varrho}\mu\right)  \leq\mathfrak{d}\left(
\mathbf{T}_{\#}\mu_{N}^{X^{\left(  1\right)  },X^{\left(  2\right)  }%
},\mathbf{\breve{T}}_{\#}\mu_{N}^{X^{\left(  1\right)  },X^{\left(  2\right)
}}\right)  +\mathfrak{d}\left(  \mathbf{\breve{T}}_{\#}\mu_{N}^{X^{\left(
1\right)  },X^{\left(  2\right)  }},\Theta_{\#}\mu\right)
\end{equation}
and since, by the previous Lemma, $\lim_{N\rightarrow\infty}\mathfrak{d}%
\left(  \mathbf{\breve{T}}_{\#}\mu_{N}^{X^{\left(  1\right)  },X^{\left(
2\right)  }},\Theta_{\#}\mu\right)  =0,$ the thesis follows.
\end{proof}

\subsubsection{Asymptotic limit as $t\rightarrow\infty$ of the
monokinetic-type evolution}

Let us denote by $\left\{  \mathbf{Z}_{t}^{\varrho}\right\}  _{t\geq0}$ with,
for any $t\geq0,\mathbf{Z}_{t}^{\varrho}:=\left(  Z_{t}^{\varrho,\left(
1\right)  },Z_{t}^{\varrho,\left(  2\right)  }\right)  ,$ the non-linear
Markov chain on $\left(  \left[  0,1\right]  ^{2},\mathcal{B}\left(  \left[
0,1\right]  ^{2}\right)  \right)  $ with degenerate kernel such that, for any
bounded measurable $\varphi:\left[  0,1\right]  ^{2}\longrightarrow
\mathbb{R},$ if $\mu_{t}$ is the law of $\mathbf{Z}_{t}^{\varrho},$%
\begin{equation}
\mathbb{E}\left[  \varphi\left(  \mathbf{Z}_{t+1}^{\varrho}\right)
|\mathbf{Z}_{t}^{\varrho}\right]  :=\varphi\left(  \theta_{\mu_{t}}^{\varrho
}\left(  Z_{t}^{\varrho,\left(  1\right)  },Z_{t}^{\varrho,\left(  2\right)
}\right)  ,\theta_{\mu_{t}}^{\varrho}\circ\theta_{\mu_{t}}^{\varrho
}\left(  Z_{t}^{\varrho,\left(  1\right)  },Z_{t}^{\varrho,\left(  2\right)
}\right)  \right)  \ ,
\end{equation}
so that, if $\varrho=\mathbf{1}_{\left[  0,1\right]  },\left\{  \mathbf{Z}%
_{t}^{\varrho}\right\}  _{t\geq0}=\left\{  \mathbf{Z}_{t}\right\}  _{t\geq0}.$

Clearly, by definition, for any $\mu\in\mathfrak{P}\left(  \left[  0,1\right]
^{2}\right)  ,\Theta_{\mu}^{\varrho}$ leaves the constant functions invariant,
in other words $\Theta_{\#}^{\varrho}\mathfrak{P}\left(  \left[  0,1\right]
^{2}\right)  \subseteq\mathfrak{P}\left(  \left[  0,1\right]  ^{2}\right)  .$
Moreover, from (\ref{Theta}), it follows that any degenerate probability
distribution supported on $\left(  x,x\right)  ,x\in\left[  0,1\right]  $ is
stationary for $\left\{  \mathbf{Z}_{t}^{\varrho}\right\}  _{t\geq0}.$

In order to apply the results presented for the finite size system when the
interaction among the agents is defined through a non constant strictly positive function
$\varrho:\left[  0,1\right]  ^{2}\longrightarrow\left[  0,1\right]  $ as
described by (\ref{r=rho}), the definitions (\ref{Pomega})
and (\ref{gamma}) must be modified in such a way that the matrix elements of
$\Xi_{N}\times\Omega_{N}\ni\left(  X,\omega\right)  \longmapsto P\left(
X,\omega\right)  \in BL\left(  \mathbb{R}^{N}\right)  ,$ namely, $\forall
u,v\in\mathbf{V}_{N},P_{v,u}\left(  X,\omega\right)  :=\frac{\varrho\left(
X_{v},X_{u}\right)  \omega_{v,u}}{\sum_{u\in\mathbf{V}_{N}}\varrho\left(
X_{v},X_{u}\right)  \omega_{v,u}},$ replace (\ref{Pomega}), and (\ref{gamma})
is replaced by
\begin{equation}
\Omega\ni\omega\longmapsto\gamma\left(  \omega\right)  :=\min_{u,v\in
\mathbf{V}\ :\ u\neq v}\inf_{X\in\Xi_{N}}\sum_{w,z\in\mathbf{V}}P_{u,w}\left(
X,\omega\right)  P_{v,z}\left(  X,\omega\right)  \wedge P_{u,z}\left(
X,\omega\right)  P_{v,w}\left(  X,\omega\right)  \in\lbrack0,1)\ .
\end{equation}

\begin{proposition}
\label{mfc}For any initial datum $\mu\in\mathfrak{P}\left(  \left[
0,1\right]  ^{2}\right)  ,$ there exists $x\in\left[  0,1\right]  $ such that
$\left\{  \mathbf{Z}_{t}^{\varrho}\right\}  _{t\geq0}$ converges to $\left(
x,x\right)  $ in probability geometrically fast as $t$ tends to infinity.
\end{proposition}

\begin{proof}
The result is obvious if $\mu$ is a Dirac mass at $\left(x,x\right)$ for some $x \in \left[0,1\right].$ 
Hence, given $\mu\in\mathfrak{P}\left(  \left[  0,1\right]  ^{2}\right)  $  with positive variance, if
$\mathbf{Z}_{0}^{\varrho}$ if the random vector with law $\mu,$ for any
$t\geq1,$ let us set $\left(  \Theta_{\#}^{\varrho}\right)  ^{t}\mu
:=\Theta_{\#}^{\varrho}\mu_{t-1}.$ Moreover, for any $N\geq1,$ given an
initial datum $\left(  X^{\left(  1\right)  },X^{\left(  2\right)  }\right)
\in\left(  \left[  0,1\right]  ^{2}\right)  ^{N},$ for any $t\geq1,$ we set
$\left(  \mathbf{T}_{\#}\right)  ^{t}\mu_{N}^{X^{\left(  1\right)
},X^{\left(  2\right)  }}:=\mathbf{T}_{\#}\mu_{N}^{X^{\left(  1\right)
}\left(  t-1\right)  ,X^{\left(  2\right)  }\left(  t-1\right)  }$ where the
random vector $\left(  X^{\left(  1\right)  }\left(  t-1\right)  ,X^{\left(
2\right)  }\left(  t-1\right)  \right)  \in\left(  \left[  0,1\right]
^{2}\right)  ^{N}$ is defined in (\ref{x_t}) and in Remark \ref{Rem1}. Thus,
if $\left\{  \mu_{N}^{X^{\left(  1\right)  },X^{\left(  2\right)  }}\right\}
_{N\geq1}\subset\mathfrak{P}\left(  \left[  0,1\right]  ^{2}\right)  $ weakly
converges to $\mu,$ by Theorem \ref{mf4} $\left\{  \mathbf{T}_{\#}\mu
_{N}^{X^{\left(  1\right)  },X^{\left(  2\right)  }}\right\}  _{N\geq1}%
\subset\mathfrak{P}\left(  \left[  0,1\right]  ^{2}\right)  $ weakly converges
to $\Theta_{\#}^{\varrho}\mu$ which, again by Theorem \ref{mf4}, implies
\begin{equation}
\lim_{N\rightarrow\infty}\mathfrak{d}\left(  \left(  \mathbf{T}_{\#}\right)
^{2}\mu_{N}^{X^{\left(  1\right)  },X^{\left(  2\right)  }},\left(
\Theta_{\#}^{\varrho}\right)  ^{2}\mu\right)  =\lim_{N\rightarrow\infty
}\mathfrak{d}\left(  \mathbf{T}_{\#}\left(  \mathbf{T}_{\#}\mu_{N}^{X^{\left(
1\right)  },X^{\left(  2\right)  }}\right)  ,\Theta_{\#}^{\varrho}\left(
\Theta_{\#}^{\varrho}\mu\right)  \right)  =0
\end{equation}
and therefore by induction that for any $t\geq1,\lim_{N\rightarrow\infty
}\mathfrak{d}\left(  \left(  \mathbf{T}_{\#}\right)  ^{t}\mu_{N}^{X^{\left(
1\right)  },X^{\left(  2\right)  }},\left(  \Theta_{\#}^{\varrho}\right)
^{t}\mu\right)  =0.$

For any $\varepsilon>0,$ there exists $N_{\varepsilon}\geq1$ such that, for
any $N>N_{\varepsilon},\mathfrak{d}\left(  \mu_{N}^{X^{\left(  1\right)
},X^{\left(  2\right)  }},\mu\right)  <\varepsilon$ and the support of
$\mu_{N}^{X^{\left(  1\right)  },X^{\left(  2\right)  }}$ satisfies the
hypothesis of Corollary \ref{gap} so that the associated Markov chain
$\left\{  \mathfrak{y}_{t}^{N}\right\}  _{t\geq0}$ defined in Remark
\ref{Rem1} converges as $t\rightarrow\infty$ to some $\left(  X^{\infty
},X^{\infty}\right)  \in\mathcal{I}_{N}\times\mathcal{I}_{N}.$ This implies
that there exists $x\in\left[  0,1\right]  $ such that $\mu_{N}^{X^{\infty
},X^{\infty}}=\delta_{\left(  x,x\right)  }.$ But
\begin{equation}
\mathfrak{d}\left(  \left(  \Theta_{\#}^{\varrho}\right)  ^{t}\mu
,\delta_{\left(  x,x\right)  }\right)  =\mathfrak{d}\left(  \left(
\Theta_{\#}^{\varrho}\right)  ^{t}\mu,\left(  \mathbf{T}_{\#}\right)  ^{t}%
\mu_{N}^{X^{\left(  1\right)  },X^{\left(  2\right)  }}\right)  +\mathfrak{d}%
\left(  \left(  \mathbf{T}_{\#}\right)  ^{t}\mu_{N}^{X^{\left(  1\right)
},X^{\left(  2\right)  }},\mu_{N}^{X^{\infty},X^{\infty}}\right)  \ .
\end{equation}
Taking first the limit as $N\uparrow\infty$ and then the limit as
$t\uparrow\infty$ we get that $\left\{  \left(  \Theta_{\#}^{\varrho}\right)
^{t}\mu\right\}  _{t\geq0}$ weakly converges to $\delta_{\left(  x,x\right)
},$ i.e. that $\left\{  \mathbf{Z}_{t}^{\varrho}\right\}  _{t\geq0}$ converges
in distribution to the degenerate r.v. $x$ and therefore in probability.

Since in the proof of Theorem \ref{main} we have shown that the support of
$\left(  \mathbf{T}_{\#}\right)  ^{t}\mu_{N}^{X^{\left(  1\right)
},X^{\left(  2\right)  }}=\mu_{N}^{X^{\left(  1\right)  }\left(  t\right)
,X^{\left(  2\right)  }\left(  t\right)  }$ contracts at geometric rate, this
proves that also the support of $\left(  \Theta_{\#}^{\varrho}\right)  ^{t}%
\mu$ contracts at geometric rate because the support of a measure is stable
under weak limits.
\end{proof}

\end{document}